\documentclass[11pt]{article}
\usepackage[utf8]{inputenc}
\usepackage[margin=1in]{geometry}
\usepackage[T1]{fontenc} 
\usepackage{graphicx}
\usepackage{mathpazo}
\usepackage{hyperref}
\hypersetup{
  colorlinks = true,  
  urlcolor = {blueGrotto},
  linkcolor = {royalBlue},
  citecolor = {navyBlue}
}

\usepackage{booktabs} 
\usepackage{multirow} 
\usepackage{multicol}

\usepackage[suppress]{color-edits} 
\addauthor[Anay]{am}{gold}
\addauthor[Jon]{jk}{limeGreen} 
\addauthor[Amin]{as}{purple} 
\addauthor[Grigoris]{gv}{blue} 

\usepackage{xcolor}
\definecolor{niceRed}{RGB}{190,38,38}
\definecolor{blueGrotto}{HTML}{059DC0}
\definecolor{royalBlue}{HTML}{057DCD}
\definecolor{navyBlue}{HTML}{0B579C}
\definecolor{limeGreen}{HTML}{81B622}
\definecolor{nicePurple}{HTML}{9c27b0}
\definecolor{lightRoyalBlue}{HTML}{def2ff} 
\definecolor{gold}{HTML}{ffa300}

\usepackage[utf8]{inputenc} 
\usepackage[T1]{fontenc}    
\usepackage{hyperref}       
\usepackage{url}            
\usepackage{booktabs}       
\usepackage{amsfonts}       
\usepackage{nicefrac}       
\usepackage{microtype}      
\usepackage{xcolor}         
 
\usepackage[
  backref=true,
  backend=biber,
  natbib=true,
  style=alphabetic,
  sorting=alphabeticlabel,
  sortcites=true,
  minbibnames=3,
  maxbibnames=999,
  mincitenames=1,
  maxcitenames=3,
  minalphanames=3,
  maxalphanames=4,
]{biblatex}

\DeclareSortingTemplate{alphabeticlabel}{
  \sort[final]{%
    \field{labelalpha} 
  }
  \sort{%
    \field{year}   
  }
  \sort{%
    \field{title}  
  }
}

\AtBeginRefsection{\GenRefcontextData{sorting=ynt}}
\AtEveryCite{\localrefcontext[sorting=ynt]}
\usepackage{microtype} 
\usepackage{subfigure} 
\usepackage{booktabs} %
 
\usepackage{hyperref}
 
\usepackage{algpseudocode} 
\usepackage[linesnumbered,ruled,vlined]{algorithm2e}
\SetKwInOut{Oracles}{Query Access}

\usepackage{amsmath}
\usepackage{amssymb}
\usepackage{mathtools}
\usepackage{amsthm}
\usepackage{bbm} 
\usepackage[capitalise,noabbrev,nameinlink,sort]{cleveref} 
\usepackage{nicefrac}
\usepackage{thm-restate}

\usepackage[textsize=small]{todonotes}
\usepackage{mathrsfs}
\usepackage{mathtools} 
\usepackage{dsfont}
\usepackage[framemethod=TikZ]{mdframed}
\mdfsetup{%
backgroundcolor=white, 
linewidth=0.5,
skipabove=1pt,
skipbelow=-2pt,
innertopmargin=2pt,
innerbottommargin=2pt,
roundcorner=4pt}

\usepackage{pifont} 
\usepackage{color,multicol} 
\usepackage[inline]{enumitem} 
\usepackage{multirow}

\usepackage{centernot}

\usepackage[many]{tcolorbox} 

\usepackage[normalem]{ulem}
\usepackage{comment}

\usepackage[T1]{fontenc} %
 
\usepackage{xcolor}                 %
\definecolor{bubblegreen}{RGB}{190,38,38}
\definecolor{bubblegray}{RGB}{241,240,240}
 
\usepackage{mathrsfs,dsfont,bbm}    %
\usepackage{multicol,multirow}      %
\usepackage[inline]{enumitem}       %
\usepackage[many]{tcolorbox}        %
\usepackage[normalem]{ulem}         %

\theoremstyle{plain} 
\newtheorem{theorem}{Theorem}[section]

\newtheorem{proposition}[theorem]{Proposition}
\newtheorem{lemma}[theorem]{Lemma}
\newtheorem{claim}[theorem]{Claim}
\newtheorem{fact}[theorem]{Fact}

\newtheorem{inftheorem}[theorem]{Informal Theorem}

\newtheorem{definition}{Definition}
\newtheorem{infdefinition}{Informal Definition}
\newtheorem*{definition*}{Definition}

\theoremstyle{definition} 
\newtheorem{example}[theorem]{Example}
\newtheorem{remark}[theorem]{Remark}

\theoremstyle{remark}
\newtheorem{observation}[theorem]{Observation}

\crefname{section}{Section}{Sections}
\crefname{theorem}{Theorem}{Theorems}
\crefname{assumption}{Assumption}{Assumptions}
\crefname{lemma}{Lemma}{Lemmas}
\crefname{definition}{Definition}{Definitions}
\crefname{conjecture}{Conjecture}{Conjectures}
\crefname{corollary}{Corollary}{Corollaries}
\crefname{construction}{Construction}{Constructions}
\crefname{conjecture}{Conjecture}{Conjectures}
\crefname{claim}{Claim}{Claims}
\crefname{observation}{Observation}{Observations}
\crefname{proposition}{Proposition}{Propositions}
\crefname{fact}{Fact}{Facts}
\crefname{question}{Question}{Questions}
\crefname{problem}{Problem}{Problems}
\crefname{remark}{Remark}{Remarks}
\crefname{example}{Example}{Examples}
\crefname{equation}{Equation}{Equations}
\crefname{appendix}{Appendix}{Appendices}
\crefname{algorithm}{Algorithm}{Algorithms}
\crefname{model}{Model}{Models}
\crefname{figure}{Figure}{Figures}

\AfterEndEnvironment{definition}{\noindent\ignorespaces}
\AfterEndEnvironment{infdefinition}{\noindent\ignorespaces}
\AfterEndEnvironment{example}{\noindent\ignorespaces}
\AfterEndEnvironment{assumption}{\noindent\ignorespaces}
\AfterEndEnvironment{lemma}{\noindent\ignorespaces}
\AfterEndEnvironment{theorem}{\noindent\ignorespaces}
\AfterEndEnvironment{proposition}{\noindent\ignorespaces}
\AfterEndEnvironment{fact}{\noindent\ignorespaces}
\AfterEndEnvironment{question}{\noindent\ignorespaces}
\AfterEndEnvironment{corollary}{\noindent\ignorespaces}
\AfterEndEnvironment{model}{\noindent\ignorespaces}
\AfterEndEnvironment{remark}{\noindent\ignorespaces}
\AfterEndEnvironment{proof}{\noindent\ignorespaces}
\AfterEndEnvironment{fact}{\noindent\ignorespaces}
\AfterEndEnvironment{minftheorem}{\noindent\ignorespaces}
\AfterEndEnvironment{inftheorem}{\noindent\ignorespaces}
\AfterEndEnvironment{maintheorem}{\noindent\ignorespaces}
\AfterEndEnvironment{restatable}{\noindent\ignorespaces}
\AfterEndEnvironment{observation}{\noindent\ignorespaces}

\newcommand{\eat}[1]{}

\makeatletter
\newcommand{\customlabel}[2]{%
\protected@write \@auxout {}{\string \newlabel {#1}{{#2}{\thepage}{#2}{#1}{}} }%
\hypertarget{#1}{}
}
\makeatother

\newcommand{\quadtext}[1]{\quad\text{#1}\quad}
\newcommand{\qquadtext}[1]{\qquad\text{#1}\qquad}
 
\newcommand{\quadand}{\quadtext{and}}
\newcommand{\qquadand}{\qquadtext{and}}

\newcommand{\qquadwhere}{\qquadtext{where}}

\def\abs#1{\left| #1 \right|}

\newcommand{\sinbrace}[1]{\{#1\}}

\newcommand{\inbrace}[1]{\left\{#1\right\}}

\newcommand{\inparen}[1]{\left(#1\right)}
\newcommand{\insquare}[1]{\left[#1\right]}

\newcommand{\floor}[1]{\left\lfloor#1\right\rfloor}

\newcommand{\N}{\mathbb{N}}

\newcommand{\nfrac}[2]{\nicefrac{#1}{#2}}
\newcommand{\sfrac}[2]{{#1/#2}}

\renewcommand{\epsilon}{\varepsilon} 

\makeatletter
\def\mov@rlay#1#2{\leavevmode\vtop{%
  \baselineskip\z@skip
  \lineskiplimit-\maxdimen
  \ialign{\hfil$\m@th#1##$\hfil\cr#2\crcr}}}

\newcommand{\charfusion}[3][\mathord]{%
  #1{%
    \ifx#1\mathop\vphantom{#2}\fi
    \mathpalette\mov@rlay{#2\cr#3}%
  }%
  \ifx#1\mathop\expandafter\displaylimits\fi
}
\makeatother

\makeatletter
\newcommand*{\tran}{{\mathpalette\@tran{}}}
\newcommand*{\@tran}[2]{\raisebox{\depth}{$\m@th#1\intercal$}}
\makeatother

\renewcommand{\bar}{\overline}

\def\<{\langle}
\def\>{\rangle}

\DeclareMathAlphabet{\mathpzc}{OT1}{pzc}{m}{it}

\newcommand{\customcal}[1]{\euscr{#1}}
\newcommand{\cA}{\customcal{A}}

\newcommand{\cC}{\customcal{C}}

\newcommand{\cF}{\customcal{F}}

\newcommand{\cI}{\customcal{I}}

\newcommand{\cL}{\customcal{L}}
\newcommand{\cM}{\customcal{M}}

\newcommand{\cP}{\customcal{P}}

\newcommand{\cX}{\customcal{X}}

\DeclareMathAlphabet{\mathdutchcal}{U}{dutchcal}{m}{n}
\SetMathAlphabet{\mathdutchcal}{bold}{U}{dutchcal}{b}{n}
\DeclareMathAlphabet{\mathdutchbcal}{U}{dutchcal}{b}{n}
 
\DeclareMathAlphabet\urwscr{U}{urwchancal}{b}{n}%
\DeclareMathAlphabet\rsfscr{U}{rsfso}{m}{n}
\DeclareMathAlphabet\euscr{U}{eus}{m}{n}
\DeclareFontEncoding{LS2}{}{}
\DeclareFontSubstitution{LS2}{stix}{m}{n}
\DeclareMathAlphabet\stixcal{LS2}{stixcal}{m} {n}

\renewcommand{\paragraph}[1]{\bigskip  \noindent\textbf{#1}~~}

\newcommand{\ie}{\textit{i.e.}}

\usepackage{tikz}
\usetikzlibrary{patterns}

\newcommand{\algo}[1]{\mathpzc{#1}}
\newcommand{\generator}{\algo{G}}

\usepackage{array}
\newcolumntype{L}[1]{>{\raggedright\let\newline\\\arraybackslash\hspace{0pt}}m{#1}}
\newcolumntype{C}[1]{>{\centering\let\newline\\\arraybackslash\hspace{0pt}}m{#1}}
\newcolumntype{R}[1]{>{\raggedleft\let\newline\\\arraybackslash\hspace{0pt}}m{#1}}

\usepackage{soul}

\newmdenv[
    backgroundcolor=lightgray!10, %
    roundcorner=5pt,            %
    linecolor=black,             %
    linewidth=1pt,               %
    innertopmargin=5pt,         %
    innerbottommargin=0pt,      %
    innerleftmargin=10pt,        %
    innerrightmargin=10pt,       %
    skipabove=5pt,              %
    skipbelow=0pt               %
]{curvybox}

\makeatletter
\@ifundefined{Require}{}{}
\@ifundefined{Ensure}{}{}
\@ifundefined{State}{}{}
\@ifundefined{Return}{}{}
\@ifundefined{For}{}{}
\@ifundefined{EndFor}{}{}
\@ifundefined{If}{}{}
\@ifundefined{Else}{}{}
\@ifundefined{EndIf}{}{}
\@ifundefined{While}{}{}
\@ifundefined{EndWhile}{}{}
\@ifundefined{Comment}{}{}
\makeatother

\newcommand{\learner}{\algo{A}}

\title{On Language Generation in the Limit with Bounded Memory}

\date{}
\author{
        \begin{tabular}{C{7.5cm}C{7.5cm}}
        {\bf Jon Kleinberg}
            & {\bf Anay Mehrotra}\\
        {Cornell University} 
            & {Stanford University}\\
        \mbox{{\href{mailto:steve.hanneke@gmail.com }{kleinberg@cornell.edu}}} 
            &   \mbox{{\href{mailto:anaymehrotra1@gmail.com}{anaymehrotra1@gmail.com}}}\\[4mm]
        {\bf Amin Saberi}
            & {\bf Grigoris Velegkas}\\
                {Stanford University} 
                    & {Google Research}\\
        \mbox{{\href{mailto:amin.saberi@gmail.com}{amin.saberi@gmail.com}}}
            & 
                \mbox{{\href{mailto:gvelegkas@google.com}{gvelegkas@google.com}}} 
        \end{tabular}
}

\begin{document}
\maketitle
 
\thispagestyle{empty}

\begin{abstract}
    We study language generation in the limit under bounded memory. 
    In this task, a learner observes examples from an unknown target language one at a time and must eventually output only new valid examples from the target.

    {Existing work assumes the learner has access to the entire example history which is a strong assumption, since algorithms operating under realistic resource constraints can retain only limited information about the past.} {A large body of work in learning theory shows that memory constraints can dramatically change what is learnable; we extend this line of work to language generation in the limit.}

    First, we first study generators that remember no past examples.
    {We show that under a mild restriction on the enumeration, every} countable collection of infinite languages remains generable even with no memory.
    {Without this restriction, we give an exact characterization of when memoryless generation is possible.} {We then turn to quantitative coverage guarantees for finite collections, characterizing the optimal minimax density achievable by memoryless generators --- the best density one can guarantee against any collection of a given size.} This minimax bound is combinatorial and {relies on Sperner's theorem together with symmetric chain decompositions}.
    We further show that giving the learner a sliding window of the last $W$ examples does not improve this worst-case density, whereas allowing it to store $b$ adaptively chosen past examples strictly improves the achievable density for every $b \geq 1$.
    
    Finally, we revisit identification in the limit, the classical task where the learner must converge to a single correct hypothesis for the target language. 
    We focus on its well-studied incremental variant, where the learner remembers only its previous guess. 
    Here, although exact identification already fails on a collection of just three languages, we show that a mild relaxation, which only requires convergence to an ``approximate'' version of the target, is achievable for every finite collection.
    {Together, these results show that bounded memory affects the three tasks differently: generation remains achievable for every countable collection, while density and identification are confined to finite collections, with guarantees that weaken as the collection grows.}
 
\end{abstract}

\newpage
\thispagestyle{empty}
\tableofcontents
\newpage 

\pagenumbering{arabic} 

\vspace{-3mm}
\section{Introduction}
\label{sec:intro}
\vspace{-2.5mm}
%
Modeling how languages are learned has long fascinated theoretical computer scientists, and one of the central formal models in this direction is \emph{language identification in the limit} \citep{gold1967language,angluin1979finding,angluin1980inductive}.
In this task, given a positive enumeration of an unknown target language, the learner repeatedly outputs a hypothesis for the language itself and succeeds if these hypotheses eventually stabilize on the correct target language. 
{This requirement turns out to be quite stringent: Angluin's \citep{angluin1980inductive} results show that identification is achievable only for very limited language families, ruling out even simple infinite collections such as the regular languages.}

This model of language identification was itself motivated in part by the question of how humans acquire language from positive examples. Building on this perspective, \citet{kleinberg2024language} {showed that} if one weakens the learner's goal from identifying the target language to \textit{generating} fresh correct examples from it (which is closer to the real-world requirements of the task), the problem becomes {significantly} {more tractable.}

They termed this weaker requirement 
\emph{language generation in the limit.}
More concretely, fix a countable collection of infinite languages $\cL=\inbrace{L_1,L_2,\ldots}$ over a countable domain $\cX$. An adversary chooses a target language $K\in\cL$ and reveals an enumeration $x_1,x_2,\ldots$ of $K$; after round $t$, the learner has seen the sample $S_t=\inbrace{x_1,\ldots,x_t}$.
It then outputs a new string $y_t$ and its goal is that after some finite time $t^\star$ every output $y_t$ lies in $K\setminus S_t$. \citet{kleinberg2024language} showed that this weaker task is possible for \emph{every} countable collection of languages.
The contrast between these two results is {stark}. 
In both settings, the learner receives the same information: a stream of positive
examples from the target language.  
The only difference is what it must produce. 
Yet, this difference has dramatic consequences:
Identification is possible only for highly
restricted families of {countable} languages, while generation is possible for
all of them.

{This stark contrast raises a natural question: how much more can we demand of a generator before its task becomes as hard as identification? A line of subsequent work has explored this by introducing intermediate notions in which the learner must cover ``dense'' subsets of the target language \cite{kleinberg2025density,kalavasis2026characterizations,charikar2025facets,kleinberg2025partial}. These notions form a hierarchy of tasks between generation and identification, graded by how much of the target language the learner must cover.}

{All of these results}, however, share a common assumption: the learner has access to the entire interaction {history---at each round, its output depends on the full sample $S_t$, and existing constructions rely heavily on this.
This is a strong assumption on two fronts.} {Practically, it does not reflect how humans or modern machine learning systems generate language, since neither retains its full ``training data.''} {Theoretically, it obscures a basic question: how much information about the past is actually necessary for successful generation? Both concerns naturally motivate studying language generation under bounded memory.}

\enlargethispage{\baselineskip}

{Bounded-memory restrictions have a long history in learning theory and theoretical computer science.
In learning theory, for instance, they appear in active learning \cite{hopkins2021bounded}, in estimation \cite{steinhardt2016memory,raz2016fast,raz2017time,sharan2019memory}, and in online learning \cite{srinivas2022memory,peng2023near}.
Most directly relevant to our setting, bounded memory has even been studied for language identification in the limit, through certain iterative and incremental identification models which we discuss in more detail later \cite{lange1996incremental,case1999incremental}.}
A recurring lesson from these works is that bounds on memory can qualitatively change what is learnable.
This naturally raises the following question for language generation:
\vspace{-1mm}
\begin{center}
    \textit{Which collections can be generated in the limit and what density\\[-0.5mm] guarantees are achievable when the learner has bounded memory?}
\end{center}

\subsection{Model and Results}
In this section, we state informal statements of our main results.
\subsubsection{Generation, Density, and Identification}
\label{sec:intro:notions}
{As discussed above, language generation in the limit (or simply \textit{generation}) is the easiest task in this hierarchy, and language identification in the limit (or simply \textit{identification}) is the hardest.}

{Formally, let} $\cL=\inbrace{L_1,L_2,\ldots}$ be a countable collection of infinite languages over a countable domain $\cX$, where every $L_i \subseteq \cX$. An adversary chooses a target language $K=L_z\in\cL$ and reveals an \emph{enumeration} $x_1,x_2,\ldots$ of $K$, meaning that every element of $K$ appears at least once, though repetitions are allowed. After round $t$, the learner has seen the sample $S_t=\inbrace{x_1,\ldots,x_t}$.
Then, generation is defined as follows.
\begin{infdefinition}[Generators; see \cref{sec:model-prelim}]
    A generator can be of two types: {it is \ul{element-based} if its output at time $t$ is an element $y_t\in \cX$, and it is \ul{set-based} if its output at time $t$ is an infinite set $G_t\subseteq \cX$.}
\end{infdefinition}
\vspace{-7mm}
\begin{definition}[Generation in the limit]
    {The generator $\generator$ is said to \emph{generate} $K$ in the limit if, for every enumeration of $K$, there is a time $t^\ast$ such that every output after time $t^\ast$ is valid for $K$. In the element-based case, validity means $y_t\in K\setminus S_t$; in the set-based case, it means that the output is an infinite set $G_t\subseteq K$. The generator $\generator$ is said to generate a collection $\cL = \inbrace{L_1,L_2,\ldots}$ if it generates every $K \in \cL$.}
\end{definition}
{In other words, after some finite time, an element-based generator must always produce an unseen element of $K$, while a set-based generator must always produce an infinite subset of $K$.}
{As mentioned above, \citet{kleinberg2024language} constructed a generator that can generate} from any countable collection of languages $\cL$ in the limit.
This generator, however, suffers from ``mode collapse:'' it {may keep generating from smaller and smaller subsets of $K$, covering only a vanishing fraction of the language}.
{In other words, generation in the limit guarantees validity, but not coverage of $K$.}

This motivated several works to study the next task in the hierarchy: \emph{generation with ``breadth'' or density}.
{Dense generation strengthens generation by requiring the learner's outputs to cover a non-trivial fraction of the target language.}
To speak about density, it is convenient to think of the learner at round $t$ as outputting a set $G_t$.
One may then ask not just whether $G_t\subseteq K$, but whether it captures a non-trivial fraction of the target language.
Several recent works have proposed quantitative notions of density for measuring how much of the target language the generator covers \cite{kalavasis2025limits,charikar2025facets,kalavasis2026characterizations,kleinberg2025density}.
{Many of these notions are strong enough to become equivalent to identification or close variants of it. By contrast, the density notions introduced by \citet{kleinberg2025density} and subsequently studied by \citet{kleinberg2025partial,mehrotra2025language} are more tractable.}
In our work, we follow this density viewpoint.
We define and discuss density more carefully in \cref{sec:intro:dense-generation}.
For now, consider the following example: if $K=\mathbb{N}$ and at some round the learner outputs $G_t=2\mathbb{N}=\inbrace{2,4,\ldots}$, then $G_t$ has density $\mu\inparen{G_t;K}=1/2$ inside $K$.
We say that a generator achieves density $1/2$ if it outputs sets $G_t$ with $\mu\inparen{G_t;K}\geq 1/2$ for infinitely many values of $t$.

{Finally, we turn to the hardest task in the hierarchy, \emph{identification}.} 

\begin{definition}[Identification in the limit; \cite{gold1967language}]
    {The identifier $\cI$ is said to \emph{identify} $K$ in the limit if, for every enumeration of $K$, there is a time $t^\ast$ such that for every $t\geq t^\ast$ its output $i_t$ satisfies $L_{i_t}=K$. The identifier $\cI$ is said to identify a collection $\cL = \inbrace{L_1,L_2,\ldots}$ if it identifies every $K \in \cL$.}
\end{definition}
{In other words, identification asks the learner to recover the target language entirely. Generation only requires the learner to produce new elements of $K$, and $\rho$-dense generation asks it to produce sets covering a $\rho$-fraction of the target.}

{Identification is much more restrictive: \citet{gold1967language} showed that every finite collection is identifiable, while certain countable collections are not.}
Later, work by \citet{angluin1980inductive} gives a complete characterization of identifiable collections; this characterization was largely negative and showed that identifiability does not extend much beyond finite collections.
Hence, in some sense, finite collections are the most natural family of languages for which identification is possible, and they serve as a testbed for identification results when one imposes additional constraints.

{With this hierarchy in place, we ask how it changes under bounded memory.}

\subsubsection{Two Types of Memory and the Resulting Models}
\label{sec:intro:bounded-memory}

{To make this question precise, we must first specify what information the learner can carry from one round to the next. In the unbounded-memory model, the learner conditions its output on the entire sample history $S_t$. Once memory is bounded, only some of this information can survive.}
Here, we separate two conceptually different types of information.
\begin{enumerate}
    \item Information from the environment: The learner may remember previously seen examples from the adversary's enumeration.
    \item Information from the learner's past actions:
        A learner may forget the raw examples it has seen and still retain some internal state through its previous outputs.
\end{enumerate}
{With unbounded memory, the first kind of information is strictly more powerful than the second: a learner that remembers all observed examples can reconstruct its past actions, while the same action sequence could have arisen from many different example histories.}
{With limited memory, however, both kinds of information matter. Past examples are direct evidence about the target language. Past outputs, by contrast, were computed from the entire history seen at the time, so storing a past output is in effect storing a summary of that history.}

{To isolate the contribution of each resource, we study them separately. First, we consider a fully memoryless learner that only ``reacts'' to the current example (\cref{sec:intro:memoryless}). Next, we add two natural forms of limited memory over past examples: a sliding window of the last $W$ examples and a buffer of $b$ examples chosen by the learner. The contrast between them (keeping recent examples versus keeping chosen ones) isolates whether selecting which examples to remember is meaningful (\cref{sec:intro:dense-generation}). 
Finally, we strip away memory of past examples and allow only the simplest form of memory over past outputs: the learner's most recent guess (\cref{sec:intro:identification}).}
This last model has also been extensively studied in the iterative and incremental identification literature \cite{lange1996incremental,case1999incremental}.

\subsubsection{Generation with No Memory}
\label{sec:intro:memoryless}

{We begin with the most extreme setting outlined above: a generator that carries no state from one round to the next. It remembers neither past examples nor its own past outputs, and the current example is its only source of information.}

{\begin{restatable}[Memoryless generator]{definition}{memorylessgeneratordef}
\label{def:memoryless-generator}
Fix an output space $\Omega$. A generator with output space $\Omega$ is \emph{memoryless} if it is specified by a deterministic function $\generator:\cX\to\Omega$. On round $t$, after observing the current example $x_t$, the generator outputs $\generator(x_t)$.
\end{restatable}}
In particular, a memoryless generator is not aware of the round number and has no internal state.
{Depending on the output model, $\Omega$ is either $\cX$ (for element-based generation) or $\cX^\infty$ (for set-based generation, where $\cX^\infty$ denotes the family of all infinite subsets of $\cX$).}

{Since a memoryless generator remembers neither past examples nor its own past outputs, generating from any interesting collection seems hopeless.}
Indeed, suppose that there is a single ``bad'' element $b$ {on which the generator's output is invalid for $K$}.
Then, the adversary can construct a hard enumeration for $\generator$ by \emph{interleaving} copies of $b$ with an otherwise complete enumeration of $K$, producing a stream of the form $x_1,b,x_2,b,x_3,b,\ldots$ and {thereby forcing} the same mistake infinitely many times.

{Thus, without restricting repetitions, the adversary can turn a single mistake into infinitely many. This motivates the following mild restriction on the adversary in the memoryless model.}
\begin{definition}[Finitely repeating enumerations]
An enumeration $(x_n)_{n\in\N}$ of an infinite language $K\subseteq \cX$ is \emph{finitely repeating} if every element of $\cX$ appears only finitely many times in the sequence.
\end{definition}
Surprisingly, under finitely repeating enumerations, we obtain the following positive result:
\begin{restatable}{theorem}{memorylessfiniterepsthm}
\label{thm:memoryless-finite-reps}
Every countable collection of infinite languages admits a memoryless set-based generator under finitely repeating enumerations.
\end{restatable}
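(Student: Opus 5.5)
The plan is to reduce the theorem to a purely combinatorial property of a single map $\generator:\cX\to\cX^\infty$, and then build that map by a greedy intersection along the indexing of $\cL$. Throughout, fix an enumeration $u_1,u_2,\ldots$ of the countable domain $\cX$.

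\textbf{Reduction.} It suffices to construct $\generator$ with the following property.
\begin{equation*}
\text{for every } z,\ \text{the set } B_z=\inbrace{x\in L_z : \generator(x)\not\subseteq L_z} \text{ is finite.}
\end{equation*}
Suppose this holds, and take any finitely repeating enumeration $(x_t)$ of $K=L_z$. Every $x_t$ lies in $K$, so a mistake at round $t$ forces $x_t\in B_z$. Each element of the finite set $B_z$ occurs only finitely many times in the enumeration. Hence there is a $t^\ast$ after which no $x_t$ lies in $B_z$. From then on every output $\generator(x_t)$ is an infinite subset of $K$, which is exactly valid set-based generation.

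\textbf{Construction.} For $x=u_n$, let $I(x)=\inbrace{i\le n : x\in L_i}$, listed increasingly as $i_1<i_2<\cdots<i_m$.
\begin{itemize}
\item If $I(x)=\emptyset$, output an arbitrary infinite set, say $L_1$.
\item Otherwise set $C\leftarrow L_{i_1}$. For $k=2,\ldots,m$, replace $C$ by $C\cap L_{i_k}$ whenever this intersection is infinite, and otherwise leave $C$ unchanged.
\item Output $\generator(x)=C$.
\end{itemize}
Two invariants hold by induction. First, $C$ is always infinite. Second, $C$ is always an intersection of languages $L_i$ with $x\in L_i$, so $x\in C$. The map depends only on $x$, so $\generator$ is memoryless.

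\textbf{Verification.} Fix $z$ and consider $x=u_n\in L_z$.
\begin{itemize}
\item There are only finitely many such $x$ with $n<z$, so assume $n\ge z$. Then $z\in I(x)$.
\item The value of $C$ just before index $z$ is processed depends only on the pattern $P=I(x)\cap[z-1]$. Write this value as $C_P$; if $P=\emptyset$, then $z=i_1$ and $C$ starts as $L_z$.
\item If $C_P\cap L_z$ is infinite, step $z$ is taken, so $C\subseteq L_z$. Later steps only shrink $C$, so $\generator(x)\subseteq L_z$ and $x$ is not bad.
\item If $C_P\cap L_z$ is finite, call $P$ a bad pattern. By the invariant, $x\in C_P$, and also $x\in L_z$. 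So every $x$ with bad pattern $P$ lies in the finite set $C_P\cap L_z$.
\item There are at most $2^{z-1}$ patterns, so only finitely many $x$ have a bad pattern.
\end{itemize}
Therefore $B_z$ is finite, which completes the argument.

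\textbf{Main obstacle.} The key difficulty is that an early language $L_i$ with $i<z$ can make the running intersection incompatible with $L_z$. The pattern argument handles this in two steps. There are only finitely many "histories" below index $z$. Each incompatible history confines $x$ to a finite set, precisely because $x$ always remains inside its own running intersection $C$.
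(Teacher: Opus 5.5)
Your proof is correct and follows essentially the paper's argument. The paper outputs the longest infinite prefix intersection $J_{n(x)}(x)=\bigcap\inbrace{L_j : j\le n(x),\ x\in L_j}$, stopping at the first index where the intersection becomes finite instead of skipping it as your greedy rule does, and then shows that every bad $x\in L_z$ other than the first $z-1$ points lies in the finite union $U_z$ of all finite intersections of subfamilies of $\inbrace{L_1,\dots,L_z}$. Your sets $C_P\cap L_z$ are exactly such finite intersections, so the two finiteness arguments coincide.
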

{This extends the positive result of \citet{kleinberg2024language} from full-memory to fully memoryless generators. In particular, it shows that the tractability of language generation does not stem from having access to the entire interaction history.}

{The set-based output is essential here (\cref{thm:memoryless-set-based-necessary}). If, given $x$, the generator could only output a single new string $y$, it would have no way of telling from the current example $x$ alone whether that string had appeared earlier (and this allows the adversary to always say $y$ before $x$ and force the generator to make a mistake). Outputting an infinite subset of the target avoids this: after viewing finitely many examples, at least one element of the set could still be unseen.}

{One may also ask what happens when the adversary is allowed to repeat elements infinitely often. We give a tight characterization: memoryless generation is possible if and only if, in the full-memory set-based model, every language in $\cL$ can already be generated after seeing any \textit{single} example (\cref{thm:memoryless-with-reps}).}

We overview the proof of \cref{thm:memoryless-finite-reps} in \cref{sec:overview}; formal details of this setting are in \cref{sec:memoryless:set}.

\subsubsection{{Density with No Memory, Sliding-Window Examples, and Chosen Examples}}
\label{sec:intro:dense-generation}
{The previous section established that memoryless generation is possible in the limit (\cref{thm:memoryless-finite-reps}). In other words, that there is an algorithm which in the limit produces new unseen examples from the target language and \textit{never} outputs examples from outside it.
    However, the outputs of this generator may still cover only a vanishing fraction of the target.
    We now ask what density guarantees are possible under bounded memory.

We study three memory models: the memoryless model from the previous section, a sliding window of the last $W$ examples, and a buffer of $b$ examples chosen by the learner. For each, we ask for the best density a generator can guarantee against any collection of a given size. It turns out that no positive density is achievable for arbitrary countable collections under bounded memory, so we focus on finite collections. As we noted earlier, finite collections serve as the natural testbed for identification under additional constraints; here we use them analogously for density.}
	    {The answer is governed by the ``width'' of the Boolean lattice: For $z\in \N$, define}
	    \[
	        \mu(z)\coloneqq \binom{z}{\floor{z/2}}\,.
	    \]
	    {This is the size of the largest antichain in the lattice $2^{[z]}$. (Recall that an antichain in $2^{[z]}$ is a collection of subsets of $\inbrace{1,\dots,z}$ in which no subset contains another.)}
	     \begin{inftheorem}[Minimax Density Guarantees for Finite Collections with Limited Memory]\label{thm:intro-density}
	            The following hold.
	            \begin{enumerate}
	                \item \textbf{No memory; see \cref{thm:minimax-upper-density-memoryless-finite}.} Fix $k \in \N$. For every finite collection $\cL$ of size $k$ there exists a generator that generates from $\cL$ in the limit and for infinitely many time steps the set it outputs has density at least {$\nfrac{1}{\mu(k-1)}$} in the target language. Conversely, there is a finite collection of size $k$ for which this bound cannot be improved.
	                \item \textbf{Sliding window of $W$ examples; see \cref{thm:window-upper-density-exact}.} Fix $k, W \in \N$. For every finite collection $\cL$ of size $k$ there exists a generator that generates from $\cL$ in the limit and for infinitely many time steps the set it outputs has density at least {$\nfrac{1}{\mu(k-1)}$} in the target language. Conversely, there is a finite collection of size $k$ for which this bound cannot be improved.
	                \item \textbf{$b$ chosen examples; see \cref{thm:no-eviction-universal-lower-bound}.} Fix $k, b \in \N$. For every finite collection $\cL$ of size $k$ there exists a generator that generates from $\cL$ in the limit and for infinitely many time steps the set it outputs has density at least 
	                \[
	                    \begin{cases}
\displaystyle {\frac{1}{\mu(k-b-1)}}\,, & 0\leq b\leq k-3\,,\\[3mm]
	1\,, & b\geq k-2\,.
\end{cases}
	                \]
	            \end{enumerate}
	        \end{inftheorem}
        {The three results together give {an interesting} picture{:} The first establishes the baseline: memoryless generators can guarantee density $1/\mu(k-1)$ and this is the optimal in a minimax sense. The second shows that a sliding window does not help: the worst-case minimax density remains $1/\mu(k-1)$, even with access to the last $W$ examples for any finite $W$. The third shows that adaptively chosen examples do help: each stored example effectively removes one language from the collection, so the memoryless bound applies to a residual collection of size $k-b$ instead of $k$.

The contrast between the sliding window and the adaptive buffer has a simple explanation. With a sliding window, the adversary can insert long blocks of uninformative examples between informative ones, so that the window never holds more than one informative example at a time. With an adaptive buffer, this is not an obstacle as the learner can keep each informative example once it appears, regardless of what ~ later.}
 
	    We remark that the result in the {sliding-window model} holds even if the enumeration does not contain any repetitions. 
	    {We give an overview of the proof of \cref{thm:intro-density} in \cref{sec:overview}; formal details of this setting are in \cref{sec:density-bounds}.}

\subsubsection{Approximate Identification with Last-Guess Memory}
\label{sec:intro:identification}

{Finally, we turn to identification, the strongest task in the hierarchy. As outlined above, we focus on the simplest memory setting: the learner forgets all observed examples and remembers only its most recent guess.}

	{Starting from \citet{lange1996incremental}, a long line of work has studied this problem, with largely negative results. As we show in \cref{prop:incremental-not-identifiable}, exact identification can fail even for collections of just three languages. Motivated by this obstruction, we consider a mild relaxation: \emph{approximate identification}, where the learner may converge to a hypothesis that differs from $K$ on at most finitely many elements. While close variants of this notion have been considered in the literature, to our best knowledge approximate identification itself has not been studied. Surprisingly, we show that approximate identification is possible for all finite collections:}
    \begin{theorem}[Incremental Approximate Identification]\label{thm:intro-incremental}
            All finite collections are approximately identifiable by a learner that is only allowed to remember its most recent output.
        \end{theorem}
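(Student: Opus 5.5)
The learner's state is its current guess $i\in\{1,\dots,k\}$, and on each example $x_t$ it updates $i\mapsto f(i,x_t)$ by a fixed transition rule. The plan is to make the guess only ever move \emph{upward} along a suitable fixed ordering of $\cL$. The proof then reduces to two facts: the guess can never overshoot the target, and it can stall before the target only at a language almost equal to $K$.

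\textbf{Step 1 (reduction to atoms).} Write $\cL=\{L_1,\dots,L_k\}$. Partition $\cX$ into at most $2^k$ atoms $A_\sigma=\{x: \mathbbm{1}[x\in L_j]=\sigma_j\ \forall j\}$, and call an atom \emph{large} if it is infinite. Any two unions of atoms differ on finitely many elements iff they contain the same large atoms. So $L_i$ and $L_j$ are approximately equal iff they contain the same large atoms. Define the preorder $L_i\preceq^\ast L_j$ iff $L_i\setminus L_j$ is finite, i.e.\ every large atom of $L_i$ lies in $L_j$.

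Relabel $\cL$ according to a linear extension of this preorder, with ties broken arbitrarily. Then, whenever $L_i\subsetneq^\ast L_j$ (almost-contained with infinite difference), we have $i<j$.

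Two properties of the target $K=L_{z}$ are used below. Every element of $K$ lies in $L_z$. And every large atom contained in $K$ appears infinitely often in any enumeration of $K$, because each of its infinitely many elements must appear.

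\textbf{Step 2 (the rule).} Start in state $1$. In state $i$, on example $x$:
\begin{itemize}
    \item if $x$ lies in a finite atom, or $x\in L_i$, stay at $i$;
    \item otherwise, move to the smallest $j>i$ with $x\in L_j$;
    \item if no such $j$ exists, move to the smallest $j$ with $x\in L_j$.
\end{itemize}
Ignoring finite-atom elements is what makes the rule robust against the adversary repeating finitely many exceptional elements infinitely often. This matters because we cannot assume finitely repeating enumerations here.

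\textbf{Step 3 (the state never exceeds $z$).} The claim is that the state always stays $\le z$. It holds initially since the start state is $1$. Suppose the state is $i\le z$ and a move is triggered by $x\in K$.
\begin{itemize}
    \item If $i<z$, then $L_z\ni x$ is a candidate with index $>i$. So no wrap-around occurs and the new state is at most $z$.
    \item If $i=z$, then $x\in L_z$, so no move occurs.
\end{itemize}
Hence, by induction, the state stays in $\{1,\dots,z\}$ and each move strictly increases it. Only finitely many moves can happen, so the state converges to some $i^\ast\le z$.

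\textbf{Step 4 (the limit is approximately correct).} Suppose some large atom of $K$ were not contained in $L_{i^\ast}$. That atom appears infinitely often, which would trigger a move, a contradiction. So every large atom of $K$ lies in $L_{i^\ast}$, i.e.\ $K\preceq^\ast L_{i^\ast}$.

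If the difference $L_{i^\ast}\setminus K$ were infinite, then $K\subsetneq^\ast L_{i^\ast}$. The ordering from Step 1 would then force $z<i^\ast$, contradicting $i^\ast\le z$. Therefore $L_{i^\ast}\triangle K$ is finite, which is exactly approximate identification.

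\textbf{Main obstacle.} The delicate part is choosing the transition rule so that the learner neither cycles forever nor gets trapped in a strict superset of $K$. Cycling is a real danger: with the natural rule ``jump to the smallest index containing $x$'', two languages each covering one large atom of $K$ can alternate forever. Allowing only upward moves rules out cycling. Ordering $\cL$ by a linear extension of almost-inclusion rules out the superset trap, because an upward walk that never passes $z$ can only stall at a language almost equal to $K$.

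The two places to check carefully are these: that the wrap-around case provably never fires while the state is below $z$, and that treating finite-atom elements as uninformative loses nothing, since those elements can only account for finitely many differences between languages.
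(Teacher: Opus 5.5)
Your proof is correct and is essentially the paper's argument. Both order $\cL$ by a linear extension of strict almost-containment and let the guess move only upward, so it can never pass the absorbing target index $z$. The limit then almost-contains $K$, since otherwise a later example would force a move. It cannot be a strict almost-superset of $K$, since such a language would come after $z$ in the ordering.

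The paper's rule simply increments the index by one (capped at $N$) and has no finite-atom filter. Your filter is harmless but unnecessary, and your stated reason for it is off: monotonicity alone forces convergence even under arbitrary repetitions. After that, every element of $K\setminus L_{i_\infty}$ must have appeared before the convergence time, so that set is finite.
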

        \noindent The formal details of this result appear in \cref{sec:incremental-bounded-memory}.

\subsection{Proof Overviews}
\label{sec:overview}
In this section we explain the main ideas behind our proofs. 

\paragraph{Proof overview of \cref{thm:memoryless-finite-reps}.} 
Perhaps the most natural idea is to try to ``reduce'' the 
memory consumption of algorithms that work
in the full-memory setting, like the one from \citet{kleinberg2024language}. In a nutshell, at every time step $t$, this
algorithm keeps track of the ``version space,'' \ie, the set
of all languages consistent with $S_t$, and extracts a descending chain of languages from the version space, which they term ``critical'' languages. Their main insight is that, {for large enough $t$, if one descends sufficiently far down this chain, the algorithm can output} some language $L_{i_t} \subseteq K.$

Unfortunately, in the memoryless setting we consider, it is prohibitive to estimate the version space, since it requires knowledge of \emph{every} past example. In fact, it is not hard to see that this is not merely an artifact of the completely memoryless setting; other reasonable settings of bounded memory also {face the same obstacle}, since approximating the version space requires continuously increasing memory. Another complication is that in order to figure out how far down the chain one needs to descend, it is important to use information about the current round of the game. For instance, a simple rule is to say that in round $t$ one descends down $t$ levels of the chain. Notice that learners in our setting do not have this information, as they are not even aware of the round of the game! 

Our approach to circumvent these issues is the following: when the learner sees an example $x$ it computes the quantity
\[
J_n(x)\coloneqq \bigcap \inbrace{L_j : j\leq n \text{ and } x\in L_j} {\quad\text{for }n\leq x},
\]
\ie, the intersection of appropriately long prefixes that are consistent with the singleton element $x$.\footnote{Here, we assume that $x$ can be mapped to a natural number so the comparison with the index $n$ is meaningful.}
Then, it outputs $J_{n(x)}(x)$ where $n(x) \leq x$ is the largest
number for which $\abs{J_{n(x)}(x)} = \infty.$ Our main insight is that for every $L_z \in \cL$ there are only finitely many {$x$} for which $n(x) < z.$ To see that, notice that there are finitely many languages that appear before $L_z.$ Let $U_z$ be the union of all intersections $\cap \cF$ such that $\cF \subseteq \inbrace{L_1,\ldots,L_z}$ and $\cap \cF$ is finite; since there are only finitely many such $\cF$, it follows that $U_z$ is finite. Moreover, we can
show that if $n(x) < z$ for infinitely many $x$, then $U_z$ needs to be infinite, leading to a contradiction. The formal
details appear in \cref{sec:memoryless:set}.

\paragraph{Proof overview of \cref{thm:intro-density}.} These results are the most technically involved components of our work. Our proofs all follow the same high-level principle: with limited memory, the generator can only output sets that are simultaneously ``safe'' for the languages still consistent with the information currently visible, so the problem reduces to understanding how large those safe regions can be. In the memoryless model this leads to an exact Sperner-type minimax value; in the sliding-window model a modified construction shows that a finite window does not enlarge the worst-case safe region; {finally, in the model where the generator can store $b$ examples of its choice}, each stored example permanently shrinks the ``ambiguity,'' and by choosing which examples to store the generator can, essentially, reduce the level of uncertainty to one corresponding to a collection of smaller size. We now explain each of the three parts of the result in more detail.

\paragraph{Memoryless upper density.} 
For the {upper-bound construction} in \cref{thm:upper-density-collapse}, we build a worst-case family using the ``middle'' layer of the Boolean lattice. Set
\[
n\coloneqq k-1
\qquadand 
N\coloneqq \binom{n}{\floor{n/2}}\,.
\]
We partition the target language $K$ into $N$ pairwise disjoint infinite sets
$
K=A_1\cup\cdots\cup A_N
$
with
$
\mu_{\rm up}(A_i;K)=\sfrac{1}{N}$ for all $i \in [N].$
We then index these pieces by the $N$ subsets
$
S_1,\dots,S_N\subseteq [n]
$
of cardinality exactly $\floor{n/2}$, so that $\inbrace{S_1,\dots,S_N}$ is an antichain by \cref{fact:sperner}. For each $j\in [n]$, we define
\[
L_j \coloneqq \bigcup_{i:\, j\in S_i} A_i\,.
\]
Now fix any memoryless generator that succeeds on the resulting collection. Outside finitely many bad points, if the current example lies in $A_i$, then the output must be contained in every $L_j$ with $j\in S_i$. By the antichain property, we can show that the corresponding intersection reduces back to $A_i$. Hence, every sufficiently late output has upper density at most $1/N$ inside $K$, which gives the desired upper bound.

The matching lower bound in \cref{thm:sperner-achievability} is achieved by the canonical memoryless intersection generator: on input $x$, it outputs the intersection of all languages containing $x$ whenever that intersection is infinite. Fix a target language $K\in \cL$. For each subset
$
B\subseteq \cL\setminus \inbrace{K},
$
we consider the relative region
\[
R'_B\coloneqq \inbrace{x\in K : S_K(x)=B}\,.
\]
These regions partition $K$. We then apply the symmetric chain decomposition from \cref{fact:symmetric-chain} to the Boolean lattice on $\cL\setminus\inbrace{K}$, obtaining exactly
$N$
chains. Since the corresponding chain-unions partition $K$, finite subadditivity of $\mu_{\rm up}$ implies that some chain-union has upper density at least $1/N$. Along that chain, once we pass the finitely many finite regions, the remaining union is contained in a single intersection $I'_B$ that, therefore, also has upper density at least $1/N$. Because the corresponding region $R'_B$ is infinite, every {finitely repeating enumeration of $K$} visits it infinitely often, and at each such time the generator outputs exactly $I'_B$. This proves the matching lower bound and yields \cref{thm:minimax-upper-density-memoryless-finite}.

\paragraph{Sliding-window.}  The upper bound in \cref{lem:window-upper-density-collapse} is obtained by modifying the previous Sperner construction. We again start from the pieces
$
A_1,\dots,A_N
$
of upper density $1/N$, but now we add a separator set $Z$ with
$
\mu_{\rm up}(Z;K)=0
$
and define
$
L_j = Z \cup \bigcup_{i:\, j\in S_i} A_i.
$
The target is then enumerated in stages, alternating one point from each $A_i$ with a long block from $Z$. Once the finite bad sets are exhausted and the separator blocks are longer than the window, every late window either lies entirely inside $Z$ or intersects points from at most one set $A_i$. In the first case the generator is forced into
\[
K\cap \bigcap_{j=1}^n L_j = Z\,,
\]
which has upper density $0$. In the second case the generator is forced into
\[
K\cap \bigcap_{j\in S_i} L_j = A_i\cup Z\,,
\]
whose upper density is at most $1/N$. Thus no finite sliding window can beat the memoryless Sperner bound. The matching lower bound in \cref{thm:window-upper-density-exact} is immediate from \cref{thm:sperner-achievability}: the window-$W$ generator simply ignores the first $W-1$ entries of the window and applies the memoryless rule to the most recent example.

\paragraph{{Adaptive buffer}.} The proof of \cref{thm:no-eviction-universal-lower-bound} uses a greedy residual-version-space algorithm. Given a buffer state
$
M=(u_1,\dots,u_s),
$
we define the residual collection
$
\cL(M)\coloneqq \inbrace{L\in \cL : u_j\in L \text{ for every }j\in\insquare{s}}.
$
The update rule stores the current example exactly when the buffer is not full and the current example strictly shrinks this residual collection. The output rule is the canonical memoryless intersection generator applied to the current residual collection $\cL(M_{t-1})$.

Because each insertion removes at least one language from the residual collection and the buffer has size at most $b$, the buffer stabilizes after finitely many rounds at some final state $M_\ast$. If the final buffer is not full, then every language in $\cL(M_\ast)$ must contain the target $K$; otherwise the first witness in $K\setminus L$ would force one more insertion, contradicting stabilization. Since $K\in \cL(M_\ast)$ as well, the residual intersection is exactly $K$, and from that point onward the generator outputs $K$ itself, achieving upper density $1$.

If the final buffer is full, then each insertion has eliminated at least one language, so
$
\abs{\cL(M_\ast)}\leq k-b.
$
From the stabilization time onward, the learner is exactly the canonical memoryless intersection generator on this smaller residual family. Applying \cref{thm:sperner-achievability} to the residual collection yields the lower bound
\[
\frac{1}{\binom{k-b-1}{\floor{(k-b-1)/2}}}\,.
\]
Conceptually, each stored example removes one language from the effective ambiguity budget, and the memoryless Sperner bound is then applied to the reduced residual problem.

The formal details of this result appear in \cref{sec:density-bounds}.

\paragraph{Proof overview of \cref{thm:intro-incremental}.}
Recall that in this setting the learning algorithm cannot remember any of the
past examples, but it can remember its most recent output (in the form of an index $i_t$ of a language $L_{i_t} \in \cL.$).
{The goal is that, after some finite time step $t^\star$, the algorithm outputs only languages that have finite symmetric difference with the target $K$.}
We write  $A\preceq_{\rm F} B$ to denote that $A$ is an ``approximate'' subset of $B,$ \ie, it contains finitely many elements outside of $B$.
The first step of our
algorithm is to create a topological ordering of the
languages based on the $\preceq_{\rm F}$ ordering so that ``smaller'' languages appear earlier in the ordering.
(Since the collection is finite, such an ordering can be found.)
This prevents the overgeneralization
failure in which the learner gets stuck on a strict superset of $K$ and can never see evidence (in form of samples) that rules out this superset.

The algorithm now proceeds as follows: in the first round, it outputs the first language that is consistent with the input. In subsequent rounds, if the previously guessed language is consistent with the current input, the algorithm keeps the same guess. Otherwise, 
it moves one step forward in the topological ordering. 
To get the result, we first observe that 
if some language that the algorithm is currently guessing contains \emph{infinitely} many elements outside of $K,$
then after some finitely many timesteps the algorithm
will move on from this language as it will see an element that contradicts it. On the other hand, if the algorithm
is currently guessing $K$ it will never move on. Combining these two observations, along with some technical care, gives the formal result. The details appear in \cref{sec:incremental-bounded-memory}.

\subsection{Discussion and Open Questions}
	    In this work we {study variants} of memory-bounded
    language generation in the limit. {Our results show a rich landscape. When the learner is permitted to remember past examples but not its past outputs, generation in the limit is always possible, even under the most severe memory restriction. Obtaining non-trivial density bounds, however, forces us to focus on finite collections.}
	    {Conversely, when the learner forgets past examples but remembers only its last guess, approximate identification -- a mild relaxation of identification -- remains achievable for all finite collections, circumventing the classical impossibility \citep{lange1996incremental}.}
	    These results reinforce one of the main themes of the paper: {memory of past examples and memory of past outputs are different resources}, and bounded memory does not make all tasks uniformly harder. {Rather, the actual effect of these two resources is subtle and depends both on the learning objective and on the type of information retained.}
    
    There are several natural questions that arise from our work. The most immediate one concerns approximate identification with bounded memory. Without memory restrictions, approximate identification is achievable if and only if $\cL$ satisfies the weak Angluin condition \citep{charikar2025facets,kalavasis2026characterizations}. Does the same condition characterize approximate identification when the learner can only remember its previous guess, or does the memory restriction impose a stronger barrier?
    We also view our results as a step toward relaxing some of the unrealistic aspects of the language generation model, by showing that it is surprisingly robust to memory restrictions (\cref{thm:memoryless-finite-reps}). It would be interesting to understand which other natural restrictions on memory, computation, or access to past data still permit generation for all countable collections.

    \subsection{Related Work}
\label{sec:relatedworks}

Our work is most directly related to two lines of research: recent work on language generation in the limit, and the older literature on language identification in the limit under bounded-memory restrictions. The first line provides the generation frameworks that our paper builds on, while the second motivates the memory-restricted viewpoint that is central here.

\paragraph{Language generation in the limit.}
We work in the language generation in the limit framework introduced by \citet{kleinberg2024language}. {Since then, a growing body of work has explored many nearby questions and variants, including} uniform and non-uniform notions of generation, models with additional feedback, stronger notions of breadth or density, noisy or incomplete presentations, and partial identification \cite{li2025generation,kalavasis2025limits,charikar2025facets,raman2025noisy,peale2025representative,kleinberg2025density,kleinberg2025partial,hanneke2025union,mehrotra2025language,charikar2025characterization,karbasi2025impossibility,charikar2025pareto,arenas2025language,anastasopoulos2026safe,bai2026noise,kalavasis2026characterizations}. Our emphasis is different. Rather than strengthening the feedback model or studying broader, noisier, or more robust variants of generation, we focus on how much memory a learner must retain in order to generate.

\paragraph{Language generation in the limit with breadth or density.}
A particularly relevant subline of this literature studies stronger notions of generation that require the learner not just to produce valid unseen strings, but to do so with some notion of breadth or coverage. This direction is motivated by a basic feature of the original result of \citet{kleinberg2024language}: while their generator eventually stops producing strings outside the target language, it may do so by generating from an increasingly thin subset of the target. {Several subsequent works therefore introduced stronger requirements that ask the learner to generate more representative, denser, or otherwise broader subsets of the language. These strengthened notions turn out to be substantially harder than basic generation, in many cases bringing the task close to identification \cite{kalavasis2025limits,charikar2025facets,peale2025representative,kleinberg2025density,kalavasis2026characterizations,kleinberg2025partial}.} Our work is orthogonal to this direction. We keep the basic notion of generation fixed and instead study how the picture changes when the learner's memory is severely restricted.

\paragraph{Language identification in the limit with bounded memory.}
The closest prior line of work studies incremental, or iterative, learning in the limit, {where the learner forms its next guess using only its previous guess and the newest example} \cite{case1999incremental}. {In our framing, the learner forgets all past examples and retains only what is encoded in its current guess.} This line of work has grown into a broader literature on bounded-memory identification, studying variants such as bounded example memory and feedback learning \cite{case1999incremental}, limited long-term memory \cite{kinber1995language}, set-driven and rearrangement-independent restrictions \cite{lange1996set}, and temporary or ordinal-bounded memory \cite{lange2008incremental,carlucci2012learning}. 
{In contrast with the full-memory setting, prior work gives only limited positive results for identification under these bounded-memory models \cite{case1999incremental,kinber1995language,lange1996set}.} {Indeed, as we show in \cref{prop:incremental-not-identifiable}, there is a collection of just three languages that cannot be identified in this model. Surprisingly, however, this impossibility largely stems from the strictness of the definition: as \cref{thm:incremental-approximate-identification} shows, approximate identification, where the learner need only converge to a language differing from the target $K$ in finitely many elements, is achievable for every finite collection by an incremental learner.}

\paragraph{Language identification in the limit with bounded memory.}
{The incremental learning model has been extensively studied in the literature on language identification in the limit: here, the learner forms its next guess using only its previous guess and the newest example \cite{lange1996incremental,case1999incremental}.
Several variants of this model have also been studied.
These include bounded-memory-state learners, where the learner updates from the current example and an auxiliary state that can encode the previous guess together with other finite information \cite{koetzing2021memory}; counter variants, which keep the incremental learning model but also give the learner a numerical signal such as the current round number, in addition to the current example and previous guess \cite{case2008ushaped,koetzing2014iterative}; and models that retain past examples in restricted ways, including temporary example memory, where stored examples must eventually be forgotten if they are not seen again \cite{lange2008incremental}, ordinal-bounded example memory, where extensions of stored examples are controlled by an ordinal-valued budget \cite{carlucci2012learning}, and related restrictions such as limited long-term memory and set-driven or rearrangement-independent learning \cite{kinber1995language,lange1996set}.
Much of this literature compares these models to one another, establishing which are stronger than which and which are incomparable.
Our work asks a different question: how do bounded-memory restrictions affect the hierarchy of generation, density, and identification?
Our work asks a different question: how do bounded-memory restrictions affect the hierarchy of generation, density, and identification?
A central theme is that past examples and past outputs behave as different resources:
For a fixed amount of past-example memory, our density results show that the best minimax guarantee for finite collections deteriorates as the collection grows.
By contrast, in the last-guess setting, \cref{thm:incremental-approximate-identification} shows that every finite collection can be approximately identified; the eventual guess differs from the target on only finitely many elements, and therefore has density one in the target.}

\paragraph{Concurrent and independent work: space-efficient generation \citep{flammarion2026space}.}
{Concurrently and independently, \citet{flammarion2026space} also study language generation in the limit under resource constraints. Their work is incomparable with ours: They measure memory as the number of bits used by a streaming learner, and focus on target and hypothesis languages recognized by deterministic finite automata (DFA) with a bounded number of states over a finite alphabet. 
In that setting, they characterize tradeoffs between space used by the learner and the ``amount'' of the target hypothesis learned by the learner. We avoid automata-theoretic restrictions on the collection of languages, but model memory in a different and coarser way: we allow the learner to only retain a bounded number of past examples, or only its previous guess. Thus, the two papers ask the same broad question under different resource models.}

\section{Preliminaries}\label{sec:model-prelim}
 
In this section, we present the necessary background on language generation in the limit.
    
\paragraph{Notation.}
Let $\Sigma$ be a finite alphabet and let $\Sigma^\star$ denote the set of all finite strings over $\Sigma$.
We use $\cX$ to denote an arbitrary countable universe of strings; typically $\cX=\Sigma^\star$.
We fix an arbitrary canonical ordering of the domain $\cX$ and denote it by $(\bar{x}_1,\bar{x}_2,\dots).$
A language is an infinite subset $L\subseteq \cX$.
We denote a collection of languages using $\cL$ and restrict our attention to collections with finite or countably many elements. 
{For a set $A$, write $[A]^\infty$ for the collection of infinite subsets of $A$. For $W\in\N$, write $A^W$ for the set of ordered $W$-tuples of $A$ and $(A^W)_{\neq}\subseteq A^W$ for the subset of ones with distinct elements.}
{An enumeration} of a language $K$ is an infinite sequence $x_1,x_2,\ldots$ (possibly with repeats) such that each $x_t\in K$ and every element of $K$ appears at some time.
We use $S_n$ to denote the set of the first $n$ elements in the adversary's enumeration, $S_n \coloneqq \inbrace{x_1,\ldots,x_n}$.

\paragraph{Generating algorithms (with unbounded memory).}
A generating algorithm (or simply a generator) is a function $\generator$ that maps a {finite-length} history of observed examples from $\cX$ to a set of ``outputs.''
The generators are divided into different types {by output type}:
\begin{itemize}[leftmargin=15pt]
    \item \textbf{Element-Based Generators:}
        If $\generator$ outputs elements of $\cX$, {we say} it is an element-based generator.
    \item \textbf{Set-Based Generators:}
        If $\generator$ outputs \textit{infinite} sets, {we say} it is a set-based generator.
    \item \textbf{Index-Based Generators:}
        {Index-based generators are set-based generators that} \textit{always} output some language $L_i$ from the collection $\cL$.
        Often it is convenient to say index-based generators output an index $i\in \N${, meaning that they output} $L_i$.
\end{itemize}
Note that each of the above types of generators {retains} an arbitrarily large number of {examples it has observed} in the past. 
Later in this paper, we define more restricted types of generators with bounded memory. %
\begin{remark}[Types of Generators]
    {In the original formulation of \citet{kleinberg2024language}, the generator outputs a single string each round; later work} \cite{kleinberg2025density,li2025generation,kalavasis2026characterizations,charikar2025facets} {allows the generator to output a set} (or, equivalently, a sampling procedure) once sufficient training data have been observed.
\end{remark}

We now formally define language generation in the limit.
\begin{definition}[Language Generation in the Limit \citep{kleinberg2024language}]\label{def:consistentGeneration}
    Fix some $K$ from the language collection $\cL$ and a generating algorithm {$\generator=(\generator_n)$.}
    At each step $n$, let $S_n \subseteq K$ be the set of all strings that the algorithm $\generator$ has seen so far. 
    The {algorithm $\generator$} is said to generate from $K$ in the limit if, for all enumerations of $K$, there is some $n^\star \in \N$ such that for all steps $n \geq n^\star$, 
    \begin{enumerate}[itemsep=-2pt]
        \item if $\generator$ is element-based, then $\generator_n(S_n)\in K \setminus S_n$;
        \item if $\generator$ is set-based (including {index-based generators}), then $\generator_n(S_n)$ is infinite and $\generator_n(S_n)\subseteq K$.
    \end{enumerate}
    The collection $\cL$ allows for generation in the limit (or is generable) if there is an {algorithm $\generator$} that generates from $K$ in the limit for {every} $K \in \cL.$
\end{definition} 
{To gain} some intuition, consider the following example.
\begin{example}[Length-threshold languages]\label{ex:length-threshold}
Fix a finite alphabet $\Sigma$ and consider the countable collection of length-threshold languages $\cL=\inbrace{L_1,L_2,\ldots}$ where, for each $\ell$, $L_\ell \coloneqq \inbrace{x\in\Sigma^\star:\abs{x} \geq \ell}.$
Suppose the target is $K=L_{\ell^\star}$ and the adversary selects the enumeration $x_1,x_2,\ldots$.
After observing $S_n$, we know that $\ell^\star \leq \min_{x\in S_n}\abs{x}$.
Hence any string of length {at least} $m_n\coloneqq \min_{x\in S_n}\abs{x}$ belongs to every language consistent with $S_n$, and therefore belongs to $K$.
One valid element-based generator outputs the first element in the canonical enumeration of $\cX$ that has length at least $m_n$ and does not appear in $S_n$.
However, this is not a valid set-based generator because it outputs only a single element rather than an infinite set.
A valid set-based generator outputs all elements of $\cX$ of length at least $m_n$ that do not appear in $S_n$. 
\end{example} 

\section{Generators with No Memory}
\label{sec:memoryless:set}

In this section, we study generation when the generator carries no state from one round to the next and the current example is its only source of information.
{\memorylessgeneratordef*}
\noindent The generator is therefore not aware of the round number and retains no information about earlier examples. 
Depending on the output model, the output space $\Omega$ takes different forms: $\Omega=\cX$ for element-based generators, $\Omega=\cX^\infty$ for set-based generators (where $\cX^\infty$ denotes the set of all infinite subsets of $\cX$), and {for index-based generators, $\Omega=\inbrace{1,\ldots,N}$ when $\cL=\inbrace{L_1,\ldots,L_N}$ is finite, and $\Omega=\N$ when $\cL=\inbrace{L_1,L_2,\ldots}$ is countably infinite (where output $i$ is interpreted as the language $L_i$)}. 
A memoryless generator succeeds if its outputs satisfy the generation requirement of the corresponding output model (element, set, or index-based) on all sufficiently large rounds of every enumeration of the target language.

To build some intuition, fix a collection $\cL$ and a target language $K\in\cL$. Suppose that on round $t$ the adversary reveals $x_t\in K$ and asks the generator to output a fresh element of $K$, meaning one outside $\inbrace{x_1,\ldots,x_{t-1},x_t}$. A memoryless generator cannot in general do this: from the single input $x_t$, it has no way to tell which elements have already been shown. This is why the set-based model is the natural one in the memoryless setting. If the generator outputs an infinite subset of $K$, then because only finitely many examples have been shown so far, at least one element of this set is unseen.

{In the classical model of language identification in the limit \citep{gold1967language},} an enumeration of $K$ is any infinite sequence in which every element of $K$ appears at least once, and the adversary may repeat an element infinitely often. Under this regime, a memoryless generator is highly constrained. Once the current input $x$ is fixed, the generator must always produce the same response. For a set-based memoryless generator $\generator$ and a target $K$, call $x\in K$ bad if $\generator(x)\not\subseteq K$. If even one bad point exists, the adversary can repeat it every other round and use the remaining rounds to enumerate the rest of $K$.
A single mistake can therefore be amplified into infinitely many by the adversary.
This leads to the following natural restriction.

\begin{definition}[Finitely repeating enumerations]
\label{def:finitely-repeating}
An enumeration $(x_n)_{n\in\N}$ of an infinite language $K\subseteq \cX$ is finitely repeating if for every $x\in \cX$, the set $\inbrace{n\in\N : x_n=x}$ is finite.
\end{definition}
For convenience, we restate our main result below
\memorylessfiniterepsthm*
\noindent \cref{thm:memoryless-finite-reps} significantly generalizes the main result of \citet{kleinberg2024language}, which shows that arbitrary countable collections are generable in the limit when the generator has unrestricted memory. \cref{thm:memoryless-finite-reps} shows that the full countable-collection generality survives even after removing memory altogether, provided the adversary is subject to \cref{def:finitely-repeating}. The proof appears in \cref{sec:proofof:thm:memoryless-finite-reps}.

The next result shows that \cref{def:finitely-repeating} is essentially the weakest assumption one could hope for in the memoryless set-based model. Once arbitrary repetitions are allowed, {memoryless set-based generation is possible} only for very special collections.

\begin{theorem}[Characterization with arbitrary repetitions]
\label{thm:memoryless-with-reps}
Under arbitrary enumerations, a countable collection $\cL$ of infinite languages admits a memoryless set-based generator if and only if, in the unrestricted (full-memory) set-based model, every language in $\cL$ can already be generated after any single observed example.
Equivalently, for every $x\in \bigcup_{L\in\cL} L$, $\bigcap \inbrace{L\in\cL : x\in L}$ is infinite.
\end{theorem}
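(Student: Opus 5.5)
We will prove the theorem through the second, concrete formulation: a memoryless set-based generator exists under arbitrary enumerations if and only if, for every $x\in\bigcup_{L\in\cL}L$, the set $I(x)\coloneqq\bigcap\inbrace{L\in\cL: x\in L}$ is infinite. We also check that this is what ``generable after any single observed example'' means in the full-memory model. With one example $x$, a full-memory set-based generator must output an infinite set contained in every language of $\cL$ consistent with $\inbrace{x}$. Such a set exists exactly when $I(x)$ is infinite, and then $I(x)$ itself is the largest safe output.

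For sufficiency, assume every $I(x)$ is infinite and define $\generator(x)\coloneqq I(x)$. For points $x$ not covered by any language, choose an arbitrary infinite set; these are never shown. Fix a target $K\in\cL$ and any enumeration, repetitions allowed. Every $x_t$ lies in $K$, so $K$ is one of the languages in the intersection, hence $\generator(x_t)=I(x_t)\subseteq K$, and $I(x_t)$ is infinite by assumption. Every output is therefore valid from round $1$ on, and no finiteness or repetition condition is needed.

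For necessity, suppose $\generator$ is a memoryless set-based generator that succeeds on every $K\in\cL$ under arbitrary enumerations, and suppose some covered $x$ has $I(x)$ finite. We first show the following: for every $K\in\cL$ with $x\in K$, $\generator(x)\subseteq K$. If not, the adversary enumerates $K$ as $x, y_1, x, y_2, x,\ldots$, where $(y_i)$ enumerates $K$. Then $x$ appears infinitely often and each occurrence produces the invalid output $\generator(x)\not\subseteq K$, contradicting eventual validity. The output $\generator(x)$ is an infinite set contained in every $L\ni x$, so $\generator(x)\subseteq I(x)$ and $I(x)$ is infinite. This is a contradiction.

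The only delicate point is the necessity step. Validity is an ``eventually'' condition, so one mistake is not enough; we need the mistake at $x$ to recur infinitely often. The interleaving enumeration achieves this, and it is a legitimate enumeration precisely because arbitrary repetitions are allowed. We also note that $\generator$ outputs infinite sets by definition, so $\generator(x)$ is infinite. This gives the contradiction without separately checking validity for elements in $\cX\setminus\bigcup\cL$.
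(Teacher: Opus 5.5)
Your proposal is correct and follows essentially the same route as the paper. You read the full-memory condition as ``$I(x)$ is infinite,'' you get sufficiency from the generator $\generator(x)=I(x)$ (the paper allows any infinite subset of $I_x$), and you get necessity from the interleaving enumeration $x,y_1,x,y_2,\ldots$, which turns a single bad input into infinitely many failures. The only difference is cosmetic: you first show $\generator(x)\subseteq K$ for every $K\ni x$ and then conclude $\generator(x)\subseteq I(x)$, whereas the paper starts from finiteness of $I_x$ to pick some $K^\ast\ni x$ with $\generator(x)\not\subseteq K^\ast$ --- the same argument in contrapositive form.
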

{The condition in \cref{thm:memoryless-with-reps} requires that each example $x$ by itself certify an infinite subset that is safe for every language containing it. Few natural collections satisfy this requirement. Thus \cref{def:finitely-repeating} marks the point at which the universal countable-collection guarantee in \cref{thm:memoryless-finite-reps} can hold in the memoryless model.} The proof of \cref{thm:memoryless-with-reps} appears in \cref{sec:proofof:thm:memoryless-with-reps}.

The positive result in \cref{thm:memoryless-finite-reps} also relies on set-based output in an essential way.
\begin{theorem}\label{thm:memoryless-set-based-necessary}
The set-based output in \cref{thm:memoryless-finite-reps} is necessary. Under finitely repeating enumerations, no infinite language is generable by a memoryless element-based generator, and there exists a collection of two infinite languages that is not generable by any memoryless index-based generator.
\end{theorem}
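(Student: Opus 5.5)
The theorem makes two claims, and I will prove each by building an adversarial enumeration.

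\textbf{Element-based generators.} Fix an infinite language $K$ and a memoryless element-based generator $\generator:\cX\to\cX$. I will argue by contradiction: suppose $\generator$ generates $K$ in the limit under finitely repeating enumerations. Then for every enumeration there is a time after which $\generator(x_t)\in K\setminus S_t$. In particular, for all but finitely many $x\in K$ we need $\generator(x)\in K$ and $\generator(x)\neq x$. Otherwise infinitely many $x\in K$ fail this, and any enumeration without repetitions visits them infinitely often and errs each time.

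Enumerate $K$ as $k_1,k_2,\ldots$ and build the enumeration in stages. At stage $i$, let $y=\generator(k_i)$. If $y\in K$ and $y\neq k_i$, first present $y$ and then $k_i$. On the round that shows $k_i$, the output $y$ is already in $S_t$, so that round is a mistake. If instead $y\notin K$ or $y=k_i$, present $k_i$ alone; the output is again invalid. In both cases stage $i$ contains at least one mistake.

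Each stage presents at most two elements, and every $k_i$ appears, so the sequence is an enumeration of $K$. The main check is the finite-repetition condition. A given element $z$ can appear at stage $i$ only if $z=k_i$ or $z=\generator(k_i)$. The first happens once. The second could in principle happen at infinitely many stages, namely when infinitely many $k_i$ map to the same $z$.

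To avoid this, I will insert $y$ only the first time it is needed; if $y$ has already been presented earlier, it already lies in $S_t$ and I do nothing extra. With this rule each element is presented at most twice, once as some $k_i$ and once as an inserted $y$. Since every stage still contains a mistake, there are infinitely many mistakes, which contradicts the assumption.

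\textbf{Index-based generators.} I take a two-language collection with a nested pair, $L_1=\N$ and $L_2=2\N$. A memoryless index-based generator is a map $g:\cX\to\{1,2\}$. Consider the target $K=L_2$. Any finitely repeating enumeration of $L_2$ visits every even number (and each only finitely often), so correctness forces $g(x)=2$ for all but finitely many even $x$.

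Now consider the target $K=L_1$. Present a finitely repeating enumeration of $\N$ with no repetitions. Such an enumeration visits infinitely many even $x$ with $g(x)=2$, and on each of those rounds the output is $L_2=2\N\not\subseteq\N\setminus$ is fine; the real issue is validity. In the set-based sense, outputting $L_2\subseteq L_1$ is actually valid, so this pair alone gives no contradiction.

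I therefore switch to an incomparable pair with $L_1\cap L_2$ infinite: $L_1=\N$ and $L_2=\{0\}\cup\{\text{odd numbers}\}$. For even $x\neq 0$ only $L_1$ contains $x$, so $g(x)=1$ there. For odd $x$, both languages contain $x$. Since $L_2\not\subseteq L_1$ fails, $L_2\subseteq L_1$, and outputting $L_2$ is again valid for $K=L_1$.

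So I need two incomparable languages whose intersection is infinite, for example $L_1=\text{evens}\cup\{1\}$ and $L_2=\text{evens}\cup\{3\}$. Each shared even $x$ receives a fixed label $g(x)\in\{1,2\}$, and one of the labels occurs infinitely often; say label $1$. Take the target $K=L_2$ with a repetition-free enumeration. It outputs $L_1\not\subseteq L_2$ on infinitely many rounds, so $g$ fails on $L_2$. The case of label $2$ is symmetric with target $L_1$.

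The main obstacle is the element-based part, specifically keeping the adversarial enumeration finitely repeating. The remedy is to insert each $\generator(k_i)$ at most once, which preserves one mistake per stage.
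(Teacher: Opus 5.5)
Your proof is correct, and it departs from the paper's route only in the element-based part.

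\textbf{Index-based part.} This follows the paper's route: two incomparable languages with infinite intersection, a pigeonhole on the labels over the intersection, and a target chosen against the label that occurs infinitely often. Your pair $\text{evens}\cup\{1\}$, $\text{evens}\cup\{3\}$ is a cosmetic variant of the paper's residue classes mod $4$. You also rightly note that any repetition-free enumeration of the target already hits every mislabeled point, so the paper's interleaving $a_1,z_1,a_2,z_2,\ldots$ is unnecessary. In a write-up, delete the two abandoned nested-pair attempts and the garbled sentence in the first one. They play no role in the final argument and only obscure it.

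\textbf{Element-based part.} The paper splits into three cases:
\begin{itemize}
\item infinitely many $x\in K$ with $\generator(x)\notin K$ or $\generator(x)=x$, which are interleaved with an enumeration of $K$;
\item some $y$ with infinite fiber $\{x:\generator(x)=y\}$, where $y$ is presented once and then the fiber is interleaved;
\item all fibers finite, where one picks $a_i$ with distinct images and presents $\generator(a_i)$ just before $a_i$.
\end{itemize}
Your staged adversary presents $\generator(k_i)$ immediately before $k_i$ exactly when it lies in $K\setminus\{k_i\}$ and has not appeared yet. This single construction covers all three cases at once. The rule of inserting only the first time is what absorbs the infinite-fiber case, and it keeps every element to at most two occurrences. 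Self-defeating points need no insertion at all.

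Your version is shorter and forces an error at a designated occurrence of every element of $K$. The paper's case analysis is longer but makes explicit the three structural ways a memoryless element-based rule can fail; that extra picture is not needed for the theorem.
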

The proof of \cref{thm:memoryless-set-based-necessary} appears in \cref{sec:proofof:thm:memoryless-set-based-necessary}.

\subsection{Proof of \cref{thm:memoryless-finite-reps}}
\label{sec:proofof:thm:memoryless-finite-reps}

\begin{proof}
Since $\cX$ is countable, we may identify it with the positive integers without loss of generality. 
{Fix a sequence $L_1,L_2,\ldots$ whose range is $\cL$ and in which every language in $\cL$ appears at least once; if $\cL$ is finite, repeat its finite list to obtain such a sequence.} For $n,x\in\N$, let
\[
J_n(x)\coloneqq \bigcap \inbrace{L_j : j\leq n \text{ and } x\in L_j}\,,
\]
with the convention that the intersection of an empty family is $\N$. Then $J_1(x)$ is always infinite, so $n(x)\coloneqq \max \inbrace{n\leq x : J_n(x)\text{ is infinite}}$ is well-defined. Define the memoryless set-based generator by $\generator(x)\coloneqq J_{n(x)}(x)$.

Fix a target language $K=L_z$, and let $B_K\coloneqq \inbrace{x\in K : \generator(x)\not\subseteq K}$. To show that $\generator$ succeeds, it is enough to prove that $B_K$ is finite. Let $U_z$ be {the union of all sets $\bigcap \cF\coloneqq \cap_{L\in \cF}L$ such that $\cF\subseteq \inbrace{L_1,\ldots,L_z}$ and $\bigcap \cF$ is finite}; since there are only finitely many such subfamilies, $U_z$ is finite. Now take $x\in B_K$. If $n(x)\geq z$, then because $x\in L_z$, the language $L_z$ appears in the intersection defining $J_{n(x)}(x)$, and hence $\generator(x)=J_{n(x)}(x)\subseteq L_z=K$, a contradiction. Thus $n(x)<z$. If also $x\geq z$, then $J_z(x)$ must be finite, since otherwise $z$ would be admissible in the definition of $n(x)$; because $x\in L_z$, we have $x\in J_z(x)$, and therefore $x\in U_z$. Hence
\[
B_K \subseteq \inbrace{x\in\N : x<z}\cup U_z\,,
\]
so $B_K$ is finite.

If $E_K=\inparen{x_t}_{t\in\N}$ is a finitely repeating enumeration of $K$, then each point of the finite set $B_K$ appears only finitely many times in $E_K$. Therefore after some finite stage no bad point ever appears again, and from then on $\generator(x_t)\subseteq K$ at every round. Thus $\generator$ generates from $K$ in the limit.
\end{proof}

\subsection{Proof of \cref{thm:memoryless-with-reps}}
\label{sec:proofof:thm:memoryless-with-reps}

\begin{proof}
For each $x\in \bigcup_{L\in\cL} L$, write
\[
I_x \coloneqq \bigcap \inbrace{L\in\cL : x\in L}\,.
\]
We first note that the two formulations in the theorem are equivalent. Indeed, in the unrestricted set-based model, after observing the single example $x$, a generator succeeds simultaneously for every target language containing $x$ if and only if it can output some infinite set $A_x$ with $A_x\subseteq K$ for every $K\in\cL$ containing $x$. This is possible if and only if there exists an infinite subset of $I_x$, which is equivalent to $I_x$ itself being infinite. Thus it is enough to prove the theorem using the condition that $I_x$ is infinite for every $x$.

We now prove the two directions.
\begin{itemize}
    \item  \textbf{Necessity.}
Assume that $\cL$ admits a memoryless set-based generator $\generator$ under arbitrary enumerations. Suppose toward a contradiction that $I_x$ is finite for some $x\in \bigcup_{L\in\cL}L$. Since $\generator(x)$ must be infinite, we have $\generator(x)\not\subseteq I_x$. Hence there is some $K^\ast\in\cL$ with $x\in K^\ast$ but $\generator(x)\not\subseteq K^\ast$. Now let the adversary choose $K^\ast$ and enumerate it by repeating $x$ infinitely often, say as $x,y_1,x,y_2,\dots$, where $\inparen{y_i}_{i\in\N}$ lists the rest of $K^\ast$. Every time the current example is $x$, the generator outputs the same invalid set $\generator(x)\not\subseteq K^\ast$. Therefore it fails infinitely many times, contradicting generation in the limit.
\item  \textbf{Sufficiency.}
Now assume that $I_x$ is infinite for every $x\in \bigcup_{L\in\cL}L$. Define the memoryless generator by choosing $\generator(x)$ to be any infinite subset of $I_x$ for each such $x$; for $x\notin \bigcup_{L\in\cL}L$, define $\generator(x)$ arbitrarily. If the target language is $K\in\cL$ and the adversary presents an example $x_t\in K$, then by definition $I_{x_t}\subseteq K$, so $\generator(x_t)\subseteq K$. Hence the generator is correct on every round of every enumeration, and in particular generates from $K$ in the limit.
\end{itemize}
This proves the theorem.
\end{proof}

\subsection{Proof of \cref{thm:memoryless-set-based-necessary}}
\label{sec:proofof:thm:memoryless-set-based-necessary}

\begin{proof}
We treat the two output models separately.

\paragraph{Element-based generators.}
Let $\generator:\cX\to\cX$ be memoryless and fix any infinite language $K\subseteq \cX$. We show that some finitely repeating enumeration of $K$ makes $\generator$ fail infinitely often. Let $B\coloneqq \inbrace{x\in K : \generator(x)\notin K \text{ or } \generator(x)=x}$. If $B$ is infinite, choose distinct $b_1,b_2,\ldots\in B$ and a repetition-free enumeration $\inparen{z_i}_{i\in\N}$ of $K$; then $b_1,z_1,b_2,z_2,\ldots$ is a finitely repeating enumeration of $K$, and every round with current example $b_i$ is a failure. So assume $B$ is finite and set $G\coloneqq K\setminus B$. Then $G$ is infinite and $\generator(x)\in K\setminus\inbrace{x}$ for every $x\in G$. If some $y\in K$ has infinite fiber $\inbrace{x\in G : \generator(x)=y}$, choose distinct $a_1,a_2,\ldots$ from that fiber and a repetition-free enumeration $\inparen{z_i}_{i\in\N}$ of $K$; then $y,a_1,z_1,a_2,z_2,\ldots$ is finitely repeating, and every round with current example $a_i$ fails because $\generator(a_i)=y\in S_n$. Otherwise every fiber is finite, so $\generator(G)$ is infinite; choose distinct $a_1,a_2,\ldots\in G$ with distinct images $\generator(a_1),\generator(a_2),\ldots$, and let $\inparen{z_i}_{i\in\N}$ be a repetition-free enumeration of $K$. Then $\generator(a_1),a_1,z_1,\generator(a_2),a_2,z_2,\ldots$ is finitely repeating, and again every round with current example $a_i$ fails because $\generator(a_i)\in S_n$. Thus no infinite language is generable by a memoryless element-based generator.

\paragraph{Index-based generators.}
For the index-based part, it suffices to give a two-language counterexample. Let
\[
L_1\coloneqq \inbrace{n\in\N : n\equiv 0 \text{ or } 1 \pmod 4}
\qquad\text{and}\qquad
L_2\coloneqq \inbrace{n\in\N : n\equiv 0 \text{ or } 2 \pmod 4}\,.
\]
Their intersection $C\coloneqq L_1\cap L_2=\inbrace{n\in\N : n\equiv 0 \pmod 4}$ is infinite, while neither language contains the other. Let $\generator$ be any memoryless index-based generator for $\inbrace{L_1,L_2}$. On each $x\in C$, {the generator must output one of the two allowed hypotheses, $L_1$ or $L_2$}, so one of the sets $A_1\coloneqq \inbrace{x\in C : \generator(x)=L_1}$ and $A_2\coloneqq \inbrace{x\in C : \generator(x)=L_2}$ is infinite. If $A_1$ is infinite, choose distinct $a_1,a_2,\ldots\in A_1$ and a repetition-free enumeration $\inparen{z_i}_{i\in\N}$ of $L_2$; then $a_1,z_1,a_2,z_2,\ldots$ is a finitely repeating enumeration of $L_2$, and every round with current example $a_i$ fails because the generator outputs $L_1\not\subseteq L_2$. If $A_2$ is infinite, the symmetric construction with target $L_1$ gives infinitely many failures. Hence this two-language collection is not generable by any memoryless index-based generator.
\end{proof}

\section{Density Bounds for Generators with Limited Memory}
\label{sec:density-bounds}

{In this section we study density guarantees for generators with bounded memory. Generation in the limit asks only that outputs eventually lie in the target language; density additionally asks what fraction of the target language those outputs cover.}
We work throughout with set-based generators, and unless stated otherwise we continue to assume finitely repeating enumerations in the sense of \cref{def:finitely-repeating}.

The first question is what information about past examples the generator is allowed to retain.
{We compare three models: the memoryless model (which was also studied in the previous section), a sliding window of recent examples, and an adaptive buffer of examples chosen by the learner.}

\begin{definition}[Limited-memory set-based generators]
\label{def:limited-memory-generators}
A set-based generator is said to be of one of the following types.
\begin{enumerate}[itemsep=2pt,leftmargin=15pt]
    \item \textbf{Memoryless:} If it is memoryless in the sense of \cref{def:memoryless-generator}.

    \item \textbf{Sliding window of width $W$:} If, for each round $t$, the generator's output depends only on the block of the most recent $W$ examples seen so far: namely $x_1,\ldots,x_t$ when $t<W$, and $x_{t-W+1},\ldots,x_t$ when $t\geq W$.

    \item \textbf{Adaptive buffer of size $b$:} If the generator maintains a buffer $B_t\subseteq S_t$ with $\abs{B_t}\leq b$, initialized with $B_0=\emptyset$. On round $t$, both the output $G_t$ and the updated buffer $B_t$ depend only on the current example $x_t$ and the previous buffer $B_{t-1}$.
\end{enumerate}
\end{definition}
{Here, crucially, the sliding window keeps the most recent examples, informative or not, while the adaptive buffer  can keep an informative example once it appears, regardless of what arrives later.}
In particular, a sliding window of width $1$ is just the memoryless model, while the adaptive buffer of size $1$ is (weakly) more powerful than the memoryless model.

\paragraph{Density of outputs.}
{We use the same notions of density as \citet{kleinberg2025density}.} 
We compute all densities with respect to the fixed canonical ordering of $\cX$ from \cref{sec:model-prelim}.
Because every target language is infinite, a single output element has density $0$ in the target, so density questions are only meaningful for set-based outputs.

\begin{definition}[Upper and lower density]
\label{def:upper-lower-density}
Let $K\subseteq \cX$ be infinite, and list its elements in the canonical order inherited from $\cX$ as $K=\inbrace{\ell_1,\ell_2,\ldots}$. For $n\in\N$, write $K_{\leq n}\coloneqq \inbrace{\ell_1,\ldots,\ell_n}$. For any $S\subseteq \cX$, define
\[
\mu_{\rm up}(S;K)
~\coloneqq~ \limsup_{n\to\infty}~ \frac{\abs{S\cap K_{\leq n}}}{n}
\qquadand 
\mu_{\rm low}(S;K)
~\coloneqq~
\liminf_{n\to\infty}~ \frac{\abs{S\cap K_{\leq n}}}{n}\,.
\]
\end{definition}
We use $\limsup$ and $\liminf$ in the standard sense; in particular, $\mu_{\rm low}(S;K)\leq \mu_{\rm up}(S;K)$ for every $S$ and $K$. The two can nevertheless be as far apart as possible. For example, if $K=\N$ and
\[
S=\bigcup_{r\geq 1}\inbrace{n\in\N : (2r)!<n\leq (2r+1)!}\,,
\]
then $S$ consists of alternating blocks whose lengths grow very quickly. Along the subsequence $n=(2r)!$, only earlier blocks contribute, and their total size is negligible compared with $(2r)!$, so $\abs{S\cap K_{\leq n}}/n\to 0$. Along $n=(2r+1)!$, the omitted part consists only of earlier gaps, which are negligible compared with $(2r+1)!$, so the same ratio tends to $1$. Hence $\mu_{\rm low}(S;\N)=0$ and $\mu_{\rm up}(S;\N)=1$, so the gap between lower and upper density is as large as possible.

If a generator produces outputs $G_1,G_2,\ldots$ for a target language $K$, there are four natural ways to aggregate density over time:
\[
\liminf_{t\to\infty}\mu_{\rm low}(G_t;K)\,,\quad
\limsup_{t\to\infty}\mu_{\rm low}(G_t;K)\,,\quad
\liminf_{t\to\infty}\mu_{\rm up}(G_t;K)\,,\quadand 
\limsup_{t\to\infty}\mu_{\rm up}(G_t;K)\,.
\]
The $\liminf_t$ quantities require that outputs be dense from some point onward, whereas the $\limsup_t$ quantities only require that dense outputs appear infinitely often.
As we will show, three of these four notions collapse to zero in the bounded-memory models we consider, even for small finite collections. We therefore focus on the weakest:
\[
\limsup_{t\to\infty}\mu_{\rm up}(G_t;K)\,.
\]
Further, with bounded memory, there exist countable collections on which no generator achieves a positive $\limsup_t\mu_{\rm up}(G_t;K)$ for every target $K$. We therefore restrict attention to finite collections and ask for the best density guarantee a generator can achieve uniformly over all collections of a given size $k$.
For a particular collection $\cL$ of size $k$, we are allowed to tailor a generator to $\cL$; the adversary then chooses the hardest target language in $\cL$ and the hardest finitely repeating enumeration of it. %
The minimax upper density is the largest value that the generator can still guarantee in this game:
\begin{definition}[Minimax upper density]
\label{def:minimax-density-finite-collections}
Fix a memory model $\mathcal M$ from \cref{def:limited-memory-generators}, where for sliding-window and adaptive-buffer generators the parameter $W$ or $b$ is fixed as part of the model. For $k\geq 1$, define $\rho_{\rm up}^{\mathcal M}(k)$ to be the supremum over all $\sigma\in[0,1]$ with the following property:

For every collection $\cL$ of $k$ infinite languages, there exists a set-based generator $\generator$ in model $\mathcal M$ that generates from every language in $\cL$ in the limit and such that, for every target language $K\in \cL$ and every finitely repeating enumeration of $K$, if $G_1,G_2,\ldots$ are the outputs of $\generator$, then
\[
    \limsup_{t\to\infty}\mu_{\rm up}(G_t;K)\geq \sigma\,.
    \tag{Minimax Upper Density}
\]
\end{definition}
Equivalently, an upper bound $\rho_{\rm up}^{\mathcal M}(k)\leq \sigma$ means that there is some collection $\cL$ of $k$ infinite languages such that every generator in model $\mathcal M$ can be forced, by a suitable choice of target language and finitely repeating enumeration, to have
\[
\limsup_{t\to\infty}\mu_{\rm up}(G_t;K)\leq \sigma\,.
\]
For the three models above, we write $\rho_{\rm up}^{\rm mem}(k)$, $\rho_{\rm up}^{\rm win}(k,W)$, and $\rho_{\rm up}^{\rm buf}(k,b)$; when no ambiguity can arise, we abbreviate these to $\rho_{\rm up}(k)$, $\rho_{\rm up}(k,W)$, and $\rho_{\rm up}(k,b)$, respectively. 
The next sections determine these minimax densities for each of the three memory models.

\subsection{Density Bounds for Memoryless Generators}
{We begin with memoryless generators. With no memory of past examples, such a generator cannot distinguish target languages that share the current example. Therefore, except for finitely many inputs, its output on $x$ must lie inside every language in the collection that contains $x$.
The density is then controlled by how large these intersections can be inside the target language, and the answer is governed by the width of the Boolean lattice.}

\begin{theorem}[Memoryless Minimax Upper Density for Finite Collections]\label{thm:minimax-upper-density-memoryless-finite}
    In the memoryless generation setting, for every finite
    number $k \in \N$ it holds that
    \[
        \rho_{\rm up}(k) = \frac{1}{\binom{k-1}{\floor{ (k-1)/2}}}\,.
    \]
\end{theorem}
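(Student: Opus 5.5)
The proof has two halves: an achievability bound $\rho_{\rm up}(k)\geq 1/N$ and a matching construction showing $\rho_{\rm up}(k)\leq 1/N$, where $N\coloneqq\binom{k-1}{\floor{(k-1)/2}}$.

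\textbf{Lower bound.} For a finite collection $\cL$ of size $k$, I will use the canonical intersection generator. On input $x$, let $I_x\coloneqq\bigcap\inbrace{L\in\cL: x\in L}$. The generator outputs $I_x$ when $I_x$ is infinite, and some fixed infinite set otherwise.

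\emph{Validity.} Fix a target $K$. Every $x\in K$ with $I_x$ infinite yields an output $I_x\subseteq K$. The points $x\in K$ with $I_x$ finite lie in the finite union of finite intersections of subfamilies of $\cL$, so there are only finitely many of them. Under finitely repeating enumerations they therefore occur only finitely often, and the generator generates $K$ in the limit.

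\emph{Density.} For $B\subseteq\cL\setminus\inbrace{K}$, define $R'_B\coloneqq\inbrace{x\in K: \inbrace{L\neq K: x\in L}=B}$ and $I'_B\coloneqq K\cap\bigcap B$. The regions $R'_B$ partition $K$, and $R'_B\subseteq I'_B$; moreover $I'_B$ is exactly the output on every $x\in R'_B$. Take a symmetric chain decomposition (\cref{fact:symmetric-chain}) of $2^{\cL\setminus\inbrace{K}}\cong 2^{[k-1]}$; it has exactly $N$ chains. The chain-unions $\bigcup_{B\in C}R'_B$ partition $K$, so by finite subadditivity of $\mu_{\rm up}$ some chain $C$ has $\mu_{\rm up}\inparen{\bigcup_{B\in C}R'_B;K}\geq 1/N$.

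Let $B^\ast$ be the smallest element of $C$ (with respect to inclusion) such that $R'_{B^\ast}$ is infinite. The regions $R'_B$ for $B\in C$ below $B^\ast$ are finite and contribute no density. Every $B\in C$ above $B^\ast$ satisfies $R'_B\subseteq I'_B\subseteq I'_{B^\ast}$, since larger $B$ means a smaller intersection. Hence $\mu_{\rm up}(I'_{B^\ast};K)\geq 1/N$. Because $R'_{B^\ast}$ is infinite, every enumeration visits it infinitely often, and each visit produces output $I'_{B^\ast}$. This gives $\limsup_t\mu_{\rm up}(G_t;K)\geq 1/N$.

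One edge case remains: every $R'_B$ on the chain might be finite. In that case the chain-union is finite and has density $0$, which contradicts the choice of $C$, so the case cannot occur.

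\textbf{Upper bound.} This is the Sperner construction from the overview. Take $K=\N$ and partition it into $N$ disjoint infinite sets $A_i$, each with natural density exactly $1/N$; residue classes modulo $N$ work. Index the $A_i$ by the $\floor{n/2}$-subsets $S_i$ of $[n]$, where $n=k-1$. Set $L_j=\bigcup_{i:j\in S_i}A_i$ for $j\in[n]$, and let $\cL=\inbrace{K,L_1,\dots,L_n}$. Each $L_j$ is infinite, since $n\geq 1$ when $k\geq2$; the case $k=1$ is trivial.

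Now fix any memoryless $\generator$ that generates $\cL$ in the limit. The key claim is that, for each $j$, the set $\inbrace{x\in L_j:\generator(x)\not\subseteq L_j}$ is finite. Otherwise, interleave infinitely many distinct such points into a repetition-free enumeration of $L_j$. This gives a finitely repeating enumeration of $L_j$ on which $\generator$ fails infinitely often.

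So for all but finitely many $x\in A_i$, we have $\generator(x)\subseteq\bigcap_{j\in S_i}L_j$. This intersection equals the union of the $A_{i'}$ with $S_i\subseteq S_{i'}$. Since all the $S_{i'}$ have the same size, the only such $i'$ is $i$, and the intersection is $A_i$ itself. Take the adversary to enumerate $K$ repetition-free. Eventually every output is contained in some $A_i$, so $\limsup_t\mu_{\rm up}(G_t;K)\leq 1/N$.

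The main obstacle I expect is the chain argument in the lower bound. It must handle chains whose bottom regions are finite, and it relies on monotonicity of $I'_B$ along a chain. Orienting the chain in the right direction, from small $B$ to large $B$ with nested decreasing intersections, is the crux.
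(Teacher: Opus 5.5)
Your proposal is correct and follows the paper's proof essentially step for step: the canonical intersection generator with a symmetric chain decomposition and pigeonhole for the lower bound, and the middle-layer Sperner construction for the upper bound. One small slip is that for $k=2$ you have $n=1$ and $\lfloor n/2\rfloor=0$, so the only $S_i$ is $\emptyset$ and $L_1=\emptyset$. The construction therefore needs $k\geq 3$, which also makes every $S_i$ nonempty and the $L_j$ distinct from each other and from $K$; the case $k=2$ is trivial because the claimed bound is $1$, which is exactly how the paper handles it.
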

This result characterizes the minimax upper density in the memoryless setting. For lower density, we show the following strong negative result.

\begin{theorem}[{No uniform positive lower-density guarantee for size-$k$ collections}]
\label{thm:no-uniform-positive-density}
Fix any integer $k\geq 3$. There exists a size-$k$ collection $\cL$ of infinite languages
and a target $K\in \cL$ such that every memoryless set-based generator that generates
from all languages in $\cL$ in the limit (under {finitely repeating enumerations})
fails to achieve {lower} density $\sigma$ infinitely often on $K$ for every $\sigma>0$.
In particular,, for any repetition-free enumeration $E_K=\inparen{x_1,x_2,\dots}$ of $K$,
there is $t^\ast$ such that for all $t>t^\ast$,
\[
\mu_{\rm low}\!\inparen{\generator(x_t);K}=0\,.
\]
\end{theorem}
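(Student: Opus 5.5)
The plan is an explicit construction in which every point of the target lies in a region whose forced safe set has lower density zero. The key tool is the factorial-block set from the discussion after \cref{def:upper-lower-density}, which has lower density $0$ and upper density $1$, and whose complement therefore also has lower density $0$.

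\textbf{Construction.} Identify $\cX$ with $\N$ and let $K\coloneqq 2\N$ be the even numbers, listed as $K=\inbrace{\ell_1,\ell_2,\ldots}$. Set
\[
A\coloneqq \inbrace{\ell_n : (2r)!<n\leq (2r+1)!\text{ for some } r\geq 1}
\qquadand
B\coloneqq K\setminus A\,.
\]
Both $A$ and $B$ are infinite. Moreover $\mu_{\rm low}(A;K)=0$, and $\mu_{\rm low}(B;K)=0$ as well, since $\mu_{\rm low}(B;K)=1-\mu_{\rm up}(A;K)=0$. Split the odd numbers into disjoint infinite sets $P,Q,R_4,\ldots,R_k$, and define
\[
L_1\coloneqq K\,,\qquad L_2\coloneqq A\cup P\,,\qquad L_3\coloneqq B\cup Q\,,\qquad L_j\coloneqq R_j \ \ (4\leq j\leq k)\,.
\]
This is a collection of $k$ distinct infinite languages. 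The extra languages are disjoint from $K$, so they never contain an example of $K$.

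\textbf{Step 1: only finitely many bad points.} Fix a memoryless generator $\generator$ that generates every language in $\cL$ under finitely repeating enumerations. Call $x\in A$ bad if $\generator(x)\not\subseteq K\cap L_2=A$, and call $x\in B$ bad if $\generator(x)\not\subseteq K\cap L_3=B$. I claim there are only finitely many bad points.

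Suppose infinitely many $x\in A$ are bad. Then infinitely many of them satisfy $\generator(x)\not\subseteq K$, or infinitely many satisfy $\generator(x)\not\subseteq L_2$. In the first case take target $K$; in the second take target $L_2$. In either case the target contains all of $A$. Enumerate the target repetition-free by interleaving distinct bad points $a_1,a_2,\ldots$ with the remaining elements, as in the proof of \cref{thm:memoryless-set-based-necessary}. This enumeration is finitely repeating, and the generator errs at every round where some $a_i$ is shown. That contradicts generation in the limit. The argument for $B$ is identical, using targets $K$ and $L_3$.

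\textbf{Step 2: conclusion.} Let $E_K=(x_1,x_2,\ldots)$ be any repetition-free enumeration of $K$. Each of the finitely many bad points occurs at most once, so there is $t^\ast$ after which every $x_t$ is good. For $t>t^\ast$, either $x_t\in A$ and $\generator(x_t)\subseteq A$, or $x_t\in B$ and $\generator(x_t)\subseteq B$. Lower density is monotone, so
\[
\mu_{\rm low}\inparen{\generator(x_t);K}\leq \max\inbrace{\mu_{\rm low}(A;K),\mu_{\rm low}(B;K)}=0\,.
\]
This gives the displayed claim. It also shows that lower density $\sigma>0$ fails at all late rounds, and in particular infinitely often, for every $\sigma>0$.

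\textbf{Main obstacle.} The delicate point is Step 1. A bad point must be bad for a single target that contains it, and that target must also admit a finitely repeating enumeration hitting infinitely many bad points. The split into ``bad for $K$'' versus ``bad for $L_2$'' (respectively $L_3$) handles this, because each such point lies in both languages.

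A secondary check is that $B$ also has lower density $0$. Along $n=(2r+1)!$, the set $B$ occupies only a vanishing fraction of $K_{\leq n}$, which is exactly the complement of the computation given after \cref{def:upper-lower-density}.
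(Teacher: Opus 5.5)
Your proof is correct and follows essentially the paper's route. Like the paper, you partition $K$ into pieces of lower density zero that are cut out by other languages of $\cL$, use the finite error budget to force each late memoryless output into the piece containing the current example, and conclude by monotonicity of $\mu_{\rm low}$. The only difference is that you use two complementary pieces (the factorial-block set and its complement, realized as $K\cap L_2$ and $K\cap L_3$) padded with languages disjoint from $K$, whereas the paper takes $\cL=\{K,A_1,\dots,A_{k-1}\}$ with the $(k-1)$-piece partition of \cref{lem:zero-density-partition}; your variant shows that two pieces already suffice.
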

\noindent {Together, these two theorems separate the two density notions. Upper density admits a positive minimax guarantee for every finite collection size, although the guarantee decays quickly with $k$: at rate $\nfrac{\sqrt{k}}{2^{k-1}}$. Lower density admits no positive uniform guarantee once $k\geq 3$.}
The remainder of the section proves both theorems, with most of the work devoted to the upper-density minimax.

    \subsubsection{{No Positive Lower-Density Guarantee}}   
   {We first prove \cref{thm:no-uniform-positive-density}. The proof uses a simple partition argument.}

\begin{definition}[{Partition inside a collection}]
    {A language $K \in \cL$ is partitioned inside $\cL$ by $L_1,\dots,L_k$ if $L_1,\dots,L_k\in\cL$ are pairwise disjoint and $K=\bigcup_{i=1}^k L_i$.}
\end{definition}

\begin{lemma}[{Lower-density bound from partitions}]
\label{thm:densityCollapse}
    Let $\cL$ be a countable collection and $\generator: \cX \to [\cX]^\infty$ be a memoryless set-based generator that generates from every $L \in \cL$ in the limit under repetition-free enumerations. If $K \in \cL$ is partitioned by $L_1,\dots,L_k\in \cL$, then for any repetition-free enumeration of $K$, there exists a time $t^\star$ such that for all $t > t^\star$:
    \[
        \mu_{\rm low}(\generator(x_t), K) ~\leq~ \max\nolimits_{i \in [k]} \mu_{\rm low}(L_i, K)\,.
    \]
    In particular, if the partition is uniform (\ie, $\mu_{\rm low}(L_i, K) = \nfrac{1}{k}$ for all $i$), the generator cannot infinitely often output a set with density greater than $\nfrac{1}{k}$.
\end{lemma}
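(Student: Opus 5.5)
The plan is to show that, outside a finite set of bad inputs, every output of $\generator$ on an element of $K$ is contained in a single part $L_i$ of the partition. Taking lower densities then gives the bound.

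First, for each $i\in[k]$ define the bad set
\[
B_i \coloneqq \inbrace{x\in L_i : \generator(x)\not\subseteq L_i}\,.
\]
We claim each $B_i$ is finite. Suppose $B_i$ were infinite. Take a repetition-free enumeration $(z_j)_{j\in\N}$ of $L_i \setminus B_i$, and interleave it with the distinct elements of $B_i$. The resulting sequence $b_1,z_1,b_2,z_2,\ldots$ is a repetition-free enumeration of $L_i$, possibly with some padding from $z$'s if $L_i\setminus B_i$ is finite. At every round whose current example is some $b_j$, the generator outputs a set not contained in $L_i$. Since $L_i\in\cL$, this contradicts that $\generator$ generates from $L_i$ in the limit under repetition-free enumerations. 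This is the same amplification idea used in the proof of \cref{thm:memoryless-set-based-necessary}. Hence $B\coloneqq\bigcup_{i\in[k]} B_i$ is finite.

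Next, fix any repetition-free enumeration $(x_t)$ of $K$. Every element of $B$ appears exactly once, so there is a time $t^\star$ after which $x_t\notin B$. For $t>t^\star$, the partition gives a unique $i$ with $x_t\in L_i$, and since $x_t\notin B_i$ we get $\generator(x_t)\subseteq L_i$. The lower density is monotone in the set, because $\abs{S\cap K_{\leq n}}$ is monotone in $S$ and $\liminf$ preserves the inequality. Therefore
\[
\mu_{\rm low}(\generator(x_t);K)\leq \mu_{\rm low}(L_i;K)\leq \max_{j\in[k]}\mu_{\rm low}(L_j;K)\,.
\]
The uniform case follows immediately.

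The only delicate step is the finiteness of $B_i$. The key observation is that the argument applies the generation hypothesis to the part $L_i$ as a target, not to $K$. This is why we need each part to belong to $\cL$: it forces $\generator(x)\subseteq L_i$ rather than merely $\generator(x)\subseteq K$, which would give no density bound. We must also check that the constructed enumeration is repetition-free and covers all of $L_i$; both are clear from the interleaving construction.
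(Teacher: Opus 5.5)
Your proposal is correct and follows essentially the same argument as the paper: each bad set $B_i$ is finite because the generation hypothesis applies to every part $L_i\in\cL$, the union $B$ is therefore a finite error budget, and after its last appearance in the repetition-free enumeration every output lies inside a single part, so monotonicity of lower density gives the bound. Your interleaving construction is more than you need, since any repetition-free enumeration of $L_i$ already visits every point of an infinite $B_i$, which is how the paper argues it.
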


\begin{proof}
    For each $i \in \{1, \dots, k\}$, let $B_i \coloneqq \{x \in L_i \mid \generator(x) \not\subseteq L_i\}$ be the set of elements where $\generator$ fails to be consistent with language $L_i$. By the success condition of \cref{def:consistentGeneration}, each $B_i$ must be finite{: if some $B_i$ were infinite, any repetition-free enumeration of $L_i$ would encounter infinitely many points from $B_i$, contradicting generation in the limit}.
    
    Define the ``total error budget'' $B \coloneqq \bigcup_{i=1}^k B_i$. Since $B$ is a finite union of finite sets, $B$ is finite. Now, consider any $x \in K \setminus B$.
    Because $\{L_i\}_{i=1}^k$ partitions $K$, there exists a unique $j \in \{1, \dots, k\}$ such that $x \in L_j$. Since $x \notin B$, it follows that $x \notin B_j$, which implies:
    \[
        \generator(x) \subseteq L_j\,.
    \]
    {By the monotonicity of the density and $\generator(x)\subseteq L_j$,} $\mu_{\rm low}(\generator(x), K) \leq \mu_{\rm low}(L_j, K) \leq \max_i \mu_{\rm low}(L_i, K)$.
    
    Finally, since the enumeration $E_K = (x_1, x_2, \dots)$ is repetition-free, the finite set $B$ can only appear in the stream for a finite number of rounds. Let $t^\star = \max \{t \mid x_t \in B\}$. For all $t > t^\star$, the example $x_t$ is in $K \setminus B$, and the density bound holds.
\end{proof}
    Next we need a partition lemma for countable sets.

        \begin{lemma}[Zero-lower-density partitions of countable sets]
\label{lem:zero-density-partition}
Let $K$ be any countably infinite set with a canonical enumeration
$K=\inbrace{x_1,x_2,\dots}$, and let $m\geq 2$ be an integer.
Then there exist pairwise disjoint infinite subsets $A_1,\dots,A_m\subseteq K$
with $\bigcup_{i=1}^m A_i = K$ such that for every $i\in\insquare{m}$,
\[
\mu_{\rm low}(A_i;K)=0\,,
\qquad\text{where}\qquad
\mu_{\rm low}(A_i;K)\coloneqq \liminf_{n\to\infty}\frac{\abs{A_i\cap K^n}}{n}\,,
\ \ K^n\coloneqq \inbrace{x_1,\dots,x_n}\,.
\]
\end{lemma}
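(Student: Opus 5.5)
We partition $K$ by positions in its canonical enumeration, using blocks of rapidly increasing length assigned to the sets $A_1,\dots,A_m$ in cyclic order. Concretely, fix a rapidly increasing sequence of integers $0=N_0<N_1<N_2<\cdots$ with $N_{r-1}/N_r\to 0$; for instance $N_r\coloneqq r!$ for $r\geq 1$ works, since $N_{r-1}/N_r=1/r$. Define the $r$-th block as the index interval $I_r\coloneqq\inbrace{N_{r-1}+1,\dots,N_r}$. Assign block $I_r$ to the set with index $i(r)\coloneqq 1+(r \bmod m)$, and set
\[
A_i\coloneqq\inbrace{x_n : n\in I_r \text{ for some } r \text{ with } i(r)=i}\,.
\]
Because the blocks $I_r$ partition $\N$, the sets $A_i$ are pairwise disjoint and their union is $K$. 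Each $A_i$ receives infinitely many nonempty blocks, so each $A_i$ is infinite.

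For the density claim, we look at the positions where the ratio $\abs{A_i\cap K^n}/n$ is small. Fix $i$ and any $r$ with $i(r+1)\neq i$; because $m\geq 2$, there are infinitely many such $r$. Take $n=N_{r+1}$. Block $I_{r+1}$ does not belong to $A_i$, so every element of $A_i$ counted in $K^{N_{r+1}}$ comes from the first $N_r$ indices. Hence
\[
\frac{\abs{A_i\cap K^{N_{r+1}}}}{N_{r+1}}\leq \frac{N_r}{N_{r+1}}=\frac{1}{r+1}\,.
\]
The right-hand side tends to $0$ along this infinite subsequence of $r$. The ratio is also nonnegative, so $\mu_{\rm low}(A_i;K)=0$.

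The only delicate step is making sure that for every $i$ there are infinitely many blocks not assigned to $i$ that end at positions dominating everything before them. Cyclic assignment handles this, since with $m\geq 2$ each $i$ is skipped infinitely often. Factorial growth of the block boundaries then makes the earlier contribution negligible, which is the same trick used in the paper's example separating $\mu_{\rm low}$ from $\mu_{\rm up}$.
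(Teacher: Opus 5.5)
Your proof is correct and follows essentially the paper's argument: consecutive index blocks of rapidly growing length are assigned cyclically to $A_1,\dots,A_m$, and the ratio $|A_i\cap K^n|/n$ is evaluated at the right endpoint of a block not assigned to $A_i$, where $A_i$ can only contain indices from earlier blocks. The only difference is the choice of block endpoints: you use factorial endpoints $N_r=r!$, giving the bound $1/(r+1)$, while the paper uses block lengths $\ell_t=t^2(1+s_{t-1})$, giving the bound $1/(1+t^2)$.
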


\begin{proof}
We define a set of ``rapidly'' growing blocks.
Let $s_0\coloneqq 0$. For each $t\geq 1$, define
\[
\ell_t \coloneqq t^2\inparen{1+s_{t-1}}\,,
\qquad
s_t \coloneqq s_{t-1}+\ell_t\,,
\qquad
B_t \coloneqq \inbrace{s_{t-1}+1,\dots,s_t}\subseteq \N\,.
\]
Then $\inbrace{B_t}_{t\geq 1}$ is a partition of $\N$ into consecutive nonempty blocks, where the $t$-th block has length $\ell_t$.
Observe that:
\begin{equation}
\label{eq:st-product}
1+s_t = \inparen{1+t^2}\inparen{1+s_{t-1}}
\quad\text{for all }t\geq 1\,,
\end{equation}
Next, for each $i\in\insquare{m}$ let
\[
I_i\coloneqq \inbrace{t\in\N: t\equiv i \pmod m}
\qquadand 
A_i \coloneqq \bigcup_{t\in I_i} \inbrace{x_j : j\in B_t}\,.
\]
Because the blocks $B_t$ partition $\N$, the sets $A_1,\dots,A_m$ are pairwise disjoint and satisfy
$\bigcup_{i=1}^m A_i
=
K.$
Moreover each $A_i$ is infinite. 
It remains to show that for each 
$i\in\insquare{m}$, $\liminf_{n\to\infty}\frac{\abs{A_i\cap K^n}}{n}=0.$
Fix any $i\in\insquare{m}$.
Consider any block index $t\geq 2$ with $t\notin I_i$ (there are infinitely many such $t$ because $I_i$
contains exactly one residue class modulo $m$).
Set $n\coloneqq s_t$. Then $K^{s_t}=\inbrace{x_1,\dots,x_{s_t}}$ consists of exactly those
enumerated elements whose indices lie in $\bigcup_{r\leq t} B_r$.
Since $t\notin I_i$, among the blocks $\inbrace{B_1,\dots,B_t}$, the ones assigned to bin $i$
are contained in $\inbrace{B_1,\dots,B_{t-1}}$. Therefore
\begin{align*}
\abs{A_i\cap K^{s_t}}
&=
\abs{\Bigl(\bigcup\nolimits_{r\in I_i} \inbrace{x_j:j\in B_r}\Bigr)\cap \inbrace{x_1,\dots,x_{s_t}}}\\
&=
\abs{\bigcup\nolimits_{\substack{r\in I_i,~ r\leq t}} \inbrace{x_j:j\in B_r}}\\
&\leq
\abs{\bigcup\nolimits_{r\leq t-1} \inbrace{x_j:j\in B_r}}\\
&=
s_{t-1}\,.
\end{align*}
Dividing by $s_t$ gives
\begin{equation}
\label{eq:ratio-bound}
\frac{\abs{A_i\cap K^{s_t}}}{s_t}
\leq
\frac{s_{t-1}}{s_t}
=
\frac{s_{t-1}}{s_{t-1}+\ell_t}
=
\frac{s_{t-1}}{s_{t-1}+t^2\inparen{1+s_{t-1}}}
\leq
\frac{s_{t-1}}{\inparen{1+t^2}s_{t-1}}
=
\frac{1}{1+t^2}\,,
\end{equation}
where we used $s_{t-1}\geq 1$ for $t\geq 2$ and $\ell_t=t^2\inparen{1+s_{t-1}}$.
Since there are infinitely many $t\notin I_i$, inequality~\eqref{eq:ratio-bound} holds along an infinite subsequence
$n=s_t\to\infty$, and thus
\[
\liminf_{n\to\infty}\frac{\abs{A_i\cap K^n}}{n}
~\leq~
\lim_{t\to\infty,\, t\notin I_i}\frac{\abs{A_i\cap K^{s_t}}}{s_t}
~\leq~
\lim_{t\to\infty}\frac{1}{1+t^2}
~=~
0\,.
\]
Because the ratio $\abs{A_i\cap K^n}/n$ is always nonnegative, the $\liminf$ cannot be negative, hence it equals $0$.
This shows $\mu_{\rm low}(A_i;K)=0$ for every $1\leq i\leq m$.
\end{proof}
Having shown \cref{lem:zero-density-partition}, we are now ready to prove \cref{thm:no-uniform-positive-density}.

\begin{proof}[Proof of \cref{thm:no-uniform-positive-density}]
Fix {$k\geq 3$} and let $K$ be any countably infinite language (for concreteness, $K=\N$).
Apply \cref{lem:zero-density-partition} with $m=k-1$ to obtain a partition
$K=A_1\cup\cdots\cup A_{k-1}$ into infinite sets with $\mu_{\rm low}(A_i;K)=0$ for all $i$.
Let
\[
\cL \coloneqq \inbrace{K, A_1,\dots,A_{k-1}}\,.
\]
Now let $\generator$ be any memoryless set-based generator that generates from every language in $\cL$
in the limit under {finitely repeating} enumerations.
By \cref{thm:densityCollapse} (applied to {the partition of $K$ by $A_1,\dots,A_{k-1}$ inside $\cL$}),
for any repetition-free enumeration of $K$ there exists $t^\ast$ such that for all $t>t^\ast$,
\[
\mu_{\rm low}\!\inparen{\generator(x_t);K}
~\leq~
\max_{i\in\insquare{k-1}} \mu_{\rm low}(A_i;K)
~=~
0\,.
\]
Hence $\mu_{\rm low}\!\inparen{\generator(x_t);K}=0$ for all sufficiently large $t$, so $\generator$ cannot
achieve density $\sigma$ infinitely often on $K$ for any $\sigma>0$.
\end{proof}

\subsubsection{Minimax Upper Set-Density for Finite Collections}

To establish the sharp density bounds for memoryless generators on finite collections, we rely on two foundational results from order theory regarding the Boolean lattice of subsets.

\begin{fact}[Sperner's Theorem]\label{fact:sperner}
    Let $S$ be a finite set of size $n$, and let $\cP(S)$ be its power set. An antichain is a subcollection $\cA \subseteq \cP(S)$ such that for any $A, B \in \cA$ with $A \neq B$, neither $A \subseteq B$ nor $B \subseteq A$ holds. The maximum possible size of an antichain in $\cP(S)$ is exactly $\binom{n}{\floor{n/2}}$.
\end{fact}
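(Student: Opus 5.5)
The plan is to prove Sperner's theorem by establishing the two inequalities separately.

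The lower bound $\binom{n}{\floor{n/2}}$ is immediate. Take the middle layer $\cA=\inbrace{A\subseteq S:\abs{A}=\floor{n/2}}$. Two distinct sets of the same cardinality cannot be nested, so $\cA$ is an antichain of the stated size.

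For the upper bound I would use the LYM (Lubell–Yamamoto–Meshalkin) double-counting argument over maximal chains. Identify $S$ with $[n]$. A maximal chain $\emptyset=C_0\subset C_1\subset\cdots\subset C_n=[n]$ is determined by a permutation $\pi$ of $[n]$, via $C_i=\inbrace{\pi(1),\dots,\pi(i)}$. Hence there are exactly $n!$ maximal chains. Fix a set $A$ with $\abs{A}=a$. The maximal chains through $A$ are those whose permutation lists the elements of $A$ first, in any order, and then the remaining elements. There are $a!\,(n-a)!$ such chains.

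Now let $\cA$ be any antichain. Each maximal chain is totally ordered, so it can contain at most one member of $\cA$. Summing over $A\in\cA$ the number of chains through $A$ therefore counts each chain at most once, which gives
\[
\sum_{A\in\cA} \abs{A}!\,(n-\abs{A})! \;\leq\; n!\,,
\qquad\text{equivalently}\qquad
\sum_{A\in\cA}\binom{n}{\abs{A}}^{-1}\leq 1\,.
\]
Binomial coefficients are unimodal, so $\binom{n}{a}\leq\binom{n}{\floor{n/2}}$ for every $a$. Each term in the sum is therefore at least $\binom{n}{\floor{n/2}}^{-1}$, and it follows that $\abs{\cA}\leq\binom{n}{\floor{n/2}}$.

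There is no real obstacle in this argument. The one point needing care is the bijection between permutations and maximal chains, which justifies both counts. The unimodality of $a\mapsto\binom{n}{a}$ is standard: the ratio $\binom{n}{a+1}/\binom{n}{a}=(n-a)/(a+1)$ is at least $1$ exactly when $a<n/2$, so the maximum is attained at $a=\floor{n/2}$.

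As an alternative upper bound, one could use the symmetric chain decomposition cited next in the paper. It partitions $\cP(S)$ into $\binom{n}{\floor{n/2}}$ chains, and an antichain meets each chain at most once, which gives the same bound. I would nonetheless present the LYM argument, since it is self-contained.
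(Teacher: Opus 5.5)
Your proof is correct and complete. The middle layer gives the lower bound. For the upper bound, the LYM double count works: each $A$ lies on exactly $|A|!\,(n-|A|)!$ of the $n!$ maximal chains, and each maximal chain meets an antichain at most once. Combined with unimodality of $a\mapsto\binom{n}{a}$, this gives $|\cA|\le\binom{n}{\lfloor n/2\rfloor}$.

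The paper gives no proof of this statement; it cites Sperner's theorem as a standard fact. In its proofs it only uses the easy half, namely that the $\lfloor n/2\rfloor$-subsets of $[n]$ form an antichain, in the hard instance of \cref{thm:upper-density-collapse}. The matching count on the achievability side comes instead from the symmetric chain decomposition of \cref{fact:symmetric-chain}, which is also cited without proof.

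Your closing remark identifies how the two cited facts fit together. Given a partition of $\cP(S)$ into $\binom{n}{\lfloor n/2\rfloor}$ chains, the upper bound follows by a one-line pigeonhole. That route is shorter if the decomposition is taken for granted, but constructing the decomposition (e.g.\ by induction on $n$) is more work than your argument. The decomposition is really structural information the paper needs for \cref{thm:sperner-achievability}, not for Sperner's bound itself. Your LYM route is self-contained and proves the stronger inequality $\sum_{A\in\cA}\binom{n}{|A|}^{-1}\le 1$, which makes it the better choice if you only want the statement at hand.
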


\begin{fact}[Symmetric Chain Decomposition \citep{engel1997sperner}]\label{fact:symmetric-chain}
    Let $S$ be a finite set of size $n$. A chain in $\cP(S)$ is a sequence of strictly increasing subsets $C_1 \subset C_2 \subset \dots \subset C_m$. The Boolean lattice $\cP(S)$ can be partitioned into exactly $\binom{n}{\floor{n/2}}$ disjoint symmetric chains, such that every subset in $\cP(S)$ belongs to exactly one chain.
\end{fact}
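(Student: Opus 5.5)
We will prove the stronger, more precise statement by induction on $n=\abs{S}$, using de Bruijn's construction. Call a chain $C_1\subset C_2\subset\cdots\subset C_m$ in $\cP(S)$ \emph{symmetric} if $\abs{C_{i+1}}=\abs{C_i}+1$ for all $i$ (so it is saturated) and $\abs{C_1}+\abs{C_m}=n$.

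The plan is to show that $\cP(S)$ can be partitioned into symmetric chains. The count $\binom{n}{\floor{n/2}}$ then falls out of a rank argument.

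\textbf{Base case.} For $n=0$ the single chain $\{\emptyset\}$ works. For $n=1$, with $S=\{s\}$, the chain $\emptyset\subset\{s\}$ works.

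\textbf{Inductive step.} Let $S=S'\cup\{s\}$ with $\abs{S'}=n-1$, and fix a symmetric chain decomposition of $\cP(S')$. Take a chain $C_1\subset\cdots\subset C_m$ in it, with $\abs{C_1}=r$ and $\abs{C_m}=n-1-r$. From it we form two chains in $\cP(S)$:
\[
C_1\subset\cdots\subset C_m\subset C_m\cup\{s\}
\qquad\text{and}\qquad
C_1\cup\{s\}\subset\cdots\subset C_{m-1}\cup\{s\}\,.
\]
The second chain is formed only when $m\geq 2$.

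\begin{itemize}
\item \emph{Saturation.} Both new chains are saturated, since consecutive sets differ by exactly one element.
\item \emph{Symmetry.} The first chain has end ranks $r$ and $n-r$, which sum to $n$. The second has end ranks $r+1$ and $(n-1-r-1)+1=n-1-r$, which also sum to $n$.
\item \emph{Partition.} Each $C_i$ appears exactly once, in the first chain. Each $C_i\cup\{s\}$ appears exactly once: in the first chain if $i=m$, and in the second if $i<m$. Every subset of $S$ is either some $A\subseteq S'$ or some $A\cup\{s\}$ with $A\subseteq S'$, and each such $A$ lies in exactly one old chain. Hence every subset of $S$ lies in exactly one new chain.
\end{itemize}

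\textbf{Counting the chains.} Let $C_1\subset\cdots\subset C_m$ be a symmetric chain in $\cP(S)$ with $\abs{C_1}=r$. Symmetry gives $r\leq n-r$, so $r\leq \floor{n/2}\leq n-r$. Since the chain is saturated, it contains exactly one set of size $\floor{n/2}$. Conversely, every set of size $\floor{n/2}$ lies in exactly one chain of the partition. The map sending each chain to its unique middle-rank set is therefore a bijection between the chains and $\binom{S}{\floor{n/2}}$. So there are exactly $\binom{n}{\floor{n/2}}$ chains.

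\textbf{Main obstacle.} The only delicate point is the bookkeeping in the inductive step. One must check the degenerate case $m=1$, where no second chain is formed, and confirm that the recomputed ranks remain symmetric with respect to $n$ rather than $n-1$. Both checks are the rank computations carried out above.
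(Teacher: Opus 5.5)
Your proof is correct and complete. It is the classical inductive construction of de Bruijn, van Ebbenhorst Tengbergen, and Kruyswijk, and you handle the bookkeeping properly: both new chains have end ranks summing to $n$, the degenerate case $m=1$ is treated, and every set lies in exactly one new chain. Your count of the chains, via the bijection between chains and $\floor{n/2}$-subsets, is also right. The paper gives no proof of this fact and only cites Engel's monograph, where essentially this standard argument appears, so your proposal supplies what the paper leaves to the reference.

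Two minor remarks. Your $n=1$ base case is redundant, since the inductive step applied to $n=0$ already produces the chain $\emptyset\subset\{s\}$. Also, the only place the fact is used, \cref{thm:sperner-achievability}, needs just a partition of the lattice into $\binom{n}{\floor{n/2}}$ chains, not their symmetry, so your statement is stronger than what the paper requires.
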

\noindent We now show that for memoryless set-based generators, the achievable upper density on finite collections is entirely governed by the Sperner bound. We first prove that an adversary can always force the upper density to drop to this bound.

\begin{lemma}[Density Bound for Memoryless Generators]\label{thm:upper-density-collapse}
    Fix any integer $k \geq 2$. There exists a collection $\cL$ of size $k$ and a target language $K \in \cL$ such that the following holds: for every memoryless set-based generator $\generator$ that generates from $\cL$ in the limit (under finitely repeating enumerations), and for any repetition-free enumeration $E_K = (x_1, x_2, \dots)$ of $K$, there exists a time $t^\ast$ such that for all $t > t^\ast$:
    \[
        \mu_{\rm up}\inparen{\generator(x_t); K} \leq \frac{1}{\binom{k-1}{\floor{ (k-1)/2}}}\,.
    \]
\end{lemma}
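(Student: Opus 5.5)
We follow the Sperner construction from the overview. Set $n\coloneqq k-1$ and $N\coloneqq\binom{n}{\floor{n/2}}$, and take $K=\N$ (or any countably infinite domain with its canonical order).

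\textbf{Construction.} First we partition $K$ into $N$ pairwise disjoint infinite sets $A_1,\dots,A_N$ with $\mu_{\rm up}(A_i;K)=1/N$. Residue classes modulo $N$ do this: $A_i=\inbrace{\ell_m : m\equiv i \pmod N}$, and each has natural density exactly $1/N$. Next, let $S_1,\dots,S_N$ be the subsets of $[n]$ of size $\floor{n/2}$. For $j\in[n]$ define
\[
L_j\coloneqq \bigcup_{i:\, j\in S_i}A_i,
\]
and set $\cL\coloneqq\inbrace{K,L_1,\dots,L_n}$.

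We must check that this collection has size $k$ and consists of infinite languages. Each $L_j$ is infinite and a proper subset of $K$ whenever $N\geq 2$. The $L_j$ are pairwise distinct, since for $j\neq j'$ some $S_i$ contains $j$ but not $j'$ (when $\floor{n/2}\geq 1$).

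The degenerate cases need separate handling. If $k=2$, then $n=1$, $N=1$, and the bound is $1$, which is trivial; we can take $L_1$ to be any infinite proper subset of $K$. If $k=3$, then $n=2$, $N=2$, and the sets are $S_1=\{1\}$ and $S_2=\{2\}$, giving $L_1=A_1$ and $L_2=A_2$. This is fine.

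\textbf{Key intersection identity.} For each $i$ we have $\bigcap_{j\in S_i}L_j = A_i$. Indeed, $A_i$ lies in every $L_j$ with $j\in S_i$. Conversely, a point of $A_{i'}$ with $i'\neq i$ lies in $\bigcap_{j\in S_i}L_j$ only if $S_i\subseteq S_{i'}$. Since all the $S$'s have the same size, this forces $i'=i$. This is exactly where the antichain property of \cref{fact:sperner} is used.

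\textbf{Error sets.} Fix a memoryless $\generator$ that generates from $\cL$ under finitely repeating enumerations. For each $j$ let $B_j\coloneqq\inbrace{x\in L_j:\generator(x)\not\subseteq L_j}$. As in the proof of \cref{thm:densityCollapse}, each $B_j$ is finite. Otherwise, a repetition-free enumeration of $L_j$ would list infinitely many points of $B_j$, and at each such round the output is invalid; this contradicts success on $L_j$, since repetition-free enumerations are finitely repeating. Let $B\coloneqq\bigcup_j B_j$, which is finite.

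\textbf{Bounding the density.} Take $x\in K\setminus B$. Then $x\in A_i$ for a unique $i$, and $x\in L_j$ exactly when $j\in S_i$. Since $x\notin B_j$, we get $\generator(x)\subseteq L_j$ for all $j\in S_i$. By the identity above, $\generator(x)\subseteq A_i$. Monotonicity of $\mu_{\rm up}$ then gives $\mu_{\rm up}(\generator(x);K)\leq 1/N$. We never use validity of $\generator$ for $K$ itself; only the languages $L_j$ matter.

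\textbf{Conclusion.} In a repetition-free enumeration of $K$, the finite set $B$ appears only at finitely many times. Taking $t^\ast$ to be the last such time yields the claimed bound for all $t>t^\ast$.

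The only delicate points are the antichain intersection identity and making sure the collection genuinely has $k$ distinct languages. Everything else mirrors the proof of \cref{thm:densityCollapse}.
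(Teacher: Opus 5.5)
Your proposal is correct and follows the paper's proof essentially step for step. It uses the same middle-layer Sperner construction $L_j=\bigcup_{i:\,j\in S_i}A_i$ with $\mathcal{L}=\{K,L_1,\dots,L_n\}$, the same finite error budget argument, and the same antichain collapse of $\bigcap_{j\in S_i}L_j$ to $A_i$. Your extra checks are minor refinements of points the paper treats briefly: you verify that each $L_j\neq K$, and you handle $k=2$ separately because $S_1=\emptyset$ there would make $L_1$ empty.
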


\begin{proof}
    For $k = 2$, the upper density bound evaluates to $1 / \binom{1}{0} = 1$. Because the upper density of any set cannot exceed 1, the theorem holds trivially for any target in any collection of size 2. Therefore, we may assume without loss of generality that $k \geq 3$. 
    
    Let $n = k - 1 \geq 2$, and define $N \coloneqq \binom{n}{\floor{n/2}}$. 
    Let the target language $K$ be a countably infinite universe. We partition $K$ into $N$ pairwise disjoint infinite sets, $A_1, A_2, \dots, A_N$, such that the asymptotic upper density of each set within $K$ is exactly uniform, \ie{}, for each $i \in [N]$:
        $\mu_{\rm up}(A_i; K) = \frac{1}{N}\,.$
    (Such a partition can be constructed easily, for instance, by assigning elements in a round-robin fashion according to $K$'s canonical enumeration, meaning the $m$-th element belongs to partition $A_i$ where $i \equiv m \pmod N$).

    Let $S_1, S_2, \dots, S_N$ be the $N$ distinct subsets of $\inbrace{1, 2, \dots, n}$ that have a cardinality of exactly $\floor{n/2}$. By \cref{fact:sperner}, this collection of subsets forms an antichain. Furthermore, since $n \geq 2$, we have $\floor{n/2} \geq 1$, which guarantees that each subset $S_i$ is non-empty.

    We now construct the remaining $n$ languages in our collection. For each $j \in [n]$, define:
    \[
        L_j \coloneqq \bigcup_{i:\, j \in S_i} A_i\,.
    \]
Set $r \coloneqq \floor{n/2}$. Since we have already reduced to the case
$k \geq 3$, we have $n \geq 2$ and hence $r \geq 1$.
Fix any $j \in [n]$. Among the $r$-subsets of $[n]$, the number
that contain $j$ is
\[
\binom{n-1}{r-1} > 0\,.
\]
Because $S_1,\dots,S_N$ enumerate all $r$-subsets of $[n]$, there exists
some $i \in [N]$ such that $j \in S_i$. For this index $i$, we have
\[
A_i \subseteq L_j\,.
\]
Since $A_i$ is infinite, it follows that $L_j$ is infinite. Moreover, by construction, the $L_j$'s are  distinct.
Let our finite collection be $\cL \coloneqq \inbrace{K, L_1, \dots, L_n}$. Notice that
$\abs{\cL} = n+1 = k$.

    Let $\generator$ be any memoryless set-based generator that successfully generates from $\cL$. By the definition of generation in the limit under finitely repeating enumerations, for every $L \in \cL$, the set of errors $B_L \coloneqq \inbrace{x \in L \mid \generator(x) \not\subseteq L}$ is finite{: otherwise any repetition-free enumeration of $L$ would induce infinitely many errors}. Because $\cL$ is a finite collection, the ``total error budget'' $B \coloneqq \bigcup_{L \in \cL} B_L$ is also finite.

    Consider any element $x \in K \setminus B$. Because $A_1, \dots, A_N$ partition $K$, there is a unique index $i \in [N]$ such that $x \in A_i$. By our language construction, $x \in L_j$ if and only if $A_i \subseteq L_j$, which occurs if and only if $j \in S_i$. 

    Because $x \notin B$, the generator must output a set consistent with all languages containing $x$. Thus:
    \[
        \generator(x) \subseteq K \cap \bigcap_{j \in S_i} L_j\,.
    \]
    Expanding the intersection, we have:
    \[
        \bigcap_{j \in S_i} L_j = \bigcap_{j \in S_i} \inparen{ \bigcup_{r:\, j \in S_r} A_r }\,.
    \]
    Because the sets $A_r$ are pairwise disjoint, an element in this intersection must belong to a single partition $A_r$ that is present in the union for \emph{every} $j \in S_i$. This requires $j \in S_r$ for all $j \in S_i$, meaning $S_i \subseteq S_r$. Because $S_1, \dots, S_N$ form an antichain, $S_i \subseteq S_r$ implies $S_i = S_r$, and thus $i = r$. 
    Therefore, the intersection reduces to $A_i$. For any $x \in K \setminus B$ where $x \in A_i$, we have $\generator(x) \subseteq A_i$.

    Finally, let $E_K = (x_1, x_2, \dots)$ be a repetition-free enumeration of $K$. Because $B$ is finite, there exists $t^\ast$ such that for all $t > t^\ast$, $x_t \notin B$. For all such $t$, the generator is forced to output $\generator(x_t) \subseteq A_i$ for some $i$. {Therefore,} by the monotonicity of the upper density operator
    \[
        \mu_{\rm up}\inparen{\generator(x_t); K} \leq \mu_{\rm up}(A_i; K) = \frac{1}{N}\,.
    \]
    Substituting $N = \binom{k-1}{\floor{ (k-1)/2}}$ completes the proof.
\end{proof}
We now show that this bound is tight. {Using a symmetric chain decomposition, a generator can always guarantee this upper density infinitely often.}

\begin{lemma}[Achievability of the Sperner Bound]\label{thm:sperner-achievability}
    For any collection $\cL$ of $k \geq 2$ infinite languages, there exists a memoryless set-based generator that guarantees an upper density of at least $1 / \binom{k-1}{\floor{ (k-1)/2}}$ infinitely often on any target $K \in \cL$ {and any finitely repeating enumeration of $K$}.
\end{lemma}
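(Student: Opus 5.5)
We will use the canonical memoryless intersection generator described in the proof overview. For $x\in\cX$, let $I_x\coloneqq\bigcap\inbrace{L\in\cL: x\in L}$, with the empty intersection taken to be $\cX$. Define $\generator(x)\coloneqq I_x$ whenever $I_x$ is infinite, and let $\generator(x)$ be an arbitrary fixed infinite set otherwise.

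\textbf{Step 1: $\generator$ generates in the limit.} Fix a target $K\in\cL$ and any $x\in K$. Then $K$ is one of the languages in the intersection, so $I_x\subseteq K$. Hence $x$ can be bad only if $I_x$ is finite. In that case $x$ lies in the finite set $I_x=\bigcap\cF$ for some subfamily $\cF\subseteq\cL$. Since $\cL$ is finite, there are finitely many such $\cF$, so the union $U$ of all finite intersections $\bigcap\cF$ is finite. Every bad point of $K$ lies in $U$. A finitely repeating enumeration shows each point of $U$ only finitely often, so after some finite time every output is an infinite subset of $K$. This is the same argument as in the proof of \cref{thm:memoryless-finite-reps}, but simpler.

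\textbf{Step 2: a region decomposition of $K$.} Let $\cL'\coloneqq\cL\setminus\inbrace{K}$, so $\abs{\cL'}=k-1$, and set $N\coloneqq\binom{k-1}{\floor{(k-1)/2}}$. For each $B\subseteq\cL'$ define the region
\[
R'_B\coloneqq\inbrace{x\in K:\inbrace{L\in\cL': x\in L}=B}
\]
and the set
\[
I'_B\coloneqq K\cap\bigcap_{L\in B}L .
\]
The regions $R'_B$ partition $K$. For $x\in R'_B$ we have $I_x=I'_B$. Moreover, $B\subseteq B'$ implies $R'_{B'}\subseteq I'_B$.

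Apply \cref{fact:symmetric-chain} to $\cP(\cL')$ to get $N$ disjoint chains $\cC_1,\dots,\cC_N$ covering all subsets. The chain-unions $U_c\coloneqq\bigcup_{B\in\cC_c}R'_B$ then partition $K$. Since $\mu_{\rm up}$ is finitely subadditive, we have $1=\mu_{\rm up}(K;K)\leq\sum_c\mu_{\rm up}(U_c;K)$. Hence some chain $B_1\subset\cdots\subset B_m$ satisfies $\mu_{\rm up}(U_c;K)\geq 1/N$.

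\textbf{Step 3: locate a single dense output.} Finite regions have density zero, so at least one $R'_{B_i}$ on this chain is infinite; otherwise $U_c$ would be finite and have upper density $0<1/N$. Let $j$ be the smallest such index. Then $U_c$ differs by a finite set from $\bigcup_{i\geq j}R'_{B_i}$. By the chain property $B_j\subseteq B_i$, this union is contained in $I'_{B_j}$. Since adding or removing finitely many points does not change upper density, we get $\mu_{\rm up}(I'_{B_j};K)\geq 1/N$.

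Because $R'_{B_j}\subseteq I'_{B_j}$ is infinite, $I'_{B_j}$ is infinite. So for every $x\in R'_{B_j}$ the generator outputs exactly $I_x=I'_{B_j}$. A finitely repeating enumeration of $K$ must present infinitely many distinct points of the infinite set $R'_{B_j}$, hence at infinitely many rounds. At those rounds $\mu_{\rm up}(G_t;K)\geq 1/N$, which gives $\limsup_t\mu_{\rm up}(G_t;K)\geq 1/N$.

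The main point needing care is Step 3. The density guaranteed for a chain-union must be transferred to a single output set, and this relies on two facts: finite regions can be discarded without affecting $\mu_{\rm up}$, and the chain structure makes the lowest infinite region's intersection $I'_{B_j}$ contain every region above it. The rest is routine bookkeeping.
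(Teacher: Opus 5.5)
Your proposal is correct and follows the paper's proof essentially step for step. It uses the canonical intersection generator with a fallback on finite intersections, the relative-signature regions $R'_B$ partitioning $K$, and a symmetric chain decomposition of $\cP(\cL\setminus\{K\})$ with pigeonhole via finite subadditivity of $\mu_{\rm up}$. It then takes the lowest infinite region $R'_{B_j}$ on the chosen chain, whose intersection $I'_{B_j}$ contains the rest of the chain. The only difference is in Step 1, and it is cosmetic: you bound the bad points by the union of all finite intersections, as in the proof of \cref{thm:memoryless-finite-reps}, while the paper uses the union of exact regions whose intersection is finite; the two bounds are equivalent.
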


\begin{proof}
    We first define a modified canonical memoryless set-based generator. For any $x \in \cX$, define its signature as $S(x) \coloneqq \inbrace{L \in \cL \mid x \in L}$, and let $I_x \coloneqq \bigcap_{L \in S(x)} L$ be the intersection of all languages in $\cL$ that contain $x$. If $I_x$ is infinite, we set $\generator(x) \coloneqq I_x$. If $I_x$ is finite, we set $\generator(x) \coloneqq \cX$ (or any arbitrary infinite subset of $\cX$) to ensure the output remains an infinite set.

    Let us verify that this generator successfully generates from $\cL$ in the limit. For any target $L \in \cL$, an error ($\generator(x) \not\subseteq L$) can only occur if the finite-intersection fallback is triggered. For any subcollection $B \subseteq \cL$, define the exact region $R_B \coloneqq \inbrace{x \in \cX \mid S(x) = B}$. By definition, any $x \in R_B$ satisfies $x \in I_B \coloneqq \bigcap_{L \in B} L$, meaning $R_B \subseteq I_B$. Thus, if a signature $B$ yields a finite intersection $I_B$, its corresponding exact region $R_B$ must also be finite. Since a finite collection $\cL$ has only finitely many distinct signatures, the union of all exact regions corresponding to finite intersections is a finite set. Therefore, the generator makes only finitely many errors globally, {and each of these error-causing inputs can appear only finitely many times under finitely repeating enumerations}.

    Now, fix a target language $K \in \cL$. There are $n = k - 1$ other languages in the collection. For any element $x \in K$, we define its relative signature restricted to the other languages: $S_K(x) \coloneqq \inbrace{L \in \cL \setminus \inbrace{K} \mid x \in L}$. 

    For every subset $B \subseteq \cL \setminus \inbrace{K}$, define the relative region $R'_B \coloneqq \inbrace{x \in K \mid S_K(x) = B}$. These regions are pairwise disjoint and partition $K$. 

    For any $x \in R'_B$, the intersection of all consistent languages in $\cL$ (including $K$) is exactly $I'_B \coloneqq K \cap \bigcap_{L \in B} L$. Because any element $y \in I'_B$ belongs to $K$ and to all languages in $B$, its relative signature $S_K(y)$ must be a superset of $B$. Conversely, any element in $K$ with a relative signature $C \supseteq B$ clearly belongs to $I'_B$. Therefore, we have the equality:
    \[
        \bigcup_{C \supseteq B} R'_C = I'_B\,.
    \]
    By \cref{fact:symmetric-chain}, the Boolean lattice of subsets of $\cL \setminus \inbrace{K}$ can be partitioned into exactly $N = \binom{n}{\floor{n/2}}$ chains, which we denote $\cC_1, \cC_2, \dots, \cC_N$. 

    Because these chains partition the power set of $\cL \setminus \inbrace{K}$, their corresponding families of relative regions partition $K$. Let $U_m$ be the union of all relative regions associated with the subsets in chain $\cC_m$:
    \[
        U_m \coloneqq \bigcup_{B \in \cC_m} R'_B\,.
    \]
    Since $\bigcup_{m=1}^N U_m = K$, and the limit superior is finitely subadditive, we have:
    \[
        1 = \mu_{\rm up}(K; K) \leq \sum_{m=1}^N \mu_{\rm up}(U_m; K)\,.
    \]
    By the Pigeonhole Principle, there must exist at least one chain $\cC^\ast$ such that:
    \[
        \mu_{\rm up}(U^\ast; K) \geq \frac{1}{N}\,.
    \]
    Let the chain $\cC^\ast$ be defined by the sequence of subsets $B_1 \subset B_2 \subset \dots \subset B_h$. Let $B_j$ be the smallest subset in this chain such that its relative region $R'_{B_j}$ is infinite. (Such a $B_j$ must exist, because if all regions in the chain were finite, $U^\ast$ would be finite, yielding an upper density of $0$, which contradicts $1/N > 0$).

    Because the regions $R'_{B_1}, \dots, R'_{B_{j-1}}$ are finite, their upper density is exactly $0$. By the finite subadditivity of the upper density, dropping these finite regions does not decrease the density of the rest of the chain. Thus, the upper density of the entire chain $U^\ast$ is carried exclusively by the remaining regions:
    \[
        \mu_{\rm up}\inparen{ \bigcup\nolimits_{i \geq j} R'_{B_i} ; K } \geq \mu_{\rm up}(U^\ast; K) \geq \frac{1}{N}\,.
    \]
    Because $B_j \subseteq B_i$ for all $i \geq j$, the union of these is precisely a subset of the exact intersection $I'_{B_j}$:
    \[
        \bigcup_{i \geq j} R'_{B_i} \subseteq \bigcup_{C \supseteq B_j} R'_C = I'_{B_j}\,.
    \]
    By monotonicity, the upper density of the intersection is at least $1/N$:
    \[
        \mu_{\rm up}(I'_{B_j}; K) \geq \frac{1}{N}\,.
    \]  
    Finally, because $\mu_{\rm up}(I'_{B_j}; K) \geq 1/N > 0$, the intersection $I'_{B_j}$ is infinite. Thus, for any element $x_t \in R'_{B_j}$, {the finite-intersection fallback is not triggered; it outputs exactly} the infinite set $I'_{B_j}$ (since $I_{x_t} = I'_{B_j}$). 

    Furthermore, because the relative region $R'_{B_j}$ is infinite, any valid {finitely repeating} enumeration of $K$ must present elements $x_t \in R'_{B_j}$ infinitely many times. At each of these infinitely many time steps, the generator outputs $I'_{B_j}$, {achieving} an upper density of at least $1/N$.
\end{proof}
Equipped with the previous results, the proof of \cref{thm:minimax-upper-density-memoryless-finite} follows as a direct corollary:
\begin{proof}[Proof of \cref{thm:minimax-upper-density-memoryless-finite}]
    {If $k=1$, then for the unique target language $K$ the generator can output $K$ on every round, so the minimax value is $1=1/\binom{0}{0}$. Hence assume $k\geq 2$.}
    From \cref{thm:upper-density-collapse}, we get that
    $\rho_{\rm up}(k) \leq \frac{1}{\binom{k-1}{\floor{ (k-1)/2}}}.$ Then, \cref{thm:sperner-achievability} shows that $\rho_{\rm up}(k) \geq \frac{1}{\binom{k-1}{\floor{ (k-1)/2}}}.$ This concludes the proof.
\end{proof}

\begin{remark}[Density Bounds for Countable Collections]
 {The finite-collection bounds also imply that no positive uniform upper-density guarantee can hold over all countable collections. One way to see this is to take a disjoint union of the finite hard collections for all $k\in\N$; any positive guarantee would contradict the finite bound for sufficiently large $k$.}
\end{remark}

\subsection{{Density Bounds for Generators with a Sliding Window}}
\label{sec:sliding-window-generators}

{We next study the sliding-window model, in which the generator can inspect the last $W$ examples in the stream. Throughout this subsection, we restrict attention to repetition-free enumerations. We first give the formal definition of generation in the limit in this model.}

\begin{definition}[Window-$W$ set-based generator]
\label{def:window-generator}
Fix an integer $W\geq 1$. A \emph{window-$W$ set-based generator} is a deterministic function
\[
\generator:(\cX^W)_{\neq} \to [\cX]^\infty\,,
\qquadwhere
(\cX^W)_{\neq}
\coloneqq
\inbrace{(y_1,\dots,y_W)\in \cX^W : y_a\neq y_b \text{ for all } a\neq b}\,.
\]
{Here, $(\cX^W)_{\neq}$ the set of ordered $W$-tuples of distinct elements of $\cX$, and $[\cX]^\infty$ is the family of infinite subsets of $\cX$.}
Given a repetition-free enumeration $E_K=(x_1,x_2,\dots)$ of a target language $K$, the output at time $t\geq W$ is
\[
G_t \coloneqq \generator(x_{t-W+1},\dots,x_t)\,.
\]
{Note that, the first $W-1$ rounds may be assigned arbitrary infinite outputs, since all guarantees below are eventual.}
We say that $\generator$ \emph{generates from $K$ in the limit under repetition-free enumerations} if for every repetition-free enumeration $E_K$ there exists $t^\ast$ such that for all $t\geq \max\!\inbrace{W,t^\ast}$, we have $G_t\subseteq K$. We say that $\generator$ generates in the limit from a collection of languages $\cL$
if it generates in the limit from every $K \in \cL.$
\end{definition}
{We ask the natural sliding-window analogue of the memoryless minimax question: how large an upper-density guarantee can a window-$W$ generator achieve uniformly over all collections of size $k$? 
We show that a longer window does not improve over the memoryless setting for any finite $W$.}

{Recall that $\rho_{\rm up}(k,W)$ denotes the size-$k$ minimax upper-density value in the window-$W$ model (the natural analogue of \cref{def:minimax-density-finite-collections}).}
Our main result in this section is as follows:

\begin{theorem}[Minimax upper-density value for sliding windows]
\label{thm:window-upper-density-exact}
For every $k\geq 1$ and $W\geq 1$,
\[
\rho_{\rm up}(k,W)
=
\frac{1}{\binom{k-1}{\floor{(k-1)/2}}}\,.
\]
\end{theorem}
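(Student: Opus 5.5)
The plan is to prove the two inequalities separately; the upper bound carries essentially all the work.

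\textbf{Lower bound.} To show $\rho_{\rm up}(k,W)\ge 1/\binom{k-1}{\floor{(k-1)/2}}$, I reduce to the memoryless case. Given any size-$k$ collection $\cL$, let $\generator_0$ be the generator of \cref{thm:sperner-achievability}, and define the window-$W$ generator by
\[
\generator(y_1,\dots,y_W)\coloneqq\generator_0(y_W).
\]
Its outputs on every enumeration coincide with those of $\generator_0$ from round $W$ onward. Repetition-free enumerations are finitely repeating, so both generation in the limit and the upper-density guarantee transfer directly. The case $k=1$ is trivial: output $K$.

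\textbf{Upper bound: the collection.} For $k\ge3$ (the bound equals $1$ when $k\le2$), set $n=k-1$ and $N=\binom{n}{\floor{n/2}}$. Partition $K$ into pieces $Z,A_1,\dots,A_N$ with $\mu_{\rm up}(Z;K)=0$, each $Z,A_i$ infinite, and $\mu_{\rm up}(A_i;K)\le 1/N$. For instance, take $Z$ to be the elements at indices that are perfect squares in the canonical order of $K$, and distribute the rest round-robin. Index the $A_i$ by the $\floor{n/2}$-subsets $S_i$ of $[n]$, put $L_j\coloneqq Z\cup\bigcup_{i:j\in S_i}A_i$, and let $\cL=\{K,L_1,\dots,L_n\}$. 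By the antichain argument of \cref{thm:upper-density-collapse}, together with the fact that every $L_j$ contains $Z$, we get
\[
K\cap\bigcap_{j\in S_i}L_j=A_i\cup Z, \qquad K\cap\bigcap_{j\in[n]}L_j=Z.
\]
Both are subsets of $K$ of upper density at most $1/N$.

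\textbf{Upper bound: the adversarial enumeration.} This is the key step. Fix any window-$W$ generator $\generator$ that generates $\cL$. I build a repetition-free enumeration of $K$ in stages, interleaving one fresh point from some $A_i$ (cycling through $i$ and within each $A_i$ in canonical order) with separator blocks of at least $W$ fresh points of $Z$. Every element of $K$ must eventually appear, so $Z$ is also consumed in canonical order. Consequently, every late window contains at most one non-$Z$ point.

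The obstacle is that, unlike the memoryless case, there is no finite ``bad set'' to exhaust. A window that is bad for some $L\in\cL$ might appear only in my enumeration and never in an enumeration of $L$, so validity on $K$'s stream does not obviously force containment in every $L$ consistent with the window. I handle this by a diagonal construction.

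Call a window $w$ of distinct points \emph{forced} if $\generator(w)\subseteq L$ for every $L\in\cL$ containing all points of $w$. Suppose that, while building the stream, infinitely many windows of the form just described fail to be forced. Then I can pick a single language $L$ that contains each such window but on which $\generator(w)\not\subseteq L$ infinitely often; the windows lie in $Z\cup A_i\subseteq L$, so this is possible. I then interleave those windows, each of which consists of fresh points, into a repetition-free enumeration of $L$. Since every such $w$ lies inside $L$, this is a legal enumeration of $L$, and $\generator$ errs infinitely often on it, contradicting generation of $L$.

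To make this work, I choose the stream greedily. At each stage, if there is an unforced window of the allowed shape built from fresh points, I use it; I argue that only finitely many stages can do so without producing the contradiction just described. After that point, every window is forced, so each output lies in $Z$ or in some $A_i\cup Z$, and therefore has upper density at most $1/N$. This gives $\limsup_t\mu_{\rm up}(G_t;K)\le 1/N$ on this enumeration, completing the upper bound.
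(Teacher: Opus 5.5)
Your proof is correct and has the same architecture as the paper's. The lower bound comes from applying the memoryless rule of \cref{thm:sperner-achievability} to the last window entry. The upper bound uses the middle-layer Sperner family padded with a zero-density separator $Z$. It takes an enumeration whose separator blocks leave at most one non-$Z$ point in each late window. A diagonal argument then shows late windows must be ``forced,'' so outputs land in $Z$ or $A_i\cup Z$.

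The genuine difference is how forcing is obtained. The obstacle you name does not actually arise. \cref{lem:finite-bad-window-budget} uses your interleaving idea, choosing each bad tuple from $L$ minus the finitely many points already used. It proves that for each $L\in\cL$ there \emph{is} a finite set $B_L$ such that every ordered $W$-tuple of distinct points of $L\setminus B_L$ is mapped into $L$. With $B=\bigcup_L B_L$ in hand, the paper uses one fixed enumeration with growing separators $\abs{Z_{r,i}}=r$. That enumeration is independent of both $\generator$ and $W$, which gives the stronger quantifier order of \cref{lem:window-upper-density-collapse}.

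Your localized version controls only the windows that actually occur in your stream. It also works and gives a $W$-dependent enumeration, which is enough for the minimax value.

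The greedy step is unnecessary. For \emph{any} fixed repetition-free stream, suppose infinitely many windows are unforced. Pigeonhole over the $k$ witness languages gives one $L$ with infinitely many bad windows. This is the right justification, not ``$Z\cup A_i\subseteq L$.'' Windows at times spaced at least $W$ apart are then pairwise disjoint.

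When you interleave these windows into an enumeration of $L$, state explicitly that each placed window and each filler element avoids all previously used points. Each used point blocks at most one of the disjoint windows, so infinitely many remain usable and every element of $L$ is eventually listed. Your sketch leaves this routine detail implicit.
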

In particular, increasing the window length does not improve the worst-case upper-density guarantee for finite collections.

{The rest of this subsection proves the \cref{thm:window-upper-density-exact}. The first step is the following lemma, which is a sliding-window analogue of a basic observation from the memoryless setting.}

\begin{lemma}[{Finite exceptional set for bad windows}]
\label{lem:finite-bad-window-budget}
Let $W\geq 1$, let $L\subseteq \cX$ be an infinite language, and let $\generator:(\cX^W)_{\neq}\to[\cX]^\infty$ be a window-$W$ set-based generator that generates from $L$ in the limit under repetition-free enumerations. Then there exists a finite set $B_L\subseteq L$ such that for every ordered $W$-tuple $(y_1,\dots,y_W)$ of distinct elements of $L\setminus B_L$,
\[
\generator(y_1,\dots,y_W)\subseteq L\,.
\]
\end{lemma}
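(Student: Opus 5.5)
The plan is a proof by contradiction. We build a single repetition-free enumeration of $L$ in which bad windows occur infinitely often. Say that a tuple $(y_1,\dots,y_W)\in(\cX^W)_{\neq}$ with all $y_a\in L$ is \emph{bad} if $\generator(y_1,\dots,y_W)\not\subseteq L$. Suppose the lemma fails. Then for every finite set $B\subseteq L$ there is a bad tuple of distinct elements of $L\setminus B$. We use this repeatedly to pick bad tuples that are disjoint from everything used so far.

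We construct the enumeration in stages, keeping the finite set $P_m\subseteq L$ of elements listed in stages $1,\dots,m$, with $P_0=\emptyset$. At stage $m+1$ we do two things. First, using the negated hypothesis with $B=P_m$, we pick a bad tuple $T_{m+1}=(y^{(m+1)}_1,\dots,y^{(m+1)}_W)$ of distinct elements of $L\setminus P_m$, and list its entries consecutively in this order. Second, we append the canonically least element of $L$ that has not yet been listed, if it is not already in $T_{m+1}$. We then set $P_{m+1}$ to be $P_m$ together with all elements listed in this stage.

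We then check three properties. The sequence is repetition-free, because each tuple avoids $P_m$ and the appended filler element is chosen to be new. Every element of $L$ appears, because at each stage the least unlisted element of $L$ gets listed, so the $j$-th element of $L$ in canonical order appears by stage $j$. Finally, at the round $t$ at which the last entry of $T_{m+1}$ is shown, the $W$ most recent examples are exactly the consecutive block $T_{m+1}$, in order. Hence $G_t=\generator(T_{m+1})\not\subseteq L$.

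Since such rounds exist for every $m$, there are infinitely many of them, and they grow without bound. This contradicts the assumption that $\generator$ generates from $L$ in the limit under repetition-free enumerations. Taking the contrapositive yields the finite set $B_L$.

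The main point needing care is that the window at the end of each block consists exactly of that block. This holds because the entries of each tuple are listed contiguously and the window has width exactly $W$. A second point is that filler elements chosen in earlier stages never collide with later tuples. This is guaranteed because each new tuple is drawn from outside $P_m$, which already contains all earlier fillers.
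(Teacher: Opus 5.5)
Your proof is correct and takes essentially the same approach as the paper. You assume no finite exceptional set exists and build a repetition-free enumeration of $L$ in stages. Each stage lists one least-unlisted element, so that every element of $L$ appears, and one contiguous bad $W$-tuple disjoint from everything listed so far, so that the window at the end of the block is exactly that tuple. The only difference is cosmetic: the paper lists the filler $u_s$ before the bad tuple, drawing the tuple from $L\setminus(U_{s-1}\cup\{u_s\})$, whereas you list the tuple first and then the filler.
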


\begin{proof}
Assume for contradiction that no such finite set $B_L$ exists.
Then for every finite set $F\subseteq L$ there exists an ordered $W$-tuple of distinct elements
\[
(y_1^F,\dots,y_W^F)\in {(L\setminus F)^W_{\neq}}
\qquadtext{for which}
\generator(y_1^F,\dots,y_W^F)\not\subseteq L\,.
\]
Fix any repetition-free enumeration $z_1,z_2,\dots$ of $L$.
We construct another repetition-free enumeration of $L$ in stages.
Let $U_0\coloneqq\emptyset$.
At stage $s\geq 1$, let $u_s$ be the first element of the sequence $z_1,z_2,\dots$ that does not belong to $U_{s-1}$, and append $u_s$ to the output sequence.
Now apply the assumption above with
$F\coloneqq U_{s-1}\cup\inbrace{u_s}$
to obtain an ordered $W$-tuple
$\tau_s=(y_{s,1},\dots,y_{s,W})$
of distinct elements of $L\setminus F$ such that
$\generator(y_{s,1},\dots,y_{s,W})\not\subseteq L.$
Append the $W$ entries of $\tau_s$ immediately after $u_s$, and let $U_s$ be the set of all elements that have been written so far.

By construction, every newly appended element lies outside the previously used set, so the resulting sequence is repetition-free. It is also an enumeration of $L$: each $z_r$ either appears earlier inside one of the tuples $\tau_s$, or else, once all $z_1,\dots,z_{r-1}$ have been used, it becomes the selected element $u_s$ at some later stage. Thus every element of $L$ appears exactly once.

Finally, at the last position of the block $\tau_s$, the current window is exactly
\[
(y_{s,1},\dots,y_{s,W})\,,
\]
so the generator outputs a set not contained in $L$. This happens for every stage $s$, contradicting the assumption that $\generator$ generates from $L$ in the limit.
\end{proof}
The next result is the main technical component needed for \cref{thm:window-upper-density-exact}. In fact, it shows something stronger than what is needed for the minimax bound: there is a \emph{single} collection $\cL$ and target $K \in \cL$, together with a \emph{single} adversarial repetition-free enumeration of $K$, that witnesses the upper bound for \emph{every} finite window length $W$.

\begin{lemma}[A single hard instance for all finite sliding windows]
\label{lem:window-upper-density-collapse}
Fix an integer $k\geq 2$. There exists a collection $\cL$ of size $k$, a target language $K\in \cL$, and a fixed repetition-free enumeration $E_K$ of $K$ such that the following holds. For every integer $W\geq 1$ and every window-$W$ set-based generator $\generator$ that generates from $\cL$ in the limit under repetition-free enumerations, there exists a time $t^\ast$ such that for all $t\geq t^\ast$,
\[
\mu_{\rm up}(G_t;K)
\leq
\frac{1}{\binom{k-1}{\floor{(k-1)/2}}}\,,
\]
where $G_t$ denotes the output of $\generator$ on the fixed enumeration $E_K$ at time $t$.
\end{lemma}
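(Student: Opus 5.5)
We follow the separator construction sketched in the proof overview. For $k=2$ the bound equals $1$ and there is nothing to prove, so assume $k\geq 3$. Set $n=k-1$ and $N=\binom{n}{\floor{n/2}}$, and let $S_1,\dots,S_N$ be the $\floor{n/2}$-subsets of $[n]$.

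\textbf{Construction.} Take $K=\N$, with its canonical order, and split it into a separator $Z$ with $\mu_{\rm up}(Z;K)=0$ and pieces $A_1,\dots,A_N$ with $\mu_{\rm up}(A_i;K)\le 1/N$. For instance, let $Z$ be the perfect squares and distribute the non-squares round-robin among the $A_i$. Define
\[
L_j\coloneqq Z\cup\bigcup_{i:\,j\in S_i}A_i\qquad\text{and}\qquad \cL=\inbrace{K,L_1,\dots,L_n}.
\]
Each $L_j$ is infinite, and the $L_j$ are distinct and proper subsets of $K$.

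The enumeration $E_K$ runs in stages $s=1,2,\dots$. At stage $s$ it lists, for each $i=1,\dots,N$ in turn, the least unused element of $A_i$, followed by a block of $s$ fresh elements of $Z$; it also inserts the least unused element of $Z$ so that $Z$ is exhausted. This sequence is repetition-free and enumerates $K$, and it does not depend on $W$. Because separator blocks have length $s\to\infty$, for every fixed $W$ there is a time after which every window of length $W$ contains at most one element of $\bigcup_i A_i$.

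\textbf{Forcing the outputs.} Fix $W$ and $\generator$. Apply \cref{lem:finite-bad-window-budget} to every language in $\cL$, obtaining finite sets $B_K,B_{L_1},\dots,B_{L_n}$, and let $B$ be their union. Since $E_K$ is repetition-free, after some time no window contains an element of $B$. Take $t^\ast$ large enough for both this and the single-piece property above.

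For $t\ge t^\ast$, let $\tau$ be the current window and let $\mathcal{F}=\inbrace{L\in\cL: \tau\subseteq L}$ be the set of languages containing every entry of $\tau$. For each $L\in\mathcal F$, the entries of $\tau$ are distinct elements of $L\setminus B_L$, so \cref{lem:finite-bad-window-budget} gives $G_t\subseteq L$. Hence $G_t\subseteq\bigcap\mathcal F$. Two cases arise:
\begin{itemize}
\item If $\tau\subseteq Z$, then $\mathcal F=\cL$ and $\bigcap\mathcal F=Z$, which has upper density $0$.
\item If $\tau$ meets exactly one $A_i$, then $\mathcal F=\inbrace{K}\cup\inbrace{L_j: j\in S_i}$, and the antichain argument from \cref{thm:upper-density-collapse} gives $\bigcap\mathcal F=Z\cup A_i$.
\end{itemize}
Finite subadditivity and monotonicity of $\mu_{\rm up}$ then give $\mu_{\rm up}(G_t;K)\le 0+1/N$ in both cases.

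\textbf{Main obstacle.} The delicate step is applying \cref{lem:finite-bad-window-budget}. It constrains $\generator$ only on tuples that avoid the bad set of the particular language $L$ in question, so we need every late window to avoid $B_L$ for all $L\in\mathcal F$ simultaneously. This holds because $B$ is a finite union of finite sets and each element of a repetition-free enumeration appears only once. The remaining care is in building $E_K$ so that it is a genuine enumeration of $K$ while keeping separator gaps unbounded, which the interleaving of least unused elements handles.
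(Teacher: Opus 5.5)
Your proposal is correct and follows essentially the same route as the paper. Both use the middle-layer Sperner construction with a zero-density separator $Z\subseteq L_j$ for all $j$, a single stage-wise enumeration with separator blocks of growing length, the finite-bad-window lemma applied to every language in $\cL$, and the same two-case computation giving $G_t\subseteq Z$ or $G_t\subseteq A_i\cup Z$. The differences (perfect squares rather than powers of two for $Z$, and packaging the cases through the family $\mathcal{F}$ of languages containing the window) are purely cosmetic.
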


\begin{proof}
For $k=2$, the claimed bound is $1$, and the statement is trivial since every upper density is at most $1$.
Hence we may assume $k\geq 3$.
Set
\[
n\coloneqq k-1
\qquadand 
N\coloneqq \binom{n}{\floor{n/2}}\,.
\]
Let $S_1,\dots,S_N$ be the $N$ subsets of $[n]$ of size exactly $\floor{n/2}$.
Thus $S_1,\dots,S_N$ form the middle layer of the Boolean lattice on $[n]$.

Choose any countably infinite subset of the domain $K$, and write its canonical ordering as $K=\inbrace{x_1,x_2,\dots}.$
We partition $K$ into pairwise disjoint infinite sets
\[
K=A_1 \cup \cdots \cup A_N \cup Z
\]
with the properties
\[
\mu_{\rm up}(A_i;K)=\frac{1}{N}
\quad\text{for every }i\in[N]\,,
\qquad\text{and}\qquad
\mu_{\rm up}(Z;K)=0\,.
\]
One concrete construction is the following.
Place $x_{2^m}$ into $Z$ for every $m\geq 1$.
Let $r_1<r_2<r_3<\cdots$
be the positive integers that are not powers of two, and place $x_{r_q}$ into $A_i$ whenever $q\equiv i \pmod N$ (with residues taken in $[N]$).
Then each $A_i$ is infinite and the sets are pairwise disjoint.
Moreover,
\[
|Z\cap K^m|\leq 1+\floor{\log_2 m}=o(m)\,,
\]
so $\mu_{\rm up}(Z;K)=0$, and for every $i\in[N]$,
\[
\biggl|\,|A_i\cap K^m| - \frac{m-|Z\cap K^m|}{N}\,\biggr|\leq 1\,,
\]
which implies $\mu_{\rm up}(A_i;K)=1/N$.
For each $j\in[N]$, define
\[
L_j
\coloneqq
Z\cup \bigcup_{i:\, j\in S_i} A_i\,,
\qquadand
\cL\coloneqq \inbrace{K,L_1,\dots,L_n}\,.
\]
{As in the memoryless construction,} the $L_j$'s are distinct and infinite. 
Thus $|\cL|=n+1=k$.
Now fix, once and for all, enumerations: for each $1\leq i\leq N$
\[
A_i=\sinbrace{a_i^{(1)},a_i^{(2)},\dots}\,,
\]
and partition $Z$ into pairwise disjoint finite sets (for $r\geq 1$ and $1\leq i\leq N$) define $Z_{r,i}\subseteq Z$ such that 
\[
\forall_{r\geq 1 \text{~and~} 1\leq i\leq N}\,,\quad 
|Z_{r,i}|=r
\qquadand 
Z=\bigcup_{r\geq 1}\bigcup_{i=1}^N Z_{r,i}\,.
\]
Let $E_K$ be the repetition-free enumeration of $K$ obtained by listing these blocks stage by stage:
\[
a_1^{(1)}, Z_{1,1}, a_2^{(1)}, Z_{1,2}, \dots, a_N^{(1)}, Z_{1,N}
\quadand
a_1^{(2)}, Z_{2,1}, a_2^{(2)}, Z_{2,2}, \dots, a_N^{(2)}, Z_{2,N}\,,
\]
and so on, with each finite block written in an arbitrary order.
This is the fixed enumeration promised by the lemma.

Now let $W\geq 1$, and let $\generator$ be any window-$W$ set-based generator that generates from every language in $\cL$ in the limit under repetition-free enumerations.
For each language $L\in\cL$, apply \cref{lem:finite-bad-window-budget} to obtain a finite set $B_L\subseteq L$ such that every ordered $W$-tuple of distinct elements from $L\setminus B_L$ are contained in $L$.
Let
\[
B\coloneqq \bigcup_{L\in\cL} B_L\,.
\]
Since $\cL$ is finite, the set $B$ is finite.

Choose $r_0\geq W$ so large that every element of $B\cap K$ appears before stage $r_0$ of the fixed enumeration $E_K$.
Let $t^\ast$ be large enough that for every $t\geq t^\ast$, the current window $(x_{t-W+1},\dots,x_t)$ lies entirely inside stages $r\geq r_0$ of $E_K$.
Then every point in the current window belongs to $K\setminus B$.
Moreover, every separator block $Z_{r,i}$ occurring in stages $r\geq r_0$ has length at least $W$.
Consequently, a window of length $W$ can intersect at most one of the positive-density pieces $A_1,\dots,A_N$.
Hence for every $t\geq t^\ast$, exactly one of the following two cases holds.

\paragraph{Case 1 (The current window contains only elements of $Z$).}
Since $Z\subseteq L_j$ for every $j\in[N]$, the entire window is contained in each of the languages $K,L_1,\dots,L_n$.
Every element of the window also lies outside the corresponding bad set for each of these languages.
Therefore, by the definition of $B$ and \cref{lem:finite-bad-window-budget},
\[
G_t
\subseteq
K\cap \bigcap_{j=1}^n L_j\,.
\]
By construction, no set $S_i$ equals $[N]$, so no point of any $A_i$ belongs to all $L_j$ simultaneously.
On the other hand, every point of $Z$ belongs to all $L_j$.
Thus
\[
K\cap \bigcap_{j=1}^n L_j = Z\,,
\qquadtext{and, hence,}
\mu_{\rm up}(G_t;K)\leq \mu_{\rm up}(Z;K)=0\,.
\]

\paragraph{Case 2 (The current window contains points from exactly one set $A_i$, and possibly also points from $Z$).}
Every point of $A_i$ belongs exactly to the side languages $L_j$ with $j\in S_i$, while every point of $Z$ belongs to all side languages.
Hence the side languages that contain the entire window are precisely the languages $L_j$ with $j\in S_i$.
As above, the window lies outside the relevant bad sets, so
\[
G_t
\subseteq
K\cap \bigcap_{j\in S_i} L_j\,.
\]
We now compute this intersection.
Certainly $A_i\cup Z \subseteq K\cap \bigcap_{j\in S_i} L_j$.
Conversely, let $x\in K\cap \bigcap_{j\in S_i} L_j$.
If $x\in Z$, there is nothing to prove.
If $x\in A_r$ for some $r\in[N]$, then $x\in L_j$ holds exactly when $j\in S_r$.
Since $x$ lies in every $L_j$ with $j\in S_i$, we obtain $S_i\subseteq S_r$.
But both $S_i$ and $S_r$ have cardinality $\floor{n/2}$, so $S_i=S_r$, hence $r=i$.
Therefore
\[
K\cap \bigcap_{j\in S_i} L_j = A_i\cup Z\,.
\]
It follows that
\[
\mu_{\rm up}(G_t;K)
\leq
\mu_{\rm up}(A_i\cup Z;K)
\leq
\mu_{\rm up}(A_i;K)+\mu_{\rm up}(Z;K)
=
\frac{1}{N}\,.
\]
In both cases, for every $t\geq t^\ast$ we have
\[
\mu_{\rm up}(G_t;K)\leq \frac{1}{N}
=
\frac{1}{\binom{k-1}{\floor{(k-1)/2}}}\,.
\qedhere{}
\]
\end{proof}

\begin{remark}
{Note that the order of  quantifiers in \cref{lem:window-upper-density-collapse} is stronger than the minimax theorem needs.}
To show the minimax density bound, it suffices to have a result of the form 
\[
\forall~~\text{sliding-window sizes $W$}~~\; \exists~~\text{collection $\cL,$ and $K \in \cL$}~~\; \forall~~\text{generators $\generator$}~~ \;\exists ~~\text{``hard'' enumeration $E_K$}~~\,,
\]
so the hard collection and the {adversarial} repetition-free enumeration are allowed to depend on the window length.
By contrast, \cref{lem:window-upper-density-collapse} proves the  stronger statement
\[
\exists~~\text{collection $\cL, K \in \cL$ and ``hard enumeration $E_K$}~~\; \forall~~\text{sliding-window sizes $W$}\; \forall~~\text{generators $\generator$}~~\;\,.
\]
Thus the hard collection, target, and the adversarial enumeration do not depend on $W$.
A single fixed instance simultaneously {works against every finite sliding-window generator}.
In particular, the obstruction is not merely that each memory size $W$ has its own tailored counterexample; rather, there is one robust finite collection on which no finite sliding window improves on the memoryless Sperner bound.
\end{remark}
\noindent {The previous result gives the upper bound. The matching lower bound is immediate from the memoryless result by ignoring all but the most recent example. We now have all the ingredients we need to prove \cref{thm:window-upper-density-exact}.}

\begin{proof}[Proof of \cref{thm:window-upper-density-exact}]
Let $N\coloneqq \binom{k-1}{\floor{(k-1)/2}}.$
{If $k=1$, the unique target language $K$ can be output on every round, so $\rho_{\rm up}(1,W)=1=1/N$. Hence assume $k\geq 2$.}
The lower bound
$\rho_{\rm up}(k,W)\geq \sfrac{1}{N}$
follows from \cref{thm:sperner-achievability}: for any collection $\cL$ of size $k$, let $\generator_{\rm mem}:\cX\to[\cX]^\infty$ be the memoryless generator guaranteed by that theorem, and define the window-$W$ generator
$\generator_{\rm win}(y_1,\dots,y_W)\coloneqq \generator_{\rm mem}(y_W).$
Thus $\generator_{\rm win}$ simply ignores the first $W-1$ entries of the window and applies the memoryless rule to the most recent example.
On every repetition-free enumeration, the outputs of $\generator_{\rm win}$ from time $W$ onward are exactly the outputs of $\generator_{\rm mem}$ on the same current examples, so the same infinite set of good times witnesses upper density at least $1/N$.

The reverse inequality
$\rho_{\rm up}(k,W)\leq \sfrac{1}{N}$
is exactly \cref{lem:window-upper-density-collapse}.
Combining the two bounds proves the claim.
\end{proof}

\begin{remark}
 {The proof of \cref{lem:window-upper-density-collapse} shows why the sliding window fails to help: the adversary inserts long blocks of uninformative examples between the informative ones, so that every window contains at most one informative example. This, then allows us to apply the Sperner bound from the memoryless case.}
\end{remark}

\subsection{{Density Bounds for Generators with Adaptive Buffers}}
\label{sec:no-eviction-buffers}
{Finally, we consider the adaptive-buffer model, where the generator can store up to $b$ past examples of its choice.}
Throughout this section, we write $b$ for the buffer size and reserve $k$ for the size of the finite collection. We also restrict to repetition-free enumerations.

\begin{definition}[$b$-buffer set-based generator]
\label{def:no-eviction-buffer-generator}
Fix an integer $b\geq 0$, and let
$
\cM_b \coloneqq \bigcup_{r=0}^{b}\cX^r.
$
A \emph{$b$-buffer set-based generator} consists of two deterministic functions
\begin{align*}
    \mathrm{out}: \cM_b \times \cX \to [\cX]^\infty
    \quadand
    \mathrm{upd}: \cM_b \times \cX \to \cM_b\,,
\end{align*}
{such that, for every $M=(u_1,\dots,u_r)\in \cM_b$ and every $x\in \cX$, each entry of $\mathrm{upd}(M,x)$ lies in $\{u_1,\dots,u_r,x\}$.
Given an enumeration $E_K=(x_1,x_2,\dots)$ of a target language $K$, the buffer state $M_t\in \cM_b$ evolves from the empty tuple $()$ by}
\[
M_0\coloneqq ()\,,
\qquad
G_t \coloneqq \mathrm{out}(M_{t-1},x_t)\,,
\qquad
M_t \coloneqq \mathrm{upd}(M_{t-1},x_t)\,.
\]
\end{definition}
{In other words, the generator may keep, discard, reorder, or replace stored examples, but cannot synthesize new buffer contents that did not appear in the interaction. The buffer is the generator's only persistent state.}

{Similar as before, $\rho_{\rm up}(k,b)$ denotes the minimax upper density achievable by set-based generators with adaptive buffers of size $b$ against collections of size $k$.}
The main result in this section proves a lower bound on this quantity.

\begin{theorem}[{Lower bound for adaptive buffers}]
\label{thm:no-eviction-universal-lower-bound}
Let $k\in \N$ denote the size of a collection of languages and {$b\geq 0$ be} the size of a memory buffer.
Then, 
\[
\rho_{\rm up}(k,b)
\geq
\begin{cases}
\displaystyle \frac{1}{\binom{k-b-1}{\floor{(k-b-1)/2}}}\,, & 0\leq b\leq k-3\,,\\[3mm]
1\,, & b\geq k-2\,.
\end{cases}
\]
\end{theorem}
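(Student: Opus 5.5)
Fix a collection $\cL$ of $k$ infinite languages. I would use the greedy residual-version-space generator from the overview. For a buffer state $M=(u_1,\dots,u_s)$, set $\cL(M)\coloneqq\inbrace{L\in\cL : u_j\in L \text{ for all } j\le s}$. The update rule is: if $s<b$ and $x\notin L$ for some $L\in\cL(M)$ (i.e., $\cL(M,x)\subsetneq\cL(M)$), then append $x$; otherwise keep $M$ unchanged. The output rule uses the modified canonical intersection generator of \cref{thm:sperner-achievability}, relativized to $\cL(M_{t-1})$. Let $S_M(x)\coloneqq\inbrace{L\in\cL(M): x\in L}$. Output $\bigcap S_M(x)$ when this intersection is infinite and $S_M(x)\neq\emptyset$, and output $\cX$ otherwise. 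Since $b$ and $k$ are fixed, this is a legitimate $b$-buffer generator; the buffer only ever contains observed examples.

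\textbf{Step 1 (stabilization).} Each insertion strictly shrinks $\cL(M)$, so there are at most $\min(b,k-1)$ insertions. Moreover, the target $K$ always remains in $\cL(M)$, because stored examples lie in $K$. Hence $M_t$ is eventually constant, equal to some $M_\ast$, after a time $t_0$.

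\textbf{Step 2 (unfull buffer).} Suppose $\abs{M_\ast}<b$. Then every $L\in\cL(M_\ast)$ contains $K$. Otherwise some $x\in K\setminus L$ would appear after $t_0$ in the enumeration (every element of $K$ appears, and repetition-freeness means it appears after $t_0$ unless it appeared earlier; I will handle elements seen before $t_0$ by choosing a witness unseen before $t_0$). Such a witness would be inserted, contradicting stabilization. Here I need $K\setminus L$ infinite, or at least one witness appearing after $t_0$. This is the main subtlety. If $K\setminus L$ is finite and all of it was shown before $t_0$, stabilization is not contradicted. In that case $L\supseteq K$ up to a finite set, so $\bigcap S_{M_\ast}(x)$ equals $K$ minus finitely many points for all but finitely many $x$, which still has upper density $1$. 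So I would argue directly that for $x$ outside a finite set, the output is an infinite set $I$ with $I\subseteq K$ and $K\setminus I$ finite, hence $\mu_{\rm up}=1$.

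Note also that $b\ge k-2$ forces this case or a full buffer with $\abs{\cL(M_\ast)}\le k-b\le 2$. If $\abs{\cL(M_\ast)}=1$, the output is $K$ itself. If it equals $2$, I apply the $k=2$ case of \cref{thm:sperner-achievability}, which gives density $1/\binom{1}{0}=1$. Thus the bound is $1$ whenever $b\ge k-2$.

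\textbf{Step 3 (full buffer).} Suppose $\abs{M_\ast}=b$. Each of the $b$ insertions removed at least one language, so $\abs{\cL(M_\ast)}\le k-b$. After $t_0$, the generator is exactly the memoryless generator of \cref{thm:sperner-achievability} for the residual collection $\cL(M_\ast)\ni K$. Its validity argument (only finitely many error inputs, each seen finitely often) and its density argument both apply verbatim to the tail of the enumeration. This gives density at least $1/\binom{k'-1}{\floor{(k'-1)/2}}$ with $k'=\abs{\cL(M_\ast)}\le k-b$. Since the central binomial coefficient is monotone in $k'$, this is at least $1/\binom{k-b-1}{\floor{(k-b-1)/2}}$.

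\textbf{Generation in the limit.} Validity before $t_0$ is irrelevant. After $t_0$, the outputs coincide with those of a generator that is valid for $K$ on all but finitely many inputs. The main obstacle I anticipate is Step 2's finite-difference case and ensuring that the generator's eventual validity holds uniformly for every target. Both are handled by the finitely-many-bad-signatures argument of \cref{thm:sperner-achievability}, applied to each of the finitely many possible residual collections.
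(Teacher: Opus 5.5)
Your proposal is correct and follows the paper's proof essentially step for step: the same greedy residual-version-space buffer, the same split into an unfull versus a full final buffer, and the same application of \cref{thm:sperner-achievability} to a residual collection of size at most $k-b$, combined with monotonicity of the central binomial coefficient. The only place you work harder than necessary is Step~2, where the finite-difference subcase is in fact vacuous. A witness $y\in K\setminus L$ would be appended at whatever time it first appears, because the buffer is never full and $L$ lies in every residual collection along the way. Hence every $L\in\cL(M_\ast)$ contains $K$ and the output is exactly $K$, although your cofinite-density fallback is also valid.
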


\begin{proof}
{Fix an arbitrary finite collection $\cL$ of size $k$.} We describe a $b$-buffer generator $\generator^{\rm buf}$.
For an example $x\in \cX$, let
\[
S(x)\coloneqq \inbrace{L\in \cL : x\in L}
\]
{denote the set of languages in $\cL$ that contain $x$, which we sometimes refer to as the ``global signature'' of $x$.}
Given a buffer
$M=(u_1,\dots,u_s),$
define its induced version space by
\[
\cL(M)\coloneqq \inbrace{L\in \cL : u_j\in L \text{ for every }j\in \insquare{s}}\,.
\]
Thus $\cL(())=\cL$.
{We sometimes refer to the induced version space as the residual collection.}

The update rule is greedy: when the current example is $x_t$, if the buffer is not yet full and
\[
\cL(M_{t-1})\cap S(x_t) \subsetneq \cL(M_{t-1})\,,
\]
then append $x_t$ to the buffer; otherwise leave the buffer unchanged.
Formally,
\[
\mathrm{upd}(M_{t-1},x_t)\coloneqq
\begin{cases}
(M_{t-1},x_t)\,, & \text{if }\abs{M_{t-1}}<b \text{ and } \cL(M_{t-1})\cap S(x_t)\subsetneq \cL(M_{t-1})\,,\\
M_{t-1}\,, & \text{otherwise.}
\end{cases}
\]
The output rule is the canonical intersection generator on the current residual collection:
if
$\generator_{\cap}^{\cA}(x)$
denotes the canonical memoryless intersection generator from the proof of \cref{thm:sperner-achievability}, applied to a finite collection $\cA$, then at time $t$ we output
$G_t \coloneqq \generator_{\cap}^{\cL(M_{t-1})}(x_t).$
This defines a valid $b$-buffer generator.

Fix a target language $K\in \cL$ and a repetition-free enumeration $E_K=(x_1,x_2,\dots)$ of $K$.
Every time the buffer changes, the residual collection $\cL(M_t)$ shrinks strictly.
Because $\cL$ is finite and the buffer has size at most $b$, the buffer stabilizes after finitely many rounds to some final state
\[
M_\ast=(u_1,\dots,u_s)\,,
\qquad s\leq b\,.
\]
We distinguish two cases.

\smallskip
\noindent
\textbf{Case 1 (The final buffer is not full):}
Assume $s<b$.
We claim that every language in $\cL(M_\ast)$ contains $K$.
Suppose not; then there exists some $L\in \cL(M_\ast)$ with $K\nsubseteq L$.
Choose a witness
$y\in K\setminus L,$
and let $t_y$ be the unique stage at which $y$ appears in the repetition-free enumeration of $K$.
Since the residual collections only shrink and $L$ survives in the final residual collection, we have
$L\in \cL(M_{t_y-1}).$
Because $y\notin L$, it follows that
$\cL(M_{t_y-1})\cap S(y)\subsetneq \cL(M_{t_y-1}).$
Also, since the construction never removes stored examples and the final buffer has size $s<b$, the buffer is not full at stage $t_y$.
Hence the update rule would append $y$ at time $t_y$, which would remove $L$ from all subsequent residual collections.
This contradicts $L\in \cL(M_\ast)$.
Therefore every language in $\cL(M_\ast)$ contains $K$.

Since $K\in \cL(M_\ast)$ as well, we have
$\bigcap_{L\in \cL(M_\ast)}L = K.$
Moreover, for every future current point $x_t\in K$, every language in $\cL(M_\ast)$ contains $x_t$, so the canonical intersection generator on $\cL(M_\ast)$ outputs exactly $K$.
Thus from the stabilization time onward the generator outputs $K$ on every round, and in particular it achieves upper density $1$.

\smallskip
\noindent
\textbf{Case 2 (The final buffer is full):}
Assume now that $s=b$.
Every insertion removed at least one language from the residual collection, so
$\abs{\cL(M_\ast)}\leq \abs{\cL}-b.$
From the stabilization time onward, the generator is exactly the canonical memoryless intersection generator on the fixed residual collection $\cL(M_\ast)$.

If $\abs{\cL(M_\ast)}\geq 3$, then by \cref{thm:sperner-achievability} this generator achieves upper density at least
\[
\frac{1}{\binom{\abs{\cL(M_\ast)}-1}{\floor{(\abs{\cL(M_\ast)}-1)/2}}}
\]
infinitely often on the target $K$, and {because $\binom{m-1}{\floor{(m-1)/2}}$ is nondecreasing in $m$ and $\abs{\cL(M_\ast)}\leq \abs{\cL}-b$,} this is at least
\[
\frac{1}{\binom{\abs{\cL}-b-1}{\floor{(\abs{\cL}-b-1)/2}}}\,.
\]
If $\abs{\cL(M_\ast)}\leq 2$, then the canonical memoryless intersection generator achieves upper density $1$ infinitely often: if the residual collection has size $1$, it outputs $K$ exactly from then on; if the residual collection has size $2$, say $\{K,L\}$, then on each current point $x\in K$ it outputs either $K$ or $K\cap L$, and either $K\setminus L$ is infinite (so $K$ is output infinitely often) or $K\cap L$ is cofinite in $K$ (hence has upper density $1$ in $K$).
So in this case the achieved upper density is again at least the claimed bound. 
\end{proof}

\begin{remark}[The proof does not use evictions]
\label{rem:no-eviction-lower-bound-no-evictions}
{Although \cref{def:no-eviction-buffer-generator} allows arbitrary evictions and replacements, the  algorithm in the proof of \cref{thm:no-eviction-universal-lower-bound} never uses them: it only appends a new example when doing so strictly shrinks the current residual collection, and once the buffer is full it keeps the stored tuple fixed forever. The lower bound therefore holds even for buffers that cannot evict stored examples.}
\end{remark}

\begin{remark}[Comparison with the memoryless bound]
\label{rem:no-eviction-memoryless-comparison}
When $b=0$, \cref{def:no-eviction-buffer-generator} reduces to the memoryless model, and
\[
\rho_{\rm up}(k,0) 
=
\frac{1}{\binom{k-1}{\floor{(k-1)/2}}}
\]
by \cref{thm:upper-density-collapse,thm:sperner-achievability}.
For general $b$, the lower bound from \cref{thm:no-eviction-universal-lower-bound} is exactly the memoryless Sperner bound with the collection size reduced from $k$ to $k-b$.
Equivalently, for $0\leq b\leq k-3$,
\[
\rho_{\rm up}(k,b)
\geq
\frac{\binom{k-1}{\floor{(k-1)/2}}}
{\binom{k-b-1}{\floor{(k-b-1)/2}}}
\cdot
\rho_{\rm up}(k,0)\,.
\]
{Thus, each stored example improves the universal guarantee by reducing the effective collection size by one.}
\end{remark}

\section{Identification with Last Guess: Incremental Learning}
\label{sec:incremental-bounded-memory} 
{In this section we study identification under last-guess memory.
Here, the learner forgets all past examples and retains only its previous output.
This is the classical incremental learning from positive data model introduced by \citet{lange1996incremental}.
It isolates the second memory resource from the introduction: past outputs rather than past examples.
The previous sections studied what generation and density guarantees remain when a generator can retain only limited information about past examples.
Here we ask what generation and identification guarantees remain when the learner retains only its most recent guess.}

\begin{definition}[Incremental learner]
\label{def:incremental-learner}
Fix an output space $\Omega$ and an initial output $\omega_0\in\Omega$. An incremental learner with output space $\Omega$ is a deterministic update function {$\learner:\Omega\times \cX\to\Omega$}.
Given an enumeration $E_K=\inparen{x_t\colon t\in\N}$, its outputs are defined recursively by
\[
\omega_t=\learner\!\inparen{\omega_{t-1},x_t}
\qquad\text{for every } t\geq 1\,.
\]
\end{definition}
{As in the earlier sections, the choice of output space determines the kind of model we obtain. We focus here on the index-based identification setting: $\Omega=\inbrace{1,\dots,N}$ for a finite collection $\cL=\inbrace{L_1,\dots,L_N}$, or $\Omega=\N$ for a countable collection, with output $i_t$ interpreted as the hypothesis language $L_{i_t}$. For finite collections, this means the learner has no extra internal states, synonym indices, or hypotheses outside the collection; its only persistent state is the previous output in $\inbrace{1,\dots,N}$.}

{Conceptually, the incremental model is more restricted than the (full-information) identification model of \citet{gold1967language}, which inspects the entire sample history. At the same time, it is stronger than a memoryless learner: an incremental learner can react differently to the same current example depending on its previous output.}
{Given this strength, one might ask if identification becomes possible in this model.}

Our first observation is that exact identification remains fragile in the incremental model, even for very small collections.

\begin{proposition}[No exact identification even for three languages]
\label{prop:incremental-not-identifiable}
There exists a collection of three infinite languages that is not identifiable in the limit by any incremental index-based learner {under arbitrary positive enumerations; indeed, the impossibility already holds for finitely repeating enumerations}.
\end{proposition}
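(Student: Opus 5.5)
The plan is to exhibit a concrete three-language collection in which every language differs from the others only on a few ``witness'' points. The adversary then exploits the fact that an incremental index-based learner for a three-element collection has exactly three internal states. I would take
\[
L_1\coloneqq \N\,,\qquad L_2\coloneqq \N\setminus\inbrace{0}\,,\qquad L_3\coloneqq \N\setminus\inbrace{1}\,.
\]
The bulk $\N_{\geq 2}$ is common to all three languages, and only the two points $0$ and $1$ carry information. On bulk inputs the learner is a deterministic map $\learner(\cdot,x)$ on $\inbrace{1,2,3}$, so it cannot tell whether a witness was shown in the past. Since all enumerations used below show each of $0,1$ only finitely often and each bulk point once, they are finitely repeating, as the statement requires.

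First, fix a learner $\learner$ that supposedly identifies $\cL=\inbrace{L_1,L_2,L_3}$. I would derive three \emph{locking} facts.
\begin{enumerate}
\item For every finite prefix $\sigma$ of an enumeration of $L_1$ containing both $0$ and $1$, there is a bulk extension $\tau$ after which $\learner$ is in state $1$. Moreover, after a further extension, state $1$ is stable on bulk inputs, in the sense that no finite bulk sequence moves it away. Otherwise one builds, Gold-style, a text of $L_1$ that leaves state $1$ infinitely often.
\item Symmetrically, every prefix of an $L_2$-text (no $0$, containing $1$) extends by bulk points to a state-$2$ locking configuration.
\item Symmetrically, every prefix of an $L_3$-text (no $1$, containing $0$) extends by bulk points to a state-$3$ locking configuration.
\end{enumerate}
Because the state space is just $\inbrace{1,2,3}$, ``locking configuration'' effectively means ``state $s$ together with a suffix of unused bulk points.'' The key point is that the learner's future depends only on its current state and the remaining input, not on which witnesses were seen.

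Second, I would combine these facts by tracing what happens when witnesses are inserted. Run an $L_2$-text until the learner is locked at $2$, then insert $0$. The resulting stream is now a prefix of an $L_1$-text, so, continuing with bulk points, the learner must eventually reach a state-$1$ lock. Now consider the state $s=\learner(2,0)$ and the bulk continuation from it.
\begin{enumerate}
\item If the learner's bulk dynamics from $s$ reach state $1$, then the same state trajectory also occurs on an $L_3$-text. Such a text is obtained by feeding $0$ into state $2$ along a prefix in which $1$ has not been shown. This requires first arranging to reach state $2$ using a $1$-free prefix, for example from an $L_3$-text on which the learner passes through state $2$. The learner then locks at $1$ on a text of $L_3$, which is a contradiction.
\item Otherwise, the bulk dynamics from $s$ never reach state $1$. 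Together with the requirement that inserting $1$ into such configurations must also lead to state $1$, this should force a state-$1$ lock to be reachable by a path with only one type of witness, giving a contradiction with either $L_2$ or $L_3$.
\end{enumerate}

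The main obstacle is exactly this reachability bookkeeping. I need to show that some state can be reached both by a prefix consistent with one language and by a prefix consistent with a different language, while the locked bulk behaviour from that state is identical. Only then can the adversary splice a wrong limit onto the text of the other language. My first attempt would be to enumerate the possible transition behaviours of the three maps $\learner(\cdot,0)$, $\learner(\cdot,1)$ and the bulk action on $\inbrace{1,2,3}$. This is a finite case check, because each locking fact forces certain fixed points.

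If this particular collection turns out to admit a clever incremental learner, I would modify it so that the witnesses are infinite sets whose appearances the adversary can space out, for instance $L_2=\N\setminus 4\N$. Incremental learners cannot count witnesses, so spacing them out removes any advantage the learner could get from them. The rest of the same argument then goes through with ``inserting a witness'' replaced by ``inserting a fresh witness.''
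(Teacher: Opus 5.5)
\textbf{There is a genuine gap: the contradiction is never derived.} Your collection $\N$, $\N\setminus\{0\}$, $\N\setminus\{1\}$ is fine. It has the same structure as the paper's $C\cup\{1\}$, $C\cup\{2\}$, $C\cup\{1,2\}$: a common bulk, two witnesses, and one language containing both. The problems are in your second step.
\begin{itemize}
\item Case~2 ends with ``this should force a state-$1$ lock to be reachable \dots''. The promised finite case check over $\learner(\cdot,0)$, $\learner(\cdot,1)$ and the bulk action is never carried out. Your closing fallback to a different collection shows the argument is still open.
\item Case~1 needs some $1$-free prefix that drives the learner into state $2$. Nothing in your hypotheses guarantees one exists.
\item Merely reaching state $1$ on an $L_3$-prefix is not a contradiction unless the learner stays there.
\item The stay cannot be obtained by transplanting a lock. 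Under finitely repeating enumerations, a lock built by your Gold-style construction only protects against continuations by bulk points not yet used. So a lock obtained on an $L_1$-text cannot be moved onto an $L_3$-prefix that has consumed a different set of bulk points. No single tail is a legitimate locked continuation for both.
\end{itemize}

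\textbf{The missing idea is to drop locking sequences and use prefixes made only of witnesses.} Then one fixed bulk tail completes all of them. Let $T$ be a repetition-free enumeration of $\N_{\geq 2}$. Let $q(\sigma)$ be the output after reading $\sigma$, with $q(\varepsilon)$ the initial output. The previous output is the only persistent state, so $q(\sigma)=q(\sigma')$ forces $\sigma\tau$ and $\sigma'\tau$ to produce identical outputs from then on. Hence they cannot be texts of two different languages in the collection. The four prefixes $\varepsilon,(0),(1),(0,1)$ are pairwise separated this way:
\begin{itemize}
\item the tail $T$ separates $(0)$, $(1)$, $(0,1)$, giving texts of $L_3$, $L_2$, $L_1$;
\item the tail $(1,T)$ separates $\varepsilon$ from $(0)$, giving $L_2$ versus $L_1$;
\item the tail $(0,T)$ separates $\varepsilon$ from both $(1)$ and $(0,1)$, giving $L_3$ versus $L_1$.
\end{itemize}
Four pairwise distinct outputs in $\{1,2,3\}$ are impossible. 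Each witness appears at most twice in these texts, so they are finitely repeating. This is the paper's proof, transcribed to your labels.
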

The proof appears in \cref{sec:proofof:prop:incremental-not-identifiable}. 

Given this impossibility result, we turn to a natural relaxation of language identification in the limit: instead of requiring the learner to converge to the exact identity of the target language, we ask only that it converge to a language that differs from the target on at most finitely many elements. Since every language in the collection is infinite, two languages that differ on finitely many elements agree on all but a vanishing fraction of their elements. 
The learner has therefore captured essentially all of the target language, even if it has not pinpointed it exactly. 
{In particular, approximate identification guarantees that all notions of density studied in the previous section (\cref{sec:density-bounds}) take the optimal value $1$, and, hence, it is significantly stronger than lower bounding the generator's density.}

We formalize approximate identification through the following notion of almost-containment.
\begin{definition}[Almost-containment]
\label{def:almost-containment}
For $A,B\subseteq \cX$, write $A\preceq_{\rm F} B$ if $A\setminus B$ is finite, and write $A\sim_{\rm F} B$ if both $A\preceq_{\rm F} B$   and $B\preceq_{\rm F} A$ hold. Equivalently, $A\sim_{\rm F} B$ if $A\triangle B$ is finite.
\end{definition}
\begin{definition}[Approximate identification in the limit]
\label{def:approximate-identification}
Let $\cL=\inbrace{L_1,\dots,L_N}$, let $K\in\cL$, and let $\inparen{i_t}_{t\in\N}$ be the outputs of an incremental index-based learner on an enumeration of $K$. We say that the learner approximately identifies $K$ in the limit if there exists $t^\ast$ such that for every $t\geq t^\ast$,
\[
L_{i_t}\sim_{\rm F} K\,.
\]
We say that $\cL$ is approximately identifiable in the limit by incremental learners if some incremental index-based learner approximately identifies every $K\in\cL$ on every enumeration of $K$.
\end{definition}
{As a concrete analogy, imagine learning a mature programming language such as C from examples. Eventually one knows how to write ordinary C programs and all certain obsolete or exceptional corner cases. Such a learner has not identified the language exactly, but has learned almost all of it.}
This is precisely the kind of the learners the above definition of approximate identification intends to capture. 

Our main result for the incremental setting shows that this relaxed goal is always achievable for finite collections.

\begin{theorem}[Approximate identification for finite collections]
\label{thm:incremental-approximate-identification}
Every finite collection of infinite languages is approximately identifiable in the limit by an incremental index-based learner.
\end{theorem}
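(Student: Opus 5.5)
We build an explicit incremental index-based learner that scans the collection in a fixed order compatible with almost-containment. The learner moves forward only when its current guess is contradicted, and the ordering keeps it from ever reaching a language strictly larger than the target. The learner never wraps around, so repeated examples in the enumeration do not matter and no finitely-repeating assumption is needed.

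\textbf{Ordering and learner.} The relation $\preceq_{\rm F}$ from \cref{def:almost-containment} is a preorder on $\cL=\inbrace{L_1,\dots,L_N}$, since $A\setminus C\subseteq (A\setminus B)\cup(B\setminus C)$. Its strict part ($A\preceq_{\rm F}B$ but not $B\preceq_{\rm F}A$) is therefore acyclic on the finite set of $\sim_{\rm F}$-classes. Relabel the languages by a linear extension, breaking ties inside $\sim_{\rm F}$-classes arbitrarily. After relabeling, whenever $L_i\preceq_{\rm F}L_j$ and $L_j\not\preceq_{\rm F}L_i$, we have $i<j$. The learner $\learner$ uses $\Omega=\inbrace{1,\dots,N}$ and initial output $\omega_0=1$. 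On input $(i,x)$ it does the following:
\begin{itemize}
\item if $x\in L_i$, it outputs $i$;
\item otherwise it outputs the smallest $j>i$ with $x\in L_j$, or $i$ if no such $j$ exists.
\end{itemize}
The fallback case never arises when the target is in $\cL$, as the next paragraph shows.

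\textbf{Key steps.} Fix $K=L_z$ and any enumeration of $K$.

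First, every current example $x$ lies in $L_z$. An induction on $t$ then shows that the guesses $i_t$ are nondecreasing and satisfy $i_t\leq z$. Indeed, if $i_{t-1}\leq z$ and $x_t\notin L_{i_{t-1}}$, then $i_{t-1}\neq z$, so $i_{t-1}<z$. The smallest $j>i_{t-1}$ with $x_t\in L_j$ is then at most $z$, since $j=z$ qualifies.

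Second, a nondecreasing sequence bounded by $N$ is eventually constant, so there is a $t_0$ with $i_t=i^\ast\leq z$ for all $t\geq t_0$.

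Third, stabilization at $i^\ast$ means no $x_t$ with $t>t_0$ lies outside $L_{i^\ast}$. Every element of $K$ appears somewhere in the enumeration. Hence $K\setminus L_{i^\ast}\subseteq\inbrace{x_1,\dots,x_{t_0}}$, which is finite, so $K\preceq_{\rm F}L_{i^\ast}$.

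Fourth, suppose for contradiction that $L_{i^\ast}\not\preceq_{\rm F}K$. Then $L_{i^\ast}$ is strictly above $K$ in the preorder, and the linear extension forces $i^\ast>z$. This contradicts $i^\ast\leq z$. Hence $L_{i^\ast}\sim_{\rm F}K$ for all $t\geq t_0$, which is exactly approximate identification in the sense of \cref{def:approximate-identification}.

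\textbf{Main obstacle.} The delicate point is the interplay between the update rule and the ordering. The learner must move forward monotonically and must never skip past $K$. This is why it jumps to the \emph{next consistent} index rather than simply to $i+1$, and why it never wraps around. Given that, the only remaining danger is stabilizing on an almost-strict superset of $K$, which is exactly the overgeneralization failure from the classical setting. The topological ordering by $\preceq_{\rm F}$ rules this out, because every such superset sits after $K$ and is never visited. I would verify the monotonicity invariant $i_t\leq z$ carefully first, since every other step rests on it.
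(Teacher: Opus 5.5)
Your proof is correct and follows essentially the same route as the paper: order the collection by almost-containment, run a monotone learner that advances only when its current guess is contradicted, and let it stabilize. Then $K\preceq_{\rm F}L_{i^\ast}$ follows because every contradicting example appears before stabilization, and the reverse direction follows from the ordering. The only difference is cosmetic. The paper's learner simply moves to $\min\{i_{t-1}+1,N\}$, which also can never skip past $z$, since it lands on $z$ and stays there, so your jump to the next consistent index is not actually needed. What the jump does buy you is that the single invariant $i_t\le z$ replaces both the paper's ``must pass through $z$'' argument and its separate $i_\infty=N$ case.
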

{\cref{thm:incremental-approximate-identification} stands in sharp contrast to \cref{prop:incremental-not-identifiable}: while \cref{prop:incremental-not-identifiable} shows that exact identification can fail already for a collection of size three, \cref{thm:incremental-approximate-identification} shows that approximate identification succeeds for every finite collection.}

{This result also gives one formal version of the contrast between the power of remembering past examples and past outputs. 
As shown in \cref{sec:density-bounds}, with memory of past examples, the best minimax density guarantee for finite collections deteriorates with the collection size no matter the number of examples the generator remembers.
In contrast, with memory of even the last guess, already suffices for enabling approximate identification of every finite collection and, hence, also for aching the optimal density regardless of the finite collection's size.}

It is also natural to ask which collections beyond the finite case admit approximate identification by incremental learners. The same argument extends to many countable families, provided the strict almost-containment relation admits a suitable topological ordering in which every target language has only finitely many predecessors. At the opposite extreme, if a collection violates the weak Angluin condition introduced by \cite{charikar2025facets,kalavasis2026characterizations}, then it is not approximately identifiable even by a full-information learner, and hence not by an incremental one either.

\subsection{Proof of \cref{prop:incremental-not-identifiable}}
\label{sec:proofof:prop:incremental-not-identifiable}

\begin{proof}
Let $C\coloneqq \inbrace{3n : n\in\N}$ and consider the collection
\[
L_1\coloneqq C\cup\inbrace{1}\,,\qquad
L_2\coloneqq C\cup\inbrace{2}\,,\qquad
L_3\coloneqq C\cup\inbrace{1,2}\,.
\]
Suppose toward a contradiction that some incremental index-based learner identifies this collection in the limit. Since the collection has exactly three languages, the learner has only three possible outputs, namely the indices $1,2,3${, and no additional states}.

For a finite text $\sigma$, let $q\inparen{\sigma}$ denote the learner's output after reading $\sigma$. Set
\[
\sigma_0\coloneqq \epsilon\,,\qquad
\sigma_1\coloneqq \inparen{1}\,,\qquad
\sigma_2\coloneqq \inparen{2}\,,\qquad
\sigma_{12}\coloneqq \inparen{1,2}\,,
\]
and let $T_C$ be any repetition-free enumeration of $C$. {The suffixes used below repeat one of the symbols $1,2$ at most once, so all resulting hard texts are still finitely repeating.} We claim that the four outputs
\[
q\inparen{\sigma_0}\,,\qquad q\inparen{\sigma_1}\,,\qquad q\inparen{\sigma_2}\,,\qquad q\inparen{\sigma_{12}}
\]
must be pairwise distinct.

{We use the following observation: if two finite prefixes leave the learner in the same output state, then appending the same suffix produces identical future outputs, because the previous output is the learner's only persistent state.}
Indeed, if $q\inparen{\sigma_1}=q\inparen{\sigma_2}$, then appending the same suffix $T_C$ yields identical future behavior on texts for $L_1$ and $L_2$, impossible. The same suffix $T_C$ shows that $q\inparen{\sigma_1}\neq q\inparen{\sigma_{12}}$ and $q\inparen{\sigma_2}\neq q\inparen{\sigma_{12}}$, since the corresponding full texts are texts for $L_1$ and $L_3$, and for $L_2$ and $L_3$, respectively. If $q\inparen{\sigma_0}=q\inparen{\sigma_1}$, then appending the common suffix consisting of $2$ followed by $T_C$ yields identical future behavior on texts for $L_2$ and $L_3$, again impossible. Similarly, $q\inparen{\sigma_0}\neq q\inparen{\sigma_2}$ by appending $1$ followed by $T_C$, and $q\inparen{\sigma_0}\neq q\inparen{\sigma_{12}}$ by appending $1$ followed by $T_C$.

Thus the learner would need at least four distinct outputs after these four finite texts, but only three outputs are available. This contradiction proves the proposition.
\end{proof}

\subsection{Proof of \cref{thm:incremental-approximate-identification}}
\label{sec:proofof:thm:incremental-approximate-identification}
 
\begin{proof}
Fix a finite collection $\cL=\inbrace{L_1,\dots,L_N}$, and define the strict almost-containment relation by
\[
A\prec_{\rm F} B
\qquad\text{if and only if}\qquad
A\preceq_{\rm F} B \text{ and } B\not\preceq_{\rm F} A\,.
\]
{The relation $\prec_{\rm F}$ is a strict partial order. Indeed, transitivity follows because if $A\prec_{\rm F}B$ and $B\prec_{\rm F}C$, then $A\setminus C\subseteq (A\setminus B)\cup(B\setminus C)$ is finite, while $C\setminus A$ is infinite since all but finitely many elements of the infinite set $C\setminus B$ lie outside $A$. Since $\cL$ is finite, choose a topological ordering and relabel so that}
\[
L_i\prec_{\rm F} L_j \quad\Longrightarrow\quad i<j\,.
\]
We consider the incremental learner with initial index $i_0\coloneqq 1$ and update rule
\[
i_t\coloneqq
\begin{cases}
i_{t-1}\,, & \text{if } x_t\in L_{i_{t-1}}\,,\\[1mm]
\min\inbrace{i_{t-1}+1,N}\,, & \text{if } x_t\notin L_{i_{t-1}}\,.
\end{cases}
\]
We will show that this learner approximately identifies every target language in $\cL$.

So fix a target language $K\in\cL$, let $z$ be the least index such that $L_z=K$, and let $E_K=\inparen{x_t}_{t\in\N}$ be any enumeration of $K$. By construction, the sequence $\inparen{i_t}_{t\in\N}$ is nondecreasing and takes values in the finite set $\inbrace{1,\dots,N}$. Hence it stabilizes: there exist $t_0$ and $i_\infty$ such that
\[
i_t=i_\infty
\qquad\text{for every } t\geq t_0\,.
\]

We first record a basic observation.

\begin{observation}\label{obs:incremental-hit-target}
If $i_t=z$ for some {$t\geq 0$}, then $i_s=z$ for every $s\geq t$.
\end{observation}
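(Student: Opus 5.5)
I will argue by induction on $s \geq t$, using only the update rule and the fact that every example of the enumeration lies in the target. The claim is that once the index equals $z$, no later example can trigger the ``advance'' branch.

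Assume $i_s = z$ for some $s \geq t$; the base case $s=t$ is the hypothesis. The next example $x_{s+1}$ is a term of the enumeration $E_K$, so $x_{s+1} \in K = L_z = L_{i_s}$. The update rule then selects its first case, giving $i_{s+1} = i_s = z$, which completes the induction.

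The argument relies on the enumeration containing only elements of $K$, which holds for every positive enumeration in \cref{def:consistentGeneration}, with or without repetitions. It also relies on $L_z = K$ exactly, which holds because $z$ was defined as the least index with $L_z = K$. Nothing about the topological ordering or the relation $\prec_{\rm F}$ is used here; those enter only later, when showing that the learner actually reaches an index $i_\infty$ with $L_{i_\infty} \sim_{\rm F} K$.

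The only step needing care is that the update compares $x_t$ against $L_{i_{t-1}}$, the previous guess. The induction therefore has to be indexed so that $i_s = z$ is used to determine $i_{s+1}$, which is exactly how it is set up above. I do not expect any real obstacle: the observation is a one-line consequence of the learner never abandoning a hypothesis that contains the current example.
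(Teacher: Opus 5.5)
Your proposal is correct and is the same argument as the paper's: once the index is $z$, every later example lies in $K=L_z=L_{i_s}$, so the update rule always takes the ``keep'' branch. Your explicit induction on $s$ just formalizes the paper's one-line version.
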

\begin{proof}
Once the learner outputs the index $z$, its current hypothesis is exactly $L_z=K$. Every later example belongs to $K$, and hence belongs to $L_z$, so the update rule never increments the index again. Thus $i_s=z$ for all $s\geq t$.
\end{proof}
We now prove the two almost-containment relations separately.

\begin{claim}\label{clm:incremental-first-direction}
It holds that $K\preceq_{\rm F} L_{i_\infty}$.
\end{claim}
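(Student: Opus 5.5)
We prove the claim by contradiction, using only the stabilization time $t_0$ and the update rule. Suppose $K\not\preceq_{\rm F} L_{i_\infty}$, so that $K\setminus L_{i_\infty}$ is infinite; in fact, nonemptiness suffices.

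First, observe that after time $t_0$ the learner's hypothesis is fixed at $L_{i_\infty}$. Whenever some $x_t\notin L_{i_\infty}$ arrives with $t>t_0$, the update rule moves the index to $\min\inbrace{i_\infty+1,N}$. We split into two cases.

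\textbf{Case $i_\infty<N$.} A late element outside $L_{i_\infty}$ would increment the index, contradicting stabilization. So every $x_t$ with $t>t_0$ lies in $L_{i_\infty}$. Hence $K\setminus L_{i_\infty}\subseteq \inbrace{x_1,\dots,x_{t_0}}$, which is finite. This is exactly $K\preceq_{\rm F} L_{i_\infty}$. The argument works for arbitrary enumerations, repetitions included: we never need each element to appear only finitely often, only that the finitely many elements outside $L_{i_\infty}$ are confined to the first $t_0$ rounds. Any element of $K$ first appearing after $t_0$ must lie in $L_{i_\infty}$, so $K\setminus L_{i_\infty}$ is contained in the set of elements appearing by time $t_0$.

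\textbf{Case $i_\infty=N$.} Here the update rule caps the index, so late counterexamples do not move it and the argument above fails. This is the main obstacle, and we handle it through the topological ordering and \cref{obs:incremental-hit-target}. Since the index is nondecreasing and starts at $1$, and since $z\le N$, the trajectory either passes through $z$ or jumps over it; but increments are by exactly one, so it cannot jump. Hence $i_t=z$ for some $t$. By \cref{obs:incremental-hit-target}, $i_s=z$ for all later $s$, so $i_\infty=z$ and $L_{i_\infty}=K$, which gives the claim trivially. The same reasoning also covers the first case: the index moves in unit steps from $1$ and $z\le N$, so whenever $i_\infty\geq z$ the trajectory hit $z$ and stayed there, giving $i_\infty=z$.

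It therefore remains only to handle $i_\infty<z\le N$, which falls into the first case and yields $K\preceq_{\rm F}L_{i_\infty}$. The key steps, in order, are: monotonicity with unit increments; the dichotomy $i_\infty=z$ versus $i_\infty<z$; and, for $i_\infty<z$, the observation that no late counterexample can appear, since it would push the index to $i_\infty+1\le N$ without being capped. The ordering is not needed for this direction. It will matter for the reverse inclusion $L_{i_\infty}\preceq_{\rm F}K$, where $K\preceq_{\rm F}L_{i_\infty}$ together with $i_\infty<z$ must be shown to force $L_{i_\infty}\sim_{\rm F}K$.
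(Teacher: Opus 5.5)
Your proof is correct and matches the paper's argument: the same split on $i_\infty=N$ (unit increments from $i_0=1$ force the trajectory through $z$, where \cref{obs:incremental-hit-target} freezes it, so $L_{i_\infty}=K$) versus $i_\infty<N$ (any element of $K\setminus L_{i_\infty}$ first shown after $t_0$ would trigger an uncapped increment, so $K\setminus L_{i_\infty}\subseteq\{x_1,\dots,x_{t_0}\}$). The opening contradiction framing and the mention of the topological ordering in the capped case are never actually used, as you note yourself at the end, but they do no harm.
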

\begin{proof}
We divide the proof into two cases.

\paragraph{Case 1: $i_\infty=N$.}
We claim that then necessarily $z=N$. Indeed, if $z<N$, then since the index sequence starts at $i_0=1$ and increases only by steps of size one, in order to reach $N$ it must pass through the value $z$ at some earlier time. But by \cref{obs:incremental-hit-target}, once the sequence reaches $z$, it can never move again. This contradicts $i_\infty=N>z$. Hence $z=N$, and therefore
\[
L_{i_\infty}=L_N=K\,,
\]
so in this case $K\preceq_{\rm F} L_{i_\infty}$ holds trivially.

\paragraph{Case 2: $i_\infty<N$.}
Take any element $y\in K\setminus L_{i_\infty}$. Let $t$ be the first time at which the enumeration presents $y$, so $x_t=y$. If $t>t_0$, then $i_{t-1}=i_\infty$ and $x_t\notin L_{i_\infty}$, so the update rule would force
\[
i_t=\min\inbrace{i_\infty+1,N}>i_\infty\,,
\]
contradicting the definition of $t_0$. Therefore every element of $K\setminus L_{i_\infty}$ must appear by time $t_0$. Since only finitely many examples appear by time $t_0$, the set $K\setminus L_{i_\infty}$ is finite. Hence $K\preceq_{\rm F} L_{i_\infty}$.
\end{proof}

\begin{claim}\label{clm:incremental-second-direction}
It holds that $L_{i_\infty}\preceq_{\rm F} K$.
\end{claim}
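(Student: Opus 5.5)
The plan is to argue by contradiction, combining \cref{clm:incremental-first-direction} with the topological ordering of $\prec_{\rm F}$ and the step-by-one structure of the learner. Note that the target $K$ is fixed, while the learner never sees which index it is.

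First I will record that $i_\infty\leq z$. The index sequence starts at $i_0=1\leq z$, is nondecreasing, and increases by at most one per round. So if it ever exceeded $z$, it would have to take the value $z$ at some earlier time. By \cref{obs:incremental-hit-target} it would then stay at $z$ forever, which is a contradiction. Hence $i_t\leq z$ for all $t$, and in particular $i_\infty\leq z$. This is the same argument as Case~1 of \cref{clm:incremental-first-direction}, now applied to an arbitrary $i_\infty$.

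Next, suppose toward a contradiction that $L_{i_\infty}\setminus K$ is infinite, i.e., $L_{i_\infty}\not\preceq_{\rm F} K$. By \cref{clm:incremental-first-direction} we already have $K\preceq_{\rm F} L_{i_\infty}$. Together these two facts give $K\prec_{\rm F} L_{i_\infty}$, that is, $L_z\prec_{\rm F} L_{i_\infty}$. The relabeling was chosen so that $L_i\prec_{\rm F} L_j$ implies $i<j$. Therefore $z<i_\infty$, which contradicts $i_\infty\leq z$. So $L_{i_\infty}\setminus K$ is finite, which is the claim.

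I do not expect a real obstacle here. The only point needing care is that ties ($L_j=K$ for several $j$) cause no trouble. Choosing $z$ as the least index with $L_z=K$, the bound $i_\infty\leq z$ comes purely from the dynamics, while the strict relation $\prec_{\rm F}$ is applied between $L_z$ and $L_{i_\infty}$ with no reference to duplicates. Combining this claim with \cref{clm:incremental-first-direction} gives $L_{i_t}\sim_{\rm F}K$ for all $t\geq t_0$, completing \cref{thm:incremental-approximate-identification}.
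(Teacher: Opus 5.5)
Your proof is correct and follows the paper's argument. Both use \cref{clm:incremental-first-direction} and the failure of the reverse containment to get $K\prec_{\rm F}L_{i_\infty}$, hence $z<i_\infty$ by the topological ordering. This contradicts the fact that the step-by-one index sequence would have to pass through $z$ and then stay there by \cref{obs:incremental-hit-target}; the only difference is that you prove $i_\infty\leq z$ first instead of deriving the contradiction at the end.
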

\begin{proof}
Suppose for contradiction that $L_{i_\infty}\not\preceq_{\rm F} K$. By \cref{clm:incremental-first-direction}, we already know that $K\preceq_{\rm F} L_{i_\infty}$. Thus the failure of the reverse containment implies that the containment is strict, namely
$K\prec_{\rm F} L_{i_\infty}.$
By the choice of the topological ordering of $\cL$, this forces
$z<i_\infty.$

But now the monotonicity of the index sequence gives a contradiction. Since $i_0=1\leq z$ and the sequence increases only by one at a time, in order to end at the value $i_\infty>z$, it must pass through the value $z$ at some intermediate stage. By \cref{obs:incremental-hit-target}, once the sequence reaches $z$, it can never move beyond $z$. This is impossible. Therefore our assumption was false, and we conclude that $L_{i_\infty}\preceq_{\rm F} K$.
\end{proof}
Combining \cref{clm:incremental-first-direction,clm:incremental-second-direction}, we obtain
$L_{i_\infty}\sim_{\rm F} K.$
Since $i_t=i_\infty$ for every $t\geq t_0$, it follows that
\[
L_{i_t}\sim_{\rm F} K
\qquad\text{for every } t\geq t_0\,.
\]
Thus the learner approximately identifies $K$ in the limit. Since $K\in\cL$ and its enumeration were arbitrary, the theorem follows.
\end{proof}

\section{Conclusion}

{In this paper, we study how bounded memory changes the hierarchy of generation, density, and identification in the limit. We show that the basic generation guarantee is surprisingly robust to memory restrictions. \cref{thm:memoryless-finite-reps} extends the guarantees of \citet{kleinberg2024language} to the memoryless setting: under the mild assumption of finitely-repeating enumerations, every countable collection admits a memoryless set-based generator. Moreover, requiring finite repetitions is necessary with bounded memory (\cref{thm:memoryless-with-reps}), though it is immaterial when the learner has unbounded memory.

Next, we turn to the stronger requirement of achieving density. Here the picture is more nuanced: memoryless generators admit a positive minimax upper density (\cref{thm:minimax-upper-density-memoryless-finite}) but no positive lower density once the collection has at least three languages (\cref{thm:no-uniform-positive-density}). 
We then ask whether giving the generator more memory of past examples improves the picture. The improvement is limited: a sliding window of any finite length gives no improvement at all (\cref{thm:window-upper-density-exact}), while an adaptive buffer does help by preserving informative examples once they appear (\cref{thm:no-eviction-universal-lower-bound}). 

The memory models we have considered so far all involve past examples --- keeping none of them, the most recent few, or a self-chosen subset. In the final part of the paper, we turn to a different resource: rather than past examples, the learner remembers only its previous output. Here the picture changes substantially. Although \cref{prop:incremental-not-identifiable} shows that exact identification can fail even for a collection of three languages, \cref{thm:incremental-approximate-identification} shows that a learner remembering only its most recent guess can approximately identify any finite collection. Since approximate identification gives density $1$ in the target under all the density notions considered here, remembering only the previous guess suffices for density $1$ on every finite collection  (\cref{thm:no-eviction-universal-lower-bound}) --- something no fixed amount of {sliding-window} memory can do. 
{Hence, the} type of memory used can matter more than the amount.

These results leave several directions open. One is to characterize the countable collections that are approximately identifiable with last-guess memory. Another is to understand what access to the collection is needed to implement bounded-memory generators: our constructions assume the algorithm has full access to the collection, and it remains unclear which oracle, computational, or representation assumptions preserve the same guarantees.}

\printbibliography 

\newpage

\appendix
\crefalias{section}{appendix}

\section{Further Results on Incremental Generation}
\label{sec:appendix:incremental-generation}

In this section, we continue our discussion of the incremental model.
We first show that exact identification and exact index-based generation can both fail for a collection of just three languages (\cref{sec:appendix:three-languages}).
We then show that this limitation is specific to index-based output: once the generator is allowed to output elements of the domain, the previous output can serve as a codeword that encodes the entire interaction history, effectively reducing the incremental model back to the full-information setting (\cref{sec:appendix:element-coding}).
Combining these ideas, we obtain a positive result for incremental element-based generation of every finite collection (\cref{sec:appendix:incremental-positive-generation}).

\subsection{Three Languages Already Prevent Exact Identification}
\label{sec:appendix:three-languages}
We begin by showing that exact identification in the incremental model can fail even for very small collections.

\begin{theorem}\label{thm:incremental-three-identification}
There exists a collection of three infinite languages that is not identifiable in the limit by any {incremental index-based learner}.
\end{theorem}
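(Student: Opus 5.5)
This is the same statement as \cref{prop:incremental-not-identifiable}, so I would reuse its construction and its state-counting argument. The collection is $C\coloneqq\inbrace{3n:n\in\N}$ together with
\[
L_1\coloneqq C\cup\inbrace{1},\qquad L_2\coloneqq C\cup\inbrace{2},\qquad L_3\coloneqq C\cup\inbrace{1,2}.
\]
All three languages are infinite and pairwise distinct, and their only differences lie in the two ``marker'' elements $1$ and $2$.

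The key structural fact concerns an incremental index-based learner on a three-language collection. Its entire persistent state is its previous output in $\inbrace{1,2,3}$. So if two finite prefixes $\sigma,\sigma'$ leave it in the same state, then for every common suffix $\tau$ the outputs on $\sigma\tau$ and $\sigma'\tau$ agree from that point on. In particular, their limits agree. Hence, if $\sigma\tau$ is a text for $L_a$ and $\sigma'\tau$ is a text for $L_b$ with $a\neq b$, the learner fails on one of them.

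I would apply this to the four prefixes $\epsilon$, $(1)$, $(2)$ and $(1,2)$, writing $q(\cdot)$ for the state after each prefix, and show that the four states are pairwise distinct. Taking $T_C$ to be a repetition-free enumeration of $C$, the pairs are separated as follows.
\begin{itemize}
\item For $q(1)$ versus $q(2)$, append $T_C$; the resulting texts are for $L_1$ and $L_2$.
\item For $q(1)$ versus $q(1,2)$, and for $q(2)$ versus $q(1,2)$, again append $T_C$; this gives texts for $L_1$ and $L_3$, and for $L_2$ and $L_3$.
\item For $q(\epsilon)$ versus $q(1)$, append $2\,T_C$; this gives texts for $L_2$ and $L_3$.
\item For $q(\epsilon)$ versus $q(2)$, append $1\,T_C$; this gives texts for $L_1$ and $L_3$.
\item For $q(\epsilon)$ versus $q(1,2)$, append $1\,T_C$; this gives texts for $L_1$ and $L_3$, where the second text repeats $1$ once, which is still a valid positive text.
\end{itemize}
Four pairwise distinct values cannot fit in $\inbrace{1,2,3}$, which is the contradiction. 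Every text used repeats at most one symbol, once, so the result also holds under finitely repeating enumerations.

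The only real subtlety is in formalizing ``index-based'' correctly. The learner's state space must literally be the set of indices, with no extra states or synonyms. This is exactly what makes the pigeonhole argument valid, so I would state this assumption explicitly, as in \cref{def:incremental-learner}. Apart from that, the proof is just the case check above.
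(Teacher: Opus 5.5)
Your proof is correct: the four prefixes $\epsilon,(1),(2),(1,2)$ are pairwise separated as you describe. It is exactly the argument the paper gives for the equivalent \cref{prop:incremental-not-identifiable}, and the initial output $\omega_0\in\{1,2,3\}$ is what makes $q(\epsilon)$ count as one of the three states.

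For this appendix restatement, however, the paper takes a different route. It proves the stronger \cref{prop:incremental-three-generation}: the pairwise incomparable collection $T\cup\{a,b\}$, $T\cup\{a,c\}$, $T\cup\{b,c\}$ admits no incremental index-based \emph{generator}. It then observes that any incremental identifier is in particular such a generator. Pairwise incomparability is what forces a successful generator to settle on the exact target index. The resulting argument is considerably more delicate than yours. It shows that $q(a),q(b),q(c)$ exhaust the three states, identifies the states $p_1,p_2,p_3$ reached by ``completed'' prefixes, and fills in the whole transition table. It then checks that no state has the row $(q(a),q(b),q(c))$ that the initial state must have.

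That extra work buys the impossibility of index-based generation, which your nested collection cannot deliver. Since $L_1,L_2\subseteq L_3$, consider the incremental learner that moves to index $1$ (resp.\ $2$) upon seeing the marker $1$ (resp.\ $2$) and otherwise keeps its index. It generates from all three languages, because every index is valid when the target is $L_3$. For the identification statement as posed, your four-prefix pigeonhole is the shorter and cleaner proof.
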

Note that any {incremental index-based learner} that eventually stabilizes to the correct index also succeeds as an incremental index-based generator for the same collection, so an impossibility for generation implies an impossibility for identification.
We prove the following stronger result about generation. 

\begin{proposition}\label{prop:incremental-three-generation}
Fix distinct elements $a,b,c\in \cX$ and an infinite set $T\subseteq \cX\setminus \inbrace{a,b,c}$. Define
\[
L_1\coloneqq T\cup \inbrace{a,b}\,,\qquad
L_2\coloneqq T\cup \inbrace{a,c}\,,\qquad
L_3\coloneqq T\cup \inbrace{b,c}\,.
\]
Then the collection $\cL\coloneqq \inbrace{L_1,L_2,L_3}$ is not generable in the limit by any incremental index-based generator. In fact, the impossibility already holds under finitely repeating enumerations.
\end{proposition}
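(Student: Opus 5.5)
The plan is a state-counting argument in the spirit of the proof of \cref{prop:incremental-not-identifiable}, applied to an arbitrary update map $\learner:\{1,2,3\}\times\cX\to\{1,2,3\}$. The first reduction is that generation and identification coincide for this collection. The three languages are pairwise $\subseteq$-incomparable: $L_1$ is the only one containing both $a$ and $b$, and so on. Hence an index-based output $L_i$ is valid for a target $L_j$ exactly when $i=j$. So on every finitely repeating enumeration of $L_j$, the state must eventually be $j$ forever.

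The basic tool is the observation already used in \cref{prop:incremental-not-identifiable}. If two finite texts $\sigma,\tau$ leave the learner in the same state, and some suffix $\rho$ makes $\sigma\rho$ a text for $L_i$ and $\tau\rho$ a text for $L_j$ with $i\neq j$, we get a contradiction. The reason is that the future outputs on the two texts are identical. Taking $\rho$ to end with a fixed repetition-free enumeration of $T$ keeps everything finitely repeating, since each element of $T$ and each of $a,b,c$ is then used at most a bounded number of times. Write $q(\sigma)$ for the state after $\sigma$.

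I would use this tool in three steps.
\begin{enumerate}
\item Let $\pi,\pi'$ range over finite words over $T$. I would show that the three classes
\[
Q_a\coloneqq\{q(\pi a)\},\qquad Q_b\coloneqq\{q(\pi b)\},\qquad Q_c\coloneqq\{q(\pi c)\}
\]
are pairwise disjoint. For example, $q(\pi a)\neq q(\pi' b)$ via the suffix $c\,T$, since this yields texts for $L_2$ versus $L_3$. The other pairs are handled symmetrically.
\item Since only three states exist, each class is a singleton, say $Q_a=\{\alpha\}$, $Q_b=\{\beta\}$, $Q_c=\{\gamma\}$, and $\{\alpha,\beta,\gamma\}=\{1,2,3\}$.
\item I would locate the two-symbol configurations. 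For instance, $q(\pi ab)\neq\gamma$ via the suffix $a\,T$, which yields texts for $L_1$ versus $L_2$. Hence $q(\pi ab)\in\{\alpha,\beta\}$, and symmetrically for the pairs $\{a,c\}$ and $\{b,c\}$.
\end{enumerate}

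Next I would bring in the convergence requirement. On the text $ab\,t_1t_2\cdots$ for $L_1$, the state is eventually $1$ and stays $1$ while reading elements of $T$. The same holds for $L_2$ and $L_3$. The aim is a fourth, pairwise-distinguishable configuration, or a direct contradiction with the $\alpha,\beta,\gamma$ assignment. I would first try the following route. Let $\rho$ be a long $T$-block after which the state has stabilised at $1$ on an $L_1$-text. Then a late single re-insertion of $a$ or $b$, which finitely repeating enumerations allow, maps state $1$ to some state $A(1,a)$. Comparing this with the classes $Q_a,Q_b,Q_c$, for instance by showing $q(ab\rho a)\in\{\alpha,\beta\}$, should force the $T$-dynamics starting from $\alpha$ and from $\beta$ to converge to both $1$ and to the target of a different language. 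An example is the $L_2$ continuation $\pi a c\,T$ starting from $\alpha$. That gives the contradiction.

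The main obstacle is exactly this last step. The clean pairwise-distinguishing trick only yields three distinct configurations. For example, $q(\pi ab)$ versus $q(\pi a)$ cannot be separated by a common valid suffix, because any suffix containing $c$ invalidates the first text. So the contradiction must come from combining the singleton classes with the eventual-stability requirement along $T$-runs. This requires a careful choice of which symbol to re-insert late and which continuation to compare against.
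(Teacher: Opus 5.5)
\textbf{There is a gap: the proposal never produces the contradiction.} Your Steps 1--3 are correct. Step 1 slightly generalizes the paper's first step, which only compares $q(a),q(b),q(c)$. But the argument stops exactly where the contradiction has to appear, and the route you sketch for it cannot work.

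Late re-insertions into an $L_1$-text carry no information. Take all prefixes whose distinguished symbols are exactly $a$ and $b$, with elements of $T$ interleaved arbitrarily. Each of them is separated from every prefix whose distinguished symbols are exactly $\{a,c\}$ or exactly $\{b,c\}$ by the bare common suffix $T$, since appending $T$ gives texts for different languages. So all of them land in a single state $p_1$, and $\learner(p_1,a)=\learner(p_1,b)=\learner(p_1,t)=p_1$ for every $t\in T$. Stability along $abT$ even forces $p_1=1$. Re-inserting $a$ or $b$ late therefore just returns $p_1$, and no conflict can arise there.

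Your diagnosis is also off. You say the distinguishing trick yields only three configurations, so a fourth configuration or the $T$-dynamics must be brought in. In fact you need neither a fourth configuration nor eventual stability. You need a \emph{second family} of three configurations, and the obstruction sits at the start of the text, not the end.

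\textbf{The missing ingredients, as in the paper, are these.} First, use the ``completed'' prefixes above. They give distinct states $p_1,p_2,p_3$ with $\{p_1,p_2,p_3\}=\{\alpha,\beta,\gamma\}=[3]$. Every completed prefix for $L_r$ lands in $p_r$. Reading this off two- and three-letter words gives
\begin{itemize}
\item $\learner(\alpha,b)=\learner(\beta,a)=p_1$, $\learner(\alpha,c)=\learner(\gamma,a)=p_2$, and $\learner(\beta,c)=\learner(\gamma,b)=p_3$;
\item the self-loops $\learner(p_1,a)=\learner(p_1,b)=p_1$, and similarly for $p_2$ and $p_3$.
\end{itemize}

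Second, use the initial state. We have $i_0\in[3]=\{\alpha,\beta,\gamma\}$, and its row on $(a,b,c)$ is $(\alpha,\beta,\gamma)$ by definition of $\alpha,\beta,\gamma$.
\begin{itemize}
\item If $i_0=\alpha$, then $p_1=\learner(\alpha,b)=\beta$ and $p_2=\learner(\alpha,c)=\gamma$, so $p_3=\alpha$. Then $\beta=\learner(\alpha,b)=\learner(p_3,b)=p_3=\alpha$, which is impossible.
\item If $i_0=\beta$, then $p_1=\alpha$ and $p_3=\gamma$, so $p_2=\beta$. Then $\alpha=\learner(\beta,a)=\learner(p_2,a)=\beta$, which is impossible.
\item If $i_0=\gamma$, then $p_2=\alpha$ and $p_3=\beta$, so $p_1=\gamma$. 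Then $\alpha=\learner(\gamma,a)=\learner(p_1,a)=\gamma$, which is impossible.
\end{itemize}
The paper reaches the same contradiction by a slightly longer route. It additionally shows $q(aa)=A$ and then does a case analysis forcing $(p_1,p_2,p_3)\in\{(A,C,B),(B,A,C)\}$. In neither case does any state have the row $(A,B,C)$ required of $i_0$.
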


\begin{proof}
Suppose toward a contradiction that $\generator:[3]\times \cX\to [3]$ is an incremental index-based generator for $\cL$, with initial state $i_0\in [3]$.
For a finite sequence $\sigma$, let $q(\sigma)$ denote the state reached after reading $\sigma$, and let $R$ be an injective enumeration of $T$.

The three languages are pairwise incomparable under inclusion. Therefore, if the target language is $L_r$ and the generator succeeds, then from some point onward it must output the index $r$ itself: no other language in the family is a subset of $L_r$. Hence whenever two finite prefixes can be followed by the same tail to produce valid enumerations for two different target languages, those two prefixes must lead to different states. 

Set $A\coloneqq q(a)$, $B\coloneqq q(b)$, and $C\coloneqq q(c)$. Comparing the pairs of texts $\inparen{a,c,R}$ and $\inparen{b,c,R}$, $\inparen{a,b,R}$ and $\inparen{c,b,R}$, and $\inparen{b,a,R}$ and $\inparen{c,a,R}$, we obtain that $A,B,C$ are pairwise distinct. Since there are exactly three states, this gives
\[
\inbrace{A,B,C}=[3]\,.
\]
Call a finite sequence completed for $L_1$ if every distinguished symbol it contains lies in $\inbrace{a,b}$ and both $a$ and $b$ appear; define completed sequences for $L_2$ and $L_3$ analogously. Appending the common tail $R$ to any completed sequence for $L_r$ yields a valid enumeration for $L_r$. Since completed prefixes for different target languages must land in different states and there are only three states altogether, there exist distinct states $p_1,p_2,p_3$ such that every completed prefix for $L_r$ lands in $p_r$. In particular,
\[
q(a,b)=q(b,a)=p_1\,,\qquad q(a,c)=q(c,a)=p_2\,,\qquad q(b,c)=q(c,b)=p_3\,,
\]
and also
\[
q(a,b,a)=q(a,b,b)=p_1\,,\qquad q(a,c,a)=q(a,c,c)=p_2\,,\qquad q(b,c,b)=q(c,b,c)=p_3\,.
\]
Next, $q(a,a)$ cannot equal $B$, since then the texts $\inparen{a,a,c,R}$ and $\inparen{b,c,R}$ would reach the same state before the common tail $\inparen{c,R}$, even though they are enumerations for $L_2$ and $L_3$. Similarly, $q(a,a)\neq C$, so $q(a,a)=A$. By symmetry, $q(b,b)=B$ and $q(c,c)=C$. Therefore
\[
\generator(A,a)=A\,,\qquad \generator(B,b)=B\,,\qquad \generator(C,c)=C\,,
\]
while the two-element completed prefixes give
\[
\generator(A,b)=p_1\,,\qquad \generator(A,c)=p_2\,,\qquad \generator(B,a)=p_1\,,
\]
\[
\generator(B,c)=p_3\,,\qquad \generator(C,a)=p_2\,,\qquad \generator(C,b)=p_3\,,
\]
and the three-element completed prefixes give
\[
\generator(p_1,a)=\generator(p_1,b)=p_1\,,\qquad
\generator(p_2,a)=\generator(p_2,c)=p_2\,,\qquad
\generator(p_3,b)=\generator(p_3,c)=p_3\,.
\]
Now $p_1\neq C$: otherwise the last display gives $\generator(C,a)=C=\generator(C,b)$, while the previous display gives $\generator(C,a)=p_2$ and $\generator(C,b)=p_3$, forcing $p_2=p_3=C$, a contradiction since $p_2$ and $p_3$ are distinct. By the same argument, $p_2\neq B$ and $p_3\neq A$. Since $\inbrace{p_1,p_2,p_3}=\inbrace{A,B,C}$, only two cases remain:
\[
\inparen{p_1,p_2,p_3}=\inparen{A,C,B}
\qquad\text{or}\qquad
\inparen{p_1,p_2,p_3}=\inparen{B,A,C}\,.
\]
In the first case, the transition table entries for $A,B,C$ on inputs $\inparen{a,b,c}$ are
\[
A:\inparen{A,A,C}\,,\qquad B:\inparen{A,B,B}\,,\qquad C:\inparen{C,B,C}\,,
\]
while in the second case they are
\[
A:\inparen{A,B,A}\,,\qquad B:\inparen{B,B,C}\,,\qquad C:\inparen{A,C,C}\,.
\]
On the other hand, by definition of $A,B,C$, the transition entry of the initial state $i_0$ must be
$\inparen{A,B,C},$
since $\generator(i_0,a)=A$, $\generator(i_0,b)=B$, and $\generator(i_0,c)=C$. Because $i_0\in [3]=\inbrace{A,B,C}$, its entry must match one of the three entries listed above, but in neither case does any entry equal $\inparen{A,B,C}$. This contradiction completes the proof.

Finally, every enumeration used in the proof repeats each distinguished symbol at most twice and enumerates $T$ injectively, so the same impossibility holds under finitely repeating enumerations.
\end{proof}
{This also proves \cref{thm:incremental-three-identification}: if an incremental index-based learner identified the displayed collection, then, because the three languages are pairwise incomparable, its eventual correct index would be a valid eventual index-based generator output for each target, contradicting \cref{prop:incremental-three-generation}.}

\subsection{Element-Based Generation Can Simulate Infinite Memory via Coding}
\label{sec:appendix:element-coding}
The impossibility in \cref{prop:incremental-three-generation} relies crucially on the fact that the output is an index from a finite set $[N]$. 
An index carries at most $\log{N}$ bits, so after each round the generator retains only a bounded amount of information about the past.
(Note that, as we showed in \cref{thm:incremental-approximate-identification}, if one relaxes the requirement of identification slightly to allow for a finite number of ``errors,'' then surprisingly the $\log{N}$ bits retained from the indices are sufficient.)

The situation changes completely once the generator is allowed to output an element of the domain $\cX$. The key observation is that the previous output is fed back to the generator at the next round, and an element of $\cX$ can encode an arbitrarily long string. The generator can therefore use its output as a codeword: it embeds a description of the entire observed prefix into the identity of the output element, and at the next round it decodes this description before choosing a new codeword. In this way, a single output element can carry enough information to simulate a full-information learner, and the bounded-memory restriction effectively collapses. 
From the perspective of bounded-memory learning, this is undesirable: the model no longer limits how much information is retained across rounds, but only hides that information inside the representation of the last output. The apparent ``memory bound'' is therefore largely an artifact of the output format rather than a genuine restriction on the learner, which is why we do not study this model further.

We make this precise using a construction of subsets below. For each language $L_i$ in the collection, we select an infinite subset $C_i\subseteq L_i$ such that the subsets are pairwise disjoint and each $C_i$ is cofinal\footnote{A subset $C\subseteq \cX$ is cofinal (with respect to an ordering $\prec$) if for every $x\in \cX$ there exists $y\in C$ such that $x\prec y$.} in the canonical ordering of $\cX$. The identity of the subset encodes the current hypothesis or hidden state, while the position of the output within that subset encodes any additional finite information, including the entire observed prefix. 
Thus the previous output serves as {an unbounded storage device} for the past (which is, as we mentioned, undesirable).

\begin{lemma}[Disjoint infinite cofinal subsets]
\label{lem:incremental-cofinal-subsets}
For every finite collection $\cL=\inbrace{L_1,\dots,L_N}$ of infinite languages, there exist sets $C_1,\dots,C_N\subseteq \cX$ such that, for every $i\in [N]$, the set $C_i$ is an infinite subset of $L_i$, the sets $C_1,\dots,C_N$ are pairwise disjoint, and for every $x\in \cX$ there exists $y\in C_i$ with $x\prec y$.
\end{lemma}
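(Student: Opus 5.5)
The plan is a round-robin diagonal construction along the canonical ordering $(\bar{x}_1,\bar{x}_2,\dots)$ of $\cX$, which we write as $\prec$. We build the sets $C_1,\dots,C_N$ in stages $s=1,2,\dots$. In each stage we visit the indices $i=1,\dots,N$ in order. For each $i$ we pick the $\prec$-least element $y\in L_i$ that satisfies two conditions:
\begin{itemize}
\item $y\notin F$, where $F$ is the finite set of elements already placed into any of the $C_j$'s;
\item $\bar{x}_s\prec y$.
\end{itemize}
We then add this $y$ to $C_i$.

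First we check that the choice is always possible. At any moment only finitely many elements have been placed, so $F$ is finite. Only finitely many elements of $\cX$ precede $\bar{x}_s$. Since $L_i$ is infinite, $L_i\setminus(F\cup\{x:x\preceq \bar{x}_s\})$ is nonempty, and the least element exists because $\prec$ is a well-ordering of type $\omega$.

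Next come the three required properties.
\begin{itemize}
\item \emph{Containment.} Every element added to $C_i$ lies in $L_i$, so $C_i\subseteq L_i$.
\item \emph{Disjointness.} Each newly chosen element avoids everything chosen before, so the $C_i$ are pairwise disjoint.
\item \emph{Infinite and cofinal.} Each $C_i$ receives one new element per stage, and these elements are distinct, so $C_i$ is infinite. For cofinality, fix $x\in\cX$; then $x=\bar{x}_s$ for some $s$, and the element added to $C_i$ in stage $s$ satisfies $x\prec y$ by construction.
\end{itemize}

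There is no real obstacle here. The only point needing care is the well-definedness step: we must confirm that at every step the finitely many forbidden elements cannot exhaust the infinite language $L_i$. This holds because the elements chosen so far are finite in number and the initial segment below $\bar{x}_s$ is finite.

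The languages $L_i$ may overlap, which is why disjointness has to be enforced by avoiding $F$ explicitly rather than relying on the languages themselves. If $L_i=L_j$ for $i\neq j$, the construction still works, since each step only removes finitely many elements.
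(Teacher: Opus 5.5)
Your proposal is correct and follows essentially the same route as the paper. The paper also builds the sets stage by stage, taking $c_{i,t}$ to be the canonically first element of $L_i$ that avoids all previously chosen elements and lies above $\bar{x}_t$. Disjointness, infinitude, and cofinality then follow exactly as you argue.
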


\begin{proof}
Write the canonical ordering of $\cX$ as $\inparen{\bar{x}_1,\bar{x}_2,\dots}$. For each $t\geq 1$ and $i\in [N]$, choose an element $c_{i,t}\in L_i$ recursively as follows. Let
\[
U_{t,i}\coloneqq \inbrace{c_{j,r}: 1\leq r\leq t-1,\ j\in [N]}\cup \inbrace{c_{j,t}: 1\leq j\leq i-1}\,.
\]
{Since only finitely many elements of $\cX$ lie at or before $\bar{x}_t$ in the canonical ordering, the infinite set $L_i$ contains infinitely many elements above $\bar{x}_t$; removing the finite set $U_{t,i}$ still leaves some $y\in L_i\setminus U_{t,i}$ with $\bar{x}_t\prec y$. Let $c_{i,t}$ be the first such $y$ in the canonical ordering. After making these choices for all $t$ and $i$, set $C_i\coloneqq \inbrace{c_{i,t}: t\in\N}$. Then each $C_i$ is an infinite subset of $L_i$, the sets $C_1,\dots,C_N$ are pairwise disjoint by construction, and for any $x\in \cX$, taking $t$ with $x=\bar{x}_t$ gives $x\prec c_{i,t}\in C_i$.}
\end{proof}
\noindent Write each $C_i$ in increasing order as
{$C_i=\inbrace{d_{i,1}\prec d_{i,2}\prec d_{i,3}\prec \cdots}.$}
Since the sets are pairwise disjoint, every element of $\bigcup_{i\in [N]} C_i$ uniquely determines both its language index $i$ and its position within $C_i$.
This is the mechanism that enables coding: the generator can read off both the current hypothesis and the encoded history from a single output element.
{In this section, an incremental element-based generator means an incremental learner with output space $\cX$: it has an initial memory state $s_0\in\cX$, updates by $s_t=\generator(s_{t-1},x_t)$, and succeeds on a target $K$ if, for all sufficiently large $t\geq 1$, $s_t\in K\setminus S_t$, where $S_t=\inbrace{x_1,\dots,x_t}$. The initial state $s_0$ is not itself required to be a valid generated element.}

\begin{theorem}[Coding compilation]
\label{thm:incremental-coding-compilation}
Let $\cL=\inbrace{L_1,\dots,L_N}$ be a finite collection of infinite languages. Suppose $M$ is an index-based learner in the unrestricted full-information model with the following property: for every target language $K\in \cL$ and every enumeration $\inparen{x_t}_{t\in\N}$ of $K$, if $i_t$ denotes the index output by $M$ after seeing $\inparen{x_1,\dots,x_t}$, then there exists $t_0$ such that
\[
L_{i_t}\preceq_{\rm F} K\qquad\text{for all } t\geq t_0\,.
\]
Then $\cL$ admits an incremental element-based generator.
\end{theorem}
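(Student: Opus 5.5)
The plan is to use the previous output as a codeword for the entire observed prefix, via \cref{lem:incremental-cofinal-subsets}, and then to simulate $M$ on the decoded history. Fix the disjoint cofinal sets $C_1,\dots,C_N$ with $C_i\subseteq L_i$, and write $C_i=\inbrace{d_{i,1}\prec d_{i,2}\prec\cdots}$. Fix an injective coding $\langle\cdot\rangle:\cX^{<\omega}\to\N$ of finite sequences; any element of $\cX$ not in $\bigcup_i C_i$ is decoded as the empty history. Set $s_0\coloneqq d_{1,\langle \epsilon\rangle}$. On input $(s_{t-1},x_t)$, the generator proceeds in four steps:
\begin{enumerate}
    \item Decode $s_{t-1}=d_{i,m}$ to recover the history $\sigma$ with $\langle\sigma\rangle=m$.
    \item Form $\sigma'\coloneqq\sigma x_t$.
    \item Compute $j\coloneqq M(\sigma')$.
    \item Output $s_t\coloneqq d_{j,m'}$, where $m'$ is the least position with $m'\geq$ some code of $\sigma'$ such that $d_{j,m'}\notin\sigma'$ and $m'$ decodes to $\sigma'$.
\end{enumerate}
The last step forces us to make the code redundant so that injectivity can coexist with freshness. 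Concretely, the position should encode the pair $(\sigma',r)$ via a pairing function, and we pick the least $r$ with $d_{j,\langle\sigma',r\rangle}\notin\sigma'$. Such an $r$ exists because $C_j$ is infinite while $\sigma'$ is finite. By induction on $t$, decoding $s_{t}$ returns exactly $(x_1,\dots,x_t)$, so the generator faithfully simulates the full-information learner $M$ on every prefix.

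For correctness, fix $K\in\cL$ and an enumeration of it. By construction $s_t\in C_{i_t}\subseteq L_{i_t}$ and $s_t\notin S_t$. The hypothesis only gives $L_{i_t}\preceq_{\rm F}K$ for $t\geq t_0$, which is not enough on its own, since $s_t$ might land in the finite exceptional set $L_{i_t}\setminus K$. This is the main obstacle. I would resolve it with cofinality and monotonicity in the canonical order. Since $\cL$ is finite, let $F\coloneqq\bigcup\inbrace{L_i\setminus K: L_i\preceq_{\rm F}K}$; this set is finite. It then suffices that $s_t$ eventually lies above every element of $F$ in the canonical order. Note that the pairing position $\langle\sigma',r\rangle\geq\langle\sigma'\rangle$, and the codes of the growing prefixes tend to infinity because the coding is injective and there are infinitely many distinct prefixes. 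So I would choose the coding to be monotone under extension (for instance, code $\sigma'$ with value at least its length plus the maximal code of its proper prefixes). Since $d_{j,m}\succeq \bar x_{m}$ can be arranged by the recursive construction in \cref{lem:incremental-cofinal-subsets} (where $c_{i,t}\succ\bar x_t$), the position index growing forces $s_t$ to eventually exceed every element of $F$.

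Combining the pieces, for $t\geq t_0$ large enough we have $i_t$ with $L_{i_t}\preceq_{\rm F}K$, $s_t\in L_{i_t}$, and $s_t\notin F$, hence $s_t\in K$; freshness $s_t\notin S_t$ holds by the choice of $r$. Thus the generator produces $s_t\in K\setminus S_t$ for all sufficiently large $t$, which proves the theorem. The remaining technical work is to verify that the indexing $d_{j,m}\succ\bar x_m$ and the monotone code can be made simultaneously consistent; this is routine bookkeeping.
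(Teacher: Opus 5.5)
Your proposal is correct and is essentially the paper's construction: the prefix is coded into the position inside the disjoint sets $C_i\subseteq L_i$ via a pairing with a free coordinate, $M$ is simulated on the decoded history, and the finite set $\bigcup\{L_i\setminus K : L_i\preceq_{\rm F}K\}$ is avoided because $s_t$ eventually lies beyond it in the canonical order. The paper spends the free coordinate on making $s_t$ lie after both $s_{t-1}$ and $x_t$, which gives freshness and escape at once, whereas you use it directly for freshness; the ``bookkeeping'' you defer is automatic, since the positions of the $s_t$ are pairwise distinct naturals (the prefixes $\sigma_t$ are distinct) and $d_{j,m}\succeq\bar x_m$ holds for any increasing enumeration of an infinite set, so no monotone code is needed.
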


\begin{proof}
{Fix an encoding $\operatorname{code}:\cX^{<\omega}\to \N$ of finite sequences by natural numbers, and a bijection $\langle\cdot,\cdot\rangle:\N\times \N\to \N$ such that, for every fixed $u\in\N$, the set $\inbrace{\langle u,n\rangle:n\in\N}$ is unbounded.} We define an incremental element-based generator whose previous output stores the entire observed prefix.

Initialize with
\[
s_0\coloneqq {d_{1,\langle \operatorname{code}(\epsilon),1\rangle}}\,.
\]
On round $t\geq 1$, suppose the previous output is $s_{t-1}$. {The update rule is defined arbitrarily on previous outputs outside $\bigcup_{i\in[N]}C_i$. Along the intended run, the previous output is always a codeword. If $s_{t-1}=d_{i,m}$, the disjointness of the $C_i$ and the displayed ordering of each $C_i$ determine a unique pair $(i,m)$; the update rule unpairs $m=\langle u,n\rangle$ and decodes $u$ to recover the previously seen prefix}
$\sigma_{t-1}=\inparen{x_1,\dots,x_{t-1}}.$
It then appends the new datum $x_t$ to form $\sigma_t=\inparen{x_1,\dots,x_t}$, runs the full-information learner on this prefix to obtain
$i_t\coloneqq M(\sigma_t),$
and lets $n_t$ be the least $n\in \N$ such that
\[
s_t\coloneqq {d_{i_t,\langle \operatorname{code}(\sigma_t),n\rangle}}
\]
lies after both $s_{t-1}$ and $x_t$ in the canonical ordering. {Such an $n_t$ exists because, for fixed $u=\operatorname{code}(\sigma_t)$, the positions $\inbrace{\langle u,n\rangle:n\in\N}$ are unbounded, so the corresponding subsequence of the increasing enumeration of $C_{i_t}$ is cofinal in $C_{i_t}$, and $C_{i_t}$ is cofinal in $\cX$.}

By construction, $s_t$ again determines $\sigma_t$, so the process can continue inductively. Also, $s_t\succ s_{t-1}$ and $s_t\succ x_t$ for every $t\geq 1$. Hence $\inparen{s_t}_{t\in\N}$ is strictly increasing in the canonical order, and an induction on $t$ shows that
\[
s_t\notin S_t\qquad\text{for every } t\geq 1\,,
\]
where $S_t=\inbrace{x_1,\dots,x_t}$.

Now fix a target language $K\in \cL$ and an enumeration $\inparen{x_t}_{t\in\N}$ of $K$. By hypothesis, there is $t_0$ such that $L_{i_t}\preceq_{\rm F} K$ for all $t\geq t_0$. Since $\cL$ is finite, the set
\[
B\coloneqq \bigcup \inbrace{L_i\setminus K : L_i\preceq_{\rm F} K}
\]
is finite. For every $t\geq t_0$, we have $s_t\in C_{i_t}\subseteq L_{i_t}\subseteq K\cup B$. Because the outputs are strictly increasing in the canonical order, they eventually lie beyond every element of $B$, and in particular there is $t^\ast\geq t_0$ such that $s_t\notin B$ for all $t\geq t^\ast$. For those $t$ we therefore have $s_t\in K$, and since also $s_t\notin S_t$, it follows that
\[
s_t\in K\setminus S_t\qquad\text{for all } t\geq t^\ast\,.
\]
Thus the incremental element-based generator succeeds on $K$. Since $K$ was arbitrary, the theorem follows.
\end{proof}
To summarize, \cref{thm:incremental-coding-compilation} shows that any full-information learner whose hypotheses are eventually almost contained in the target can be compiled into an incremental element-based generator. The compilation works because each output element encodes the full interaction history, allowing the incremental generator to simulate the full-information learner step by step.

\subsection{A Positive Result for Generation}
\label{sec:appendix:incremental-positive-generation}

We now combine the coding compilation of \cref{thm:incremental-coding-compilation} with the approximate-identification theorem from \cref{sec:incremental-bounded-memory} to obtain an incremental element-based generator for finite collections.

This result should be interpreted with caution: it is possible only because element-based generators can encode arbitrarily large amounts of information in their last action. 
Thus, while the result is positive, it does not meaningfully reflect the strength of bounded-memory algorithms, and for this reason we do not explore element-based generators which have access to their last guess further.

\begin{theorem}\label{thm:incremental-element-generation}
Every finite collection of infinite languages admits an incremental element-based generator.
\end{theorem}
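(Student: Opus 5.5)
The plan is to combine \cref{thm:incremental-coding-compilation} with the approximate-identification learner from \cref{thm:incremental-approximate-identification}. Since the coding compilation only needs \emph{some} full-information index-based learner $M$ whose hypotheses are eventually almost contained in the target, it suffices to exhibit such an $M$ for an arbitrary finite collection $\cL=\inbrace{L_1,\dots,L_N}$. We do not need to build a new element-based generator from scratch.

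For $M$ we take the learner obtained by simulating the incremental learner of \cref{thm:incremental-approximate-identification} on the observed prefix. Concretely, relabel $\cL$ by a topological ordering of $\prec_{\rm F}$ as in that proof. Given $\sigma_t=(x_1,\dots,x_t)$, define $M(\sigma_t)\coloneqq i_t$, where $i_0=1$ and $i_s$ is updated by the rule ``stay if $x_s\in L_{i_{s-1}}$, otherwise move to $\min\inbrace{i_{s-1}+1,N}$''. This is a legitimate full-information learner, since it is a deterministic function of the prefix; it merely ignores most of the information available. It outputs indices in $[N]$, and we relabel back to the original indexing if needed.

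Next we verify the hypothesis of \cref{thm:incremental-coding-compilation}. For any target $K\in\cL$ and any enumeration of $K$, \cref{thm:incremental-approximate-identification} provides $t_0$ with $L_{i_t}\sim_{\rm F}K$ for all $t\geq t_0$. In particular $L_{i_t}\preceq_{\rm F}K$, because $\sim_{\rm F}$ includes $\preceq_{\rm F}$ by \cref{def:almost-containment}. This holds for arbitrary, possibly repeating, enumerations, matching the quantifier in \cref{thm:incremental-coding-compilation}: the proof of \cref{thm:incremental-approximate-identification} never used finite repetition. Applying \cref{thm:incremental-coding-compilation} then yields an incremental element-based generator for $\cL$.

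The main point to check carefully is the matching of conventions between the two results, not any new mathematics. The relabeling of $\cL$ by the topological order must be compatible with the sets $C_i$ from \cref{lem:incremental-cofinal-subsets}; we handle this by fixing the relabeled indexing before constructing the $C_i$. We also need the compilation to use exactly the index $M(\sigma_t)$ at each round, with $M$ defined on all finite prefixes including the empty one. Everything else is a direct citation.
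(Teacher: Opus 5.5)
Your proposal is correct and matches the paper's proof: simulate the incremental approximate-identification learner of \cref{thm:incremental-approximate-identification} as a full-information learner, observe that $L_{i_t}\sim_{\rm F}K$ gives $L_{i_t}\preceq_{\rm F}K$, and invoke \cref{thm:incremental-coding-compilation}. Your two extra checks are valid details the paper leaves implicit: that the approximate-identification argument works for arbitrary (possibly repeating) enumerations, and that the index relabeling is handled consistently.
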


\begin{proof}
By \cref{thm:incremental-approximate-identification}, every finite collection of infinite languages admits an incremental learner whose hypotheses eventually differ from the target language on only finitely many elements. {In particular, writing $i_t$ for the learner's output, eventually $L_{i_t}\sim_{\rm F}K$, and hence $L_{i_t}\preceq_{\rm F}K$, which is the one-sided hypothesis needed in \cref{thm:incremental-coding-compilation}.} Since any incremental learner is also a full-information learner, \cref{thm:incremental-coding-compilation} applies and yields an incremental element-based generator for the same collection.
\end{proof}

\end{document}